\def\A{\mathcal A}
\def\B{\mathscr B}
\def\C{\mathbb C}
\def\d{\mathrm{d}}
\def\D{\mathscr D}
\def\dom{\mathcal D}
\def\F{\mathscr F}
\def\G{\mathcal G}
\def\H{\mathcal H}
\def\ltwo{\mathop{\mathrm{L}^2}\nolimits}
\def\ltwoloc{\mathop{\mathrm{L}^2_{\rm loc}}\nolimits}
\def\lone{\mathop{\mathrm{L}^1}\nolimits}
\def\linf{\mathop{\mathrm{L}^\infty}\nolimits}
\def\N{\mathbb N}
\def\R{\mathbb R}
\def\S{\mathbb S}
\def\SS{\mathscr S}
\def\U{\mathscr U}
\def\Z{\mathbb Z}
\def\e{\mathop{\mathrm{e}}\nolimits}
\DeclareMathOperator*{\slim}{s\hspace{0.1pt}-\hspace{0.1pt}lim}
\DeclareMathOperator*{\ulim}{u\hspace{0.1pt}-\hspace{0.1pt}lim}
\newtheorem{Theorem}{Theorem}[section]
\newtheorem{Remark}[Theorem]{Remark}
\newtheorem{Lemma}[Theorem]{Lemma}
\newtheorem{Proposition}[Theorem]{Proposition}
\newtheorem{Definition}[Theorem]{Definition}
\newtheorem{Assumption}[Theorem]{Assumption}
\newtheorem{Example}[Theorem]{Example}
\begin{document}

%--------------------------------------------------------------------------------------
% Title
%--------------------------------------------------------------------------------------

\title{Quantum time delay for unitary operators: general theory}

\author{D. Sambou\footnote{Supported by the Chilean Fondecyt Grant 3170411.}~~and R.
Tiedra de Aldecoa\footnote{Supported by the Chilean Fondecyt Grant 1170008.}}

\date{\small}
\maketitle
\vspace{-1cm}

\begin{quote}
\emph{
\begin{itemize}
\item[] Facultad de Matem\'aticas, Pontificia Universidad Cat\'olica de Chile,\\
Av. Vicu\~na Mackenna 4860, Santiago, Chile
\item[] \emph{E-mails:} disambou@mat.puc.cl, rtiedra@mat.puc.cl
\end{itemize}
}
\end{quote}

%--------------------------------------------------------------------------------------

\begin{abstract}
We present a suitable framework for the definition of quantum time delay in terms of
sojourn times for unitary operators in a two-Hilbert spaces setting. We prove that
this time delay defined in terms of sojourn times (time-dependent definition) exists
and coincides with the expectation value of a unitary analogue of the Eisenbud-Wigner
time delay operator (time-independent definition). Our proofs rely on a new summation
formula relating localisation operators to time operators and on various tools from
functional analysis such as Mackey's imprimititvity theorem, Trotter-Kato Formula and
commutator methods for unitary operators. Our approach is general and
model-independent.
\end{abstract}

\textbf{2010 Mathematics Subject Classification:} 46N50, 47A40, 81Q10, 81Q12.

\smallskip

\textbf{Keywords:} Quantum mechanics, scattering theory, quantum time delay, unitary
operators.

%--------------------------------------------------------------------------------------
\tableofcontents
%--------------------------------------------------------------------------------------

%--------------------------------------------------------------------------------------
\section{Introduction and main results}\label{section_intro}
\setcounter{equation}{0}
%--------------------------------------------------------------------------------------

One can find a large literature on the notion of quantum time delay in the setup of
scattering theory for self-adjoint operators (see for instance
\cite{AC87,ACS87,AS06,BGG83,BO79,JSM72,Jen81,Mar81,Nar84,NW16,RT12,Rob94,RW89,Tie09,
Wan88}). But as far as we know, there is no mathematical work on quantum time delay in
the setup of scattering theory for unitary operators. The purpose of this paper is to
fill in this gap by developing a general theory of quantum time delay for unitary
operators in a two-Hilbert spaces setting. Namely, we present a suitable framework
for the definition of time delay in terms of sojourn times in dilated regions for
quantum scattering systems consisting in two unitary operators acting in two (a
priori different) Hilbert spaces. Then, we prove under appropriate conditions the
existence of this time delay and its equality with the expectation value of a unitary
analogue of the Eisenbud-Wigner time delay operator appearing in the self-adjoint
theory. This establishes in a general unitary setup, as was established before in a
general self-adjoint setup \cite{RT12}, the identity between the time-dependent
definition of time delay (with sojourn times) and the time-independent definition of
time delay (with expectation value).

Our framework is the following. Assume that we have a scattering system $(U_0,U,J)$
consisting in a unitary operator $U$ in a Hilbert space $\H$ (a full propagator), a
unitary operator $U_0$ in a Hilbert space $\H_0$ (a free propagator), and a bounded
operator $J:\H_0\to\H$ (an identification operator) such that the wave operators
$$
W_\pm:=\slim_{n\to\pm\infty}U^{-n}J\;\!U_0^nP_{\rm ac}(U_0)
$$
exist and are partial isometries with initial subspaces $\H_0^\pm\subset\H_0$ and
final subspaces $\H_{\rm ac}(U)$. Then, the scattering operator
$$
S:=W_+^*W_-:\H_0^-\to\H_0^+
$$
is a well-defined unitary operator which commutes with $U_0$ and decomposes into a
family of unitary operators $\{S(z)\}_{z\in\sigma(U_0)}$ in the spectral
representation of $U_0$. Assume also that we have a family of mutually commuting
self-adjoint operators $Q:=(Q_1,\ldots,Q_d)$ in $\H_0$ (position operators) satisfying
appropriate commutations relations with respect to $U_0$. Finally, assume that we have
a non-negative even Schwarz function $f\in\SS(\R^d)$ equal to $1$ on a neighbourhood
of $0\in\R^d$ (a localisation function). Then, for any fixed $r>0$ and suitable
$\varphi\in\H_{\rm ac}(U_0)$, we define the sojourn time of the freely evolving state
$U_0^n\varphi$ in the region defined by the localisation operator $f(Q/r)$ as
$$
T_r^0(\varphi)
:=\sum_{n\in\Z}\big\langle U_0^n\varphi,f(Q/r)U_0^n\varphi\big\rangle_{\H_0}.
$$
Now, the
evolution group $\{U^n\}_{n\in\Z}$ acts in $\H$ whereas $f(Q/r)$ acts in $\H_0$.
Therefore, in order to define a sojourn time for the corresponding fully evolving
state $U^nW_-\varphi$, one needs to introduce a family of operators $L_n:\H\to\H_0$ to
inject the operator $f(Q/r)$ in the Hilbert space $\H$ (the operators $L_n$ must
satisfy some natural conditions which in simple cases are verified if $L_n=J^*$ for
all $n\in\Z$). We then define the sojourn time of the fully evolving state
$U^nW_-\varphi$ in the region defined by the localisation operator $f(Q/r)$ as
$$
T_{r,1}(\varphi):=\sum_{n\in\Z}\big\langle L_nU^nW_-\varphi,
f(Q/r)L_nU^nW_-\varphi\big\rangle_{\H_0}.
$$
An additional sojourn time appears naturally in this two-Hilbert spaces setting: the
time spent by the fully evolving state $U^nW_-\varphi$ inside the time-dependent
subset $(1-L_n^*L_n)\H$ of $\H$
$$
T_2(\varphi)=\sum_{n\in\Z}
\big\langle U^nW_-\varphi,\big(1-L_n^*L_n\big)U^nW_-\varphi\big\rangle_{\H_0}.
$$
The symmetrised time delay $\tau_r^{\rm sym}(\varphi)$ for the scattering system
$(U_0,U,J)$ is then defined as the difference between the sojourn times for the fully
evolving state $U^nW_-\varphi$ and the sojourn times for the freely evolving state
$U_0^n\varphi$ before and after scattering
$$
\tau_r^{\rm sym}(\varphi)
:=\big(T_{r,1}(\varphi)+T_2(\varphi)\big)
-\tfrac12\big(T_r^0(\varphi)+T_r^0(S\varphi)\big).
$$
Our main result, properly stated in Theorem \ref{thm_sym} and Remark \ref{rem_sym}, is
the proof of the existence of the limit $\lim_{r\to\infty}\tau_r^{\rm sym}(\varphi)$
and its identity with the expectation value of a unitary analogue of the
Eisenbud-Wigner time delay operator, namely,
\begin{equation}\label{eq_sym_intro}
\lim_{r\to\infty}\tau_r^{\rm sym}(\varphi)
=\left\langle\varphi,-S^*U_0\;\!\frac{\d S}{\d U_0}\;\!\varphi\right\rangle_{\H_0}.
\end{equation}
To our knowledge, this formula is completely new. It establishes in the unitary setup
the identity between the time-dependent definition of time delay and the
time-independent definition of time delay. Under the additional assumption that the
scattering system is elastic in some appropriate sense (namely, that the operator $S$
commutes with some function of a velocity operator $V$ defined in terms of $U_0$ and
$Q$), we show in Theorem \ref{thm_non_sym} that the simpler, non-symmetrised time delay
$$
\tau_r^{\rm nsym}(\varphi)
:=\big(T_{r,1}(\varphi)+T_2(\varphi)\big)-T_r^0(\varphi)
$$
also exists in the limit $r\to\infty$ and satisfies
\begin{equation}\label{eq_non_sym_intro}
\lim_{r\to\infty}\tau_r^{\rm nsym}(\varphi)
=\lim_{r\to\infty}\tau_r^{\rm sym}(\varphi)
=\left\langle\varphi,-S^*U_0\;\!\frac{\d S}{\d U_0}\;\!\varphi\right\rangle_{\H_0}.
\end{equation}

We give now a more detailed description of the paper. In Section \ref{section_free},
we introduce the free propagator $U_0$ and the family of position operators
$Q=(Q_1,\ldots,Q_d)$, and we present the general conditions of regularity and
commutation that we impose on $U_0$ and $Q$ (Assumptions \ref{ass_regularity} and
\ref{ass_commute}). These conditions, formulated in terms of the family of unitary
operators
$$
U_0(x):=\e^{-ix\cdot Q}U_0\e^{ix\cdot Q},\quad x\in\R^d,
$$
imply the existence of a family of mutually commuting self-adjoint operators
$V=(V_1,\ldots,V_d)$ with $V_j$ given by
$$
V_j:=\hbox{s\hspace{1pt}-}\,\frac\d{\d x_j}\bigg|_{x=0}iU_0(x)U_0^{-1}
$$
on some appropriate core. The operators $V_j$ can be interpreted as the components of
the velocity vector associated to $U_0$ and $Q_j$. Accordingly, the set $\kappa(U_0)$
of values in the spectrum $\sigma(U_0)$ where $V=0$ (precisely defined in Definition
\ref{def_kappa}) plays an important role and is called the set of critical values of
$U_0$. It is a unitary analogue of the set of critial values introduced in
\cite[Def.~2.5]{RT12_0} in the self-adjoint setup.

In Section \ref{section_smooth}, we use commutator methods for unitary operators
\cite{FRT13} to construct a conjugate operator for $U_0$ and to prove a Mourre
estimate for $U_0$ on the set $\sigma(U_0)\setminus\kappa(U_0)$ (Lemmas \ref{lemma_A}
and \ref{lemma_Mourre}). As a consequence, we obtain in Theorem \ref{thm_spectrum} a
class locally $U_0$-smooth operators on $\S^1\setminus\kappa(U_0)$ and we show that
the operator $U_0$ has purely absolutely continuous in
$\sigma(U_0)\setminus\kappa(U_0)$. To illustrate these results, we present in Examples
\ref{ex_V_constant} and \ref{ex_Laplacian} the cases where the velocity vector $V$ is
constant and $U_0$ is the time-one propagator for the Laplacian in $\R^d$.

In Section \ref{section_summation}, we prove a formula which relates the evolution of
the localisation operator $f(Q)$ under $U_0$ to a time operator $T_f$. First, we
recall in Section \ref{section_averaged} the properties of averaged localisation
functions $R_f:\R^d\setminus\{0\}\to\C$ which appears naturally when dealing with
quantum time delay. Then, in Section \ref{section_proof}, we prove for appropriate
vectors $\varphi\in\H_{\rm ac}(U_0)$ the summation formula
\begin{equation}\label{eq_formula_intro}
\lim_{r\to\infty}\tfrac12\sum_{n\ge0}\big\langle\varphi,\big(U_0^{-n}f(Q/r)U_0^n
-U_0^nf(Q/r)U_0^{-n}\big)\varphi\big\rangle_{\H_0}
=\big\langle\varphi,T_f\;\!\varphi\big\rangle_{\H_0}.
\end{equation}
The proof, given in Theorem \ref{thm_summation}, is not trivial: It involves
commutator methods for families of self-adjoint and unitary operators, the class of
locally $U_0$-smooth operators obtained in Section \ref{section_smooth}, operator
identities following from the Trotter-Kato formula (Lemma \ref{lemma_trotter}) and a
repeated use of Lebesgue's dominated convergence theorem. The operator $T_f$ is
similar to an operator appearing in the self-adjoint setup \cite[Prop.~5.2]{RT12_0}.
Its precise definition is given in terms of a quadratic form (see Proposition
\ref{prop_T_f}), but formally
$$
T_f=\tfrac12\big(Q\cdot(\nabla R_f)(V)+(\nabla R_f)(V)\cdot Q\big).
$$
In Section \ref{section_interpretation}, Lemma \ref{lemma_canonical}, we show that the
operators $T_f$ and $U_0$ satisfy under appropriate conditions the relation
$$
U_0^{-1}\big[T_f,U_0\big]=-1,
$$
which is the unitary analogue of the canonical time-energy commutation relation of the
self-adjoint setup. Therefore, the operator $T_f$ can be interpreted as a time
operator for $U_0$, and $T_f$ is equal in some suitable sense to the operator
$-U_0\;\!\frac\d{\d U_0}$. Indeed, by applying Mackey's imprimitivity theorem
\cite{Ors79}, we are able to show that $T_f$ acts as the differential operator
$-z\;\!\frac\d{\d z}$ ($z\in\S^1$) in a Hilbert space isomorphic to $\H_0$ (see Remark
\ref{rem_interpretation}). In consequence, the formula \eqref{eq_formula_intro} can be
seen as an equality between on the l.h.s. the difference of times spent by the
evolving state $U_0^n\varphi$ in the future (first term) and in the past (second term)
within dilated regions defined by the localisation operators $f(Q/r)$ and on the
r.h.s. the expectation value in $\varphi$ of the time operator $T_f$. At the end of
Section \ref{section_interpretation}, we illustrate these results once again with the
cases where the velocity vector $V$ is constant and $U_0$ is the time-one propagator
for the Laplacian in $\R^d$ (see Examples \ref{V_constant_continued} and
\ref{Laplacian_continued}).

In Section \ref{section_symmetrised}, we prove \eqref{eq_sym_intro}, that is, the
existence of the symmetrised time delay $\tau_r^{\rm sym}(\varphi)$ in the limit
$r\to\infty$ and its identity with the expectation value of the unitary analogue of
the Eisenbud-Wigner time delay operator (see Theorem \ref{thm_sym} and Remark
\ref{rem_sym}). The main ingredient of the proof is the summation formula
\eqref{eq_formula_intro}. In Section \ref{section_non}, under the additional
assumption that the scattering operator $S$ commutes with some appropriate function of
the velocity operator $V$, we prove that the simpler, non-symmetrised time delay
$\tau_r^{\rm nsym}(\varphi)$ also exists in the limit $r\to\infty$ and satisfies the
same identity (that is, \eqref{eq_non_sym_intro}, see Theorem \ref{thm_non_sym}).

Before concluding, we would like to emphasize that our results here in the unitary
setup  are not a mere consequence of the corresponding results \cite{RT12_0,RT12} in
the self-adjoint setup. The standard tools allowing one to go from unitary operators
to self-adjoint operators (such as the Cayley transform, functional calculus or
operator logarithms) are not suited for the problem of quantum time delay. Even more,
various proofs in the present paper turn out to be more subtle than the corresponding
proofs in the self-adjoint setup. Some of the reasons explaining this fact are the
following:
\begin{enumerate}
\item[$\bullet$] In the unitary setup, the sojourn times are defined as sums over a
discrete time $n\in\Z$, whereas in the self-adjoint setup the sojourn times are
defined as integrals over a continuous time $t\in\R$. So, in order to obtain results
in the unitary setup, one has to evaluate infinite sums, which in general is more
challenging than to evaluate improper integrals. In particular, the proof of the
summation formula \eqref{eq_formula_intro} is more technical than the proof of the
corresponding integral formula of the self-adjoint setup \cite[Thm.~5.5]{RT12_0}, and
the proof of the existence of the non-symmetrised time delay \eqref{eq_non_sym_intro}
relies on a preliminary result based on the Poisson summmation formula (Lemma
\ref{lemma_poisson}) not needed in the self-adjoint setup.
\item[$\bullet$] The unitary operators $U_0$ and $U$ that we consider are completely
general. In particular, there are not supposed to be time-one propagators of some
self-adjoint operators $H_0$ and $H$. In consequence, one cannot apply all the
technics coming from the self-adjoint theory. Moreover, one does not have at disposal
a predefined dense set $\D\subset\H_0$ (such as the domain of $H_0$) where to perform
the necessary the calculations for the free theory. Instead, one has to come up with
an assumption on $U_0$ specific enough to put into evidence a dense set of
$\D\subset\H_0$ appropriate for the calculations, but general enough not to
oversimplify the theory. Assumption \ref{ass_regularity} fullfils these requirements.
\item[$\bullet$] In the self-adjoint setup, the unitary groups generated by the free
Hamiltonian and the time operator satisfy in favorable situations the Weyl relation.
Thus, one can apply Stone-von Neumann theorem to conclude that the time operator acts
as the energy derivative in the spectral representation of the free Hamiltonian (see
\cite[Sec.~6]{RT12_0}). In the unitary setup, the unitary groups generated by the free
propagator and the time operator satisfy at best only an imprimitivity relation. Thus,
one has to apply Mackey's imprimitivity theorem, which is more complex than Stone-von
Neumann theorem, to conclude that the time operator acts as the energy derivative in
the spectral representation of the free propagator (see Section
\ref{section_interpretation}).
\end{enumerate}

To conclude, we point out that the theory presented here is general, adapted to cover
a variety of unitary scattering systems, both in the one and two-Hilbert spaces
setting. Therefore, we plan in the future to apply it to various unitary scattering
systems, as for instance anisotropic quantum walks as presented in
\cite{RST_1,RST_2}.\\

\noindent
{\bf Acknowledgements.} The authors express their gratitude to the referees who
dedicated time to review this manuscript and suggested various improvements to
the text. In particular, we are really grateful to one referee who pointed to us possible
alternative proofs for some of our results.

%--------------------------------------------------------------------------------------
\section{Free propagator and position operators}\label{section_free}
\setcounter{equation}{0}
%--------------------------------------------------------------------------------------

In this section, we recall needed facts on commutators methods, we introduce our
assumptions on the free propagator and the position operators, and we describe a set
of critical values of the free propagator which appears naturally under our
assumptions.

We start by recalling some facts on commutators methods borrowed from
\cite{ABG96,GG99}. Let $H$ be a self-adjoint operator with domain $\dom(H)$ and
spectrum $\sigma(H)$ in a Hilbert space $\H_0$, and let $A$ be a second self-adjoint
operator with domain $\dom(A)$ in $\H_0$. We say that $H$ is of class $C^k(A)$ with
$k\in\N$ if for some $\omega\in\C\setminus\sigma(H)$ the map
\begin{equation}\label{condition_Ck_A}
\R\ni t\mapsto\e^{-itA}(H-\omega)^{-1}\e^{itA}\in\B(\H_0)
\end{equation}
is strongly of class $C^k$. In the case $k=1$, the quadratic form
$$
\dom(A)\ni\varphi\mapsto
\big\langle\varphi,(H-\omega)^{-1}A\;\!\varphi\big\rangle_{\H_0}
-\big\langle A\;\!\varphi,(H-\omega)^{-1}\varphi\big\rangle_{\H_0}\in\C
$$
extends continuously to a bounded operator denoted by $\big[(H-\omega)^{-1},A\big]$.
Furthermore, the set $\dom(H)\cap\dom(A)$ is a core for $H$ and the quadratic form
$$
\dom(H)\cap\dom(A)\ni\varphi\mapsto
\big\langle H\varphi,A\;\!\varphi\big\rangle_{\H_0}
-\big\langle A\;\!\varphi,H\varphi\big\rangle_{\H_0}\in\C
$$
is continuous in the topology of $\dom(H)$. Thus, it extends uniquely to a continuous
quadratic form $[H,A]$ on $\dom(H)$ which can be identified with a continuous operator
from $\dom(H)$ to the adjoint space $\dom(H)^*$, and the following equality holds:
\begin{equation}\label{eq_resolvent}
\big[(H-\omega)^{-1},A\big]=-(H-\omega)^{-1}[H,A](H-\omega)^{-1}.
\end{equation}
In \cite[Lemma~2]{GG99}, it has been shown that if $[H,A]\dom(H)\subset\H$, then
$\e^{itA}\dom(H)\subset\dom(H)$ for each $t\in\R$. Accordingly, we say in the sequel
that $i[H,A]$ is essentially self-adjoint on $\dom(H)$ if $[H,A]\;\!\dom(H)\subset\H$
and if $i[H,A]$ is essentially self-adjoint on $\dom(H)$ in the usual sense.

Now, let $Q:=(Q_1,\ldots,Q_d)$ be a family of mutually strongly commuting self-adjoint
operators in $\H_0$ (the position operators). Then, any function $f\in\linf(\R^d)$
defines by $d$-variables functional calculus a bounded operator $f(Q)$ in $\H_0$. In
particular, the operator $\e^{ix\cdot Q}$, with $x\cdot Q:=\sum_{j=1}^dx_jQ_j$, is
unitary in $\H_0$ for each $x\in\R^d$. In this context, we say that $H$ is of class
$C^k(Q)$ with $k\in\N$ if for some $\omega\in\C\setminus\sigma(H)$ the map
\begin{equation}\label{condition_Ck_Q}
\R^d\ni x\mapsto\e^{-ix\cdot Q}(H-\omega)^{-1}\e^{ix\cdot Q}\in\B(\H_0)
\end{equation}
is strongly of class $C^k$. Clearly, if $H$ is of class $C^k(Q)$, then $H$ is of class
$C^k(Q_j)$ for each $j$.

\begin{Remark}
The definitions are similar if we consider a bounded operator $B\in\B(\H_0)$ instead
of a self-adjoint operator $H$. In such a case, we use the notation $B\in C^k(A)$ if
the map \eqref{condition_Ck_A}, with $(H-\omega)^{-1}$ replaced by $B$, is strongly of
class $C^k$, and we use the notation $B\in C^k(Q)$ if the map \eqref{condition_Ck_Q},
with $(H-\omega)^{-1}$ replaced by $B$, is strongly of class $C^k$.
\end{Remark}

In the sequel, we assume the existence of a unitary operator $U_0\in\B(\H_0)$ (the
free propagator) with associated family of unitary operators
$$
U_0(x):=\e^{-ix\cdot Q}U_0\e^{ix\cdot Q},\quad x\in\R^d,
$$
regular with respect to $Q$ in the following sense:

\begin{Assumption}[Regularity]\label{ass_regularity}
The map
$$
\R^d\ni x\mapsto U_0(x)U_0^{-1}\in\B(\H_0)
$$
is strongly differentiable on a core $\D\subset\H_0$ of the operator $Q^2$, and for
each $j\in\{1,\ldots,d\}$ the operator
$$
V_j\;\!\varphi:=\emph{s\hspace{1pt}-}\,\frac\d{\d x_j}\bigg|_{x=0}
iU_0(x)U_0^{-1}\varphi,\quad\varphi\in\D,
$$
is essentially self-adjoint, with self-adjoint extension denoted by the same symbol.
The operator $V_j$ is of class $C^1(Q)$, and for each $k\in\{1,\ldots,d\}$,
$i\big[V_j,Q_k\big]$ is essentially self-adjoint on $\dom(V_j)$, with self-adjoint
extension denoted by $V'_{jk}$. The operator $V'_{jk}$ is of class $C^1(Q)$, and for
each $\ell\in\{1,\ldots,d\}$, $i\big[V'_{jk},Q_\ell\big]$ is essentially self-adjoint
on $\dom\big(V'_{jk}\big)$, with self-adjoint extension denoted by $V''_{jk\ell}$.
\end{Assumption}

Assumption \ref{ass_regularity} implies that the set $\D$ is a core for all the
operators $V_j$, $V'_{jk}$, and $V''_{jk\ell}$. Assumption \ref{ass_regularity} also
implies the invariance of the domains $\dom(V_j)$ under the action of the unitary
group $\{\e^{ix\cdot Q}\}_{x\in\R^d}$. Indeed, the condition
$\big[V_j,Q_k\big]\dom(V_j)\subset\H$ and \cite[Lemma~2]{GG99} imply that
$\e^{itQ_k}\dom(V_j)\subset\dom(V_j)$ for each $t\in\R$. Thus
$\e^{itQ_k}\dom(V_j)=\dom(V_j)$ for each $t\in\R$, and since this holds for each
$k\in\{1,\ldots,d\}$ one obtains that $\e^{ix\cdot Q}\dom(V_j)=\dom(V_j)$ for each
$x\in\R^d$. As a consequence, the operators
$$
V_j(x):=\e^{-ix\cdot Q}V_j\e^{ix\cdot Q},\quad x\in\R^d,
$$
are self-adjoint operators with domains $\dom\big(V_j(x)\big)=\dom(V_j)$. Similarly,
the domains $\dom\big(V'_{jk}\big)$ are left invariant by the unitary group
$\{\e^{ix\cdot Q}\}_{x\in\R^d}$, and the operators
$$
V'_{jk}(x):=\e^{-ix\cdot Q}V'_{jk}\e^{ix\cdot Q},\quad x\in\R^d,
$$
are self-adjoint operators with domains
$\dom\big(V'_{jk}(x)\big)=\dom\big(V'_{jk}\big)$.

\begin{Remark}
The operators $V_j$ and $V'_{jk}$ can be interpreted as the components of the velocity
vector and the acceleration matrix associated to the propagator $U_0$ and the position
operators $Q_j$.
\end{Remark}

Our second main assumption on the operators $U_0(x)$ is a commutation assumption:

\begin{Assumption}[Commutation]\label{ass_commute}
$\big[U_0(x),U_0(y)\big]=0$ for all $x,y\in\R^d$.
\end{Assumption}

Assumption \ref{ass_commute} implies that the operators $U_0(x)$ mutually commute in
the strong sense, namely, if $E^{U_0(x)}$ denotes the spectral measure of $U_0(x)$ on
the complex unit circle $\S^1:=\{\e^{i\theta}\mid\theta\in[0,2\pi)\}$, then
$$
\big[E^{U_0(x)}(\Theta),E^{U_0(y)}(\Theta')\big]=0
$$
for all $x,y\in\R^d$ and all Borel sets $\Theta,\Theta'\subset\S^1$ (see
\cite[Prop.~5.27]{Sch12}).

Additional commutation relations are obtained in the following lemma:

\begin{Lemma}\label{lemma_commute}
Let Assumptions \ref{ass_regularity} and \ref{ass_commute} be satisfied. Then, the
operators $U_0(x)$, $\big(V_j(y)+i\big)^{-1}$, $\big(V'_{k\ell}(z)+i\big)^{-1}$
mutually commute for all $x,y,z\in\R^d$ and all $j,k,\ell\in\{1,\ldots,d\}$.
\end{Lemma}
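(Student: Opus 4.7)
The plan is to reduce all the claimed commutation relations, via conjugation by $\e^{ix\cdot Q}$, to a short list of base cases. The identities $\e^{iy\cdot Q}U_0(x)\e^{-iy\cdot Q} = U_0(x-y)$, $V_j(y) = \e^{-iy\cdot Q}V_j\e^{iy\cdot Q}$, and $V'_{k\ell}(z) = \e^{-iz\cdot Q}V'_{k\ell}\e^{iz\cdot Q}$ let me move the conjugations around. For instance,
$$
\big[U_0(x),\,(V_j(y)+i)^{-1}\big]
= \e^{-iy\cdot Q}\,\big[U_0(x-y),\,(V_j+i)^{-1}\big]\,\e^{iy\cdot Q},
$$
and every other pair of operators in the statement admits an analogous rewriting. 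It therefore suffices to establish the strong (resolvent) commutations: $U_0$ with each $V_j$, $U_0$ with each $V'_{jk}$, and the $V_j$'s and $V'_{k\ell}$'s pairwise among themselves.

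For $U_0$ with $V_j$, the starting point is Assumption \ref{ass_commute}: $U_0(x)$ commutes with $U_0(te_j)$ and with $U_0 = U_0(0)$, hence with the bounded operator $B_t := t^{-1}i\big(U_0(te_j)U_0^{-1} - \mathbb{1}\big)$ for every $t\neq 0$. On the core $\D$ of Assumption \ref{ass_regularity}, $B_t\varphi \to V_j\varphi$ as $t\to 0$, so
$$
B_t\,U_0(x)\varphi = U_0(x)\,B_t\varphi \xrightarrow[t\to 0]{} U_0(x)\,V_j\varphi
\qquad (\varphi\in\D).
$$
Upgrading this pointwise convergence to the statement that $U_0(x)\varphi \in \dom(V_j)$ with $V_jU_0(x)\varphi = U_0(x)V_j\varphi$ relies on the essential self-adjointness of $V_j|_\D$ together with a mollification inside $\D$: because $\D$ is a core of $Q^2$, one approximates $U_0(x)\varphi$ by vectors of the form $\chi_n(Q^2)U_0(x)\varphi$ with $\chi_n\in C^\infty_c(\R)$ converging to $1$, verifies the commutation on each approximant directly, and passes to the limit using closedness of $V_j$. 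A standard density/graph-closure argument then extends the intertwining to all of $\dom(V_j)$, which is the desired strong commutation.

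The other base cases are obtained by iterating the same scheme. To get $[U_0(x), V'_{jk}] = 0$ strongly, apply the argument to the difference quotient defining $V'_{jk}$ from $V_j$; the fact that $U_0(x)$ commutes with $\e^{-ise_k\cdot Q}V_j\e^{ise_k\cdot Q}$ for every $s\in\R$ (which follows from the first-order case just treated together with $\e^{ise_k\cdot Q}U_0(x)\e^{-ise_k\cdot Q} = U_0(x + se_k)$) provides the input, and the closure argument again delivers the limit. The pairwise strong commutations of the $V_j$'s and $V'_{k\ell}$'s follow analogously, by differentiating the identity $\big[U_0(x)U_0^{-1},\,U_0(y)U_0^{-1}\big] = 0$ (an immediate consequence of Assumption \ref{ass_commute}) in two parameters and invoking the same mollification/closure scheme at each level. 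The main obstacle throughout is that single closure step -- passing from existence of the limit of a difference quotient on a vector to membership of that vector in the right self-adjoint domain -- and it is handled uniformly by combining essential self-adjointness from Assumption \ref{ass_regularity} with the mollification afforded by $\D$ being a core of $Q^2$.
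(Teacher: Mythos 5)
Your overall architecture is right and matches the paper's: reduce by conjugation with $\e^{iy\cdot Q}$ to a short list of base cases, use Assumption~\ref{ass_commute} to make $U_0(x)$ commute with the bounded difference quotient $B_t=\tfrac{i}{t}\big(U_0(te_j)U_0^{-1}-1\big)$, and iterate the scheme to reach $V'_{jk}$. The gap is in the single step you yourself flag as the ``main obstacle'': passing from the convergence of $B_tU_0(x)\varphi=U_0(x)B_t\varphi\to U_0(x)V_j\varphi$ on $\D$ to the conclusion that $U_0(x)\varphi\in\dom(V_j)$ with $V_jU_0(x)\varphi=U_0(x)V_j\varphi$.

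The mollification you propose does not do the job. First, $\chi_n(Q^2)U_0(x)\varphi$ has no reason to lie in $\D$: Assumption~\ref{ass_regularity} only says $\D$ is \emph{some} core for $Q^2$, not $\dom(Q^2)$ itself, and functional calculus of $Q^2$ does not preserve an abstract core. Second, even if one replaced $\D$ by $\dom(Q^2)$, ``verifying the commutation on each approximant directly and passing to the limit by closedness of $V_j$'' requires graph-norm convergence $\chi_n(Q^2)U_0(x)\varphi\to U_0(x)\varphi$ in $\dom(V_j)$, which is exactly the membership statement one is trying to establish; and one would additionally need to control $[B_t,\chi_n(Q^2)]$, which the argument never addresses. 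As written, the step is circular.

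The paper avoids the difficulty entirely by never trying to show $U_0(x)\D\subset\dom(V_j)$. It works at the level of bounded quadratic forms: for $\varphi,\psi\in\H_0$ one takes $\D$-sequences converging to $\big(R^{V_j}\big)^*\varphi$ and $R^{V_j}\psi$ in $\dom(V_j)$-graph norm (possible because $\D$ is a core for $V_j$), rewrites $\big\langle\varphi_m,U_0(x)V_j\psi_n\big\rangle$ as a $y$-derivative of $\big\langle(iU_0(y)U_0^{-1})^*\varphi_m,U_0(x)\psi_n\big\rangle$ using the boundedness and commutation of $U_0(y)U_0^{-1}$ with $U_0(x)$, and lands on $\big[U_0(x),R^{V_j}\big]=-R^{V_j}\big[U_0(x),V_j\big]R^{V_j}=0$ by the resolvent identity \eqref{eq_resolvent}. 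If you want to keep your difference-quotient picture, the fix closest to your draft is not a $Q^2$-mollification but a duality argument: pair against $\psi\in\D$, use that $B_t^*\psi\to V_j\psi$ (the adjoint difference quotient also converges on $\D$, as in the paper's computation), obtain $\langle V_j\psi,U_0(x)\varphi\rangle=\langle\psi,U_0(x)V_j\varphi\rangle$ for all $\psi\in\D$, extend to all $\psi\in\dom(V_j)$ by the core property, and conclude $U_0(x)\varphi\in\dom(V_j^*)=\dom(V_j)$ by self-adjointness. That replaces your broken closure step and makes the rest of the iterative scheme sound; the same repair is needed at the $V'_{jk}$ level.
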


\begin{proof}
Let $j,k,\ell,m\in\{1,\ldots,d\}$, $x,y\in\R^d$, and set
$$
R^{V_j(x)}:=\big(V_j(x)+i\big)^{-1}
\quad\hbox{and}\quad
R^{V_{k\ell}'(x)}:=\big(V_{k\ell}'(x)+i\big)^{-1}.
$$
For any $\varphi,\psi\in\H_0$, there exist
$(\varphi_n)_{n\in\N},(\psi_n)_{n\in\N}\subset\D$ such that
$$
\lim_{n\to\infty}\big\|\varphi_n-\big(R^{V_j}\big)^*\varphi\big\|_{\dom(V_j)}=0
\quad\hbox{and}\quad
\lim_{n\to\infty}\big\|\psi_n-R^{V_j}\psi\big\|_{\dom(V_j)}=0.
$$
This, together with the facts that $\big[U_0(x),iU_0(y)U_0^{-1}\big]=0$ and
$
\hbox{s\hspace{1pt}-}\,\frac\d{\d y_j}\big|_{y=0}\big(iU_0(y)U_0^{-1}\big)^*\varphi_m
=V_j\varphi_m
$,
implies that
\begin{align}
\left\langle\varphi,R^{V_j}U_0(x)\;\!V_jR^{V_j}\psi\right\rangle_{\H_0}
&=\lim_{m\to\infty}\lim_{n\to\infty}
\big\langle\varphi_m,U_0(x)\;\!V_j\;\!\psi_n\big\rangle_{\H_0}\nonumber\\
&=\lim_{m\to\infty}\lim_{n\to\infty}\frac\d{\d y_j}\bigg|_{y=0}
\left\langle\big(iU_0(y)U_0^{-1}\big)^*\varphi_m,U_0(x)\psi_n
\right\rangle_{\H_0}\nonumber\\
&=\lim_{m\to\infty}\lim_{n\to\infty}
\big\langle V_j\;\!\varphi_m,U_0(x)\psi_n\big\rangle_{\H_0}\nonumber\\
&=\left\langle\varphi,R^{V_j}V_j\;\!U_0(x)R^{V_j}
\psi\right\rangle_{\H_0}.\label{eq_calculation}
\end{align}
Since $\varphi,\psi$ are arbitrary, this implies that
$$
\big[U_0(x),R^{V_j}\big]=-R^{V_j}\big[U_0(x),V_j\big]R^{V_j}=0.
$$
Therefore, we obtain that
$$
\big[U_0(x),R^{V_j(y)}\big]
=\e^{-iy\cdot Q}\big[U_0(x-y),R^{V_j}\big]\e^{iy\cdot Q}
=0
$$
and thus the operators $U_0(x)$ and $R^{V_j(y)}$ commute.

A calculation as in \eqref{eq_calculation} using the commutation of $U_0(x)U_0^{-1}$
and $R^{V_j(y)}$ implies that
$$
\left\langle\varphi,R^{V_j}V_jR^{V_k(y)}R^{V_j}
\psi\right\rangle_{\H_0}
=\left\langle\varphi,R^{V_j}R^{V_k(y)}V_jR^{V_j}
\psi\right\rangle_{\H_0}.
$$
Since $\varphi,\psi$ are arbitrary, this implies that
$$
\big[R^{V_j},R^{V_k(y)}\big]
=-R^{V_j}\big[V_j,R^{V_k(y)}\big]R^{V_j}
=0.
$$
Therefore, we obtain that
$$
\big[R^{V_j(x)},R^{V_k(y)}\big]
=\e^{-ix\cdot Q}\big[R^{V_j},R^{V_k(y)}\big]\e^{ix\cdot Q}
=0,
$$
and thus the operators $R^{V_j(x)}$ and $R^{V_k(y)}$ commute.	

Let $e_\ell$ be the $\ell$-th standard orthonormal vector in $\R^d$. Then, the
commutation of $R^{V_j(x)}$ and $R^{V_k(y)}$ implies that
$$
R^{V_j(x)}\frac{R^{V_k(y+\varepsilon e_\ell)}
-R^{V_k(y)}}\varepsilon\\
=\frac{R^{V_k(y+\varepsilon e_\ell)}
-R^{V_k(y)}}\varepsilon R^{V_j(x)},
\quad\varepsilon\in\R\setminus\{0\}.
$$
Taking the limit $\varepsilon\to0$ and using \eqref{eq_resolvent} and the strong
commutation of $V_j(x)$ and $V_k(y)$, one obtains
$$
R^{V_k(y)}\big[R^{V_j(x)},V_{k\ell}'(y)\big]R^{V_k(y)}=0.
$$
Since the resolvent $R^{V_k(y)}$ on the left is injective, this implies that
$\big[R^{V_j(x)},V_{k\ell}'(y)\big]=0$ on $\dom\big(V_k(x)\big)$, and since
$\dom\big(V_k(x)\big)$ is a core for $V_{k\ell}'(y)$ the last equality extends to
$\dom\big(V_{k\ell}'(y)\big)$. Therefore, we obtain that
$$
\big[R^{V_j(x)},R^{V_{k\ell}'(y)}\big]
=-R^{V_{k\ell}'(y)}\big[R^{V_j(x)},V_{k\ell}'(y)\big]
R^{V_{k\ell}'(y)}
=0,
$$
and thus the operators $R^{V_j(x)}$ and $R^{V_{k\ell}'(y)}$ commute.

The commutation of $U_0(x)$ and $R^{V_k(y)}$ implies that
$$
U_0(x)\;\!\frac{R^{V_k(y+\varepsilon e_\ell)}
-R^{V_k(y)}}\varepsilon\\
=\frac{R^{V_k(y+\varepsilon e_\ell)}
-R^{V_k(y)}}\varepsilon\;\!U_0(x),
\quad\varepsilon\in\R\setminus\{0\}.
$$
Taking the limit $\varepsilon\to0$, and using \eqref{eq_resolvent}, the commutation of
$U_0(x)$ and $R^{V_k(y)}$, and the commutation of $R^{V_k(y)}$ y $R^{V_{k\ell}'(y)}$,
one obtains that
\begin{align*}
&R^{V_k(y)}\big[U_0(x),V_{k\ell}'(y)\big]R^{V_k(y)}=0\\
&\iff R^{V_k(y)}R^{V_{k\ell}'(y)}\big[U_0(x),V_{k\ell}'(y)\big]
R^{V_{k\ell}'(y)} R^{V_k(y)}=0\\
&\iff R^{V_{k\ell}'(y)}\big[U_0(x),V_{k\ell}'(y)\big]
R^{V_{k\ell}'(y)}=0.
\end{align*}
Therefore, we obtain that
$$
\big[U_0(x),R^{V_{k\ell}'(y)}\big]
=-R^{V_{k\ell}'(y)}\big[U_0(x),V_{k\ell}'(y)\big]
R^{V_{k\ell}'(y)}=0,
$$
and thus the operators $U_0(x)$ and $R^{V_{k\ell}'(y)}$ commute.

Finally, the commutation of $R^{V_{jk}'(x)}$ and $R^{V_{\ell m}'(y)}$ is proved in a
similar way. The details are left to the reader.
\end{proof}

Lemma \ref{lemma_commute} and \cite[Prop.~5.27]{Sch12} imply that the operators
$U_0(x)$, $V_j(y)$ and $V_{k\ell}'(z)$ mutually commute in the strong sense for all
$x,y,z\in\R^d$ and all $j,k,\ell\in\{1,\ldots,d\}$, that is, the spectral projections
of $U_0(x)$, $V_j(y)$ and $V_{k\ell}'(z)$ mutually commute for all $x,y,z\in\R^d$ and
all $j,k,\ell\in\{1,\ldots,d\}$. Using these commutation relations, we can establish
other useful facts:

\begin{Lemma}\label{lemma_V}
Let Assumptions \ref{ass_regularity} and \ref{ass_commute} be satisfied.
\begin{enumerate}
\item[(a)] For all $j\in\{1,\ldots,d\}$, $x\in\R^d$ and $\varphi\in\D$, one has 
$$
\emph{s\hspace{1pt}-}\,\frac\d{\d x_j}iU_0(x)U_0^{-1}\varphi
=V_j(x)U_0(x)U_0^{-1}\varphi.
$$
\item[(b)] For all $n\in\Z$ and $j\in\{1,\ldots,d\}$, one has 
$$
U_0^n\big(\dom(Q_j)\cap\dom(V_j)\big)=\dom(Q_j)\cap\dom(V_j).
$$
\item[(c)] For all $n\in\Z$, $j\in\{1,\ldots,d\}$ and
$\varphi\in\dom(Q_j)\cap\dom(V_j)$, one has 
$$
U_0^nQ_jU_0^{-n}\varphi=(Q_j-n\;\!V_j)\varphi,
$$
and the operator $(Q_j-n\;\!V_j)$ is essentially self-adjoint on
$\dom(Q_j)\cap\dom(V_j)$.
\item[(d)] For all $x\in\R^d$ and $n\in\Z$, the operator
$\big(x\cdot Q-n\;\!(x\cdot V)\big)$ is essentially self-adjoint on
$\dom(x\cdot Q)\cap\dom(x\cdot V)$, with self-adjoint extension
$$
\overline{\big(x\cdot Q-n\;\!(x\cdot V)\big)
\upharpoonright\dom(x\cdot Q)\cap\dom(x\cdot V)}
=U_0^n(x\cdot Q)U_0^{-n}.
$$
\item[(e)] For all $j,k\in\{1,\ldots,d\}$, one has $V_{jk}'=V_{kj}'$.
\end{enumerate}
\end{Lemma}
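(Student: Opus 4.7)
My strategy is to prove part (a) first and derive (c), (d), (e) from it; part (b) will then follow from (c). The key algebraic inputs are the identity $U_0(x+te_j) = e^{-ix\cdot Q}U_0(te_j)e^{ix\cdot Q}$ (which uses only commutativity of the $Q_k$'s) and the identity $U_0\;\!e^{iy\cdot Q}U_0^{-1} = e^{iy\cdot Q}U_0(y)U_0^{-1}$ (a direct consequence of the definition of $U_0(y)$), together with the strong commutation results of Lemma \ref{lemma_commute}.

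For (a), I would combine these two identities to rewrite the difference quotient as
$$
\tfrac{1}{t}\bigl(iU_0(x+te_j)U_0^{-1}\varphi - iU_0(x)U_0^{-1}\varphi\bigr)
= e^{-ix\cdot Q}\;\!\tfrac{1}{t}\bigl(iU_0(te_j)U_0^{-1} - i\bigr)e^{ix\cdot Q}U_0(x)U_0^{-1}\varphi.
$$
As $t\to 0$, the middle factor converges strongly on $\D$ to $V_j$ by Assumption \ref{ass_regularity}; the vector $e^{ix\cdot Q}U_0(x)U_0^{-1}\varphi = U_0\;\!e^{ix\cdot Q}U_0^{-1}\varphi$ lies in $\dom(V_j)$ (by the invariance of $\dom(V_j)$ under $\{e^{iy\cdot Q}\}$ noted after Assumption \ref{ass_regularity}, combined with the strong commutation of $U_0$ and $V_j$ from Lemma \ref{lemma_commute}), so a density argument using that $\D$ is a core for $V_j$ extends the strong convergence and identifies the limit of the right-hand side as $e^{-ix\cdot Q}V_j\;\!e^{ix\cdot Q}U_0(x)U_0^{-1}\varphi = V_j(x)U_0(x)U_0^{-1}\varphi$.

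For (c), specialising (a) at $x=0$ yields $V_j\varphi = Q_j\varphi - U_0\;\!Q_j\;\!U_0^{-1}\varphi$ on $\D\cap\dom(Q_j)$, i.e., $U_0\;\!Q_j\;\!U_0^{-1} = Q_j - V_j$ on this core. Iterating and using the strong commutation of $V_j$ with $U_0$ (Lemma \ref{lemma_commute}) gives $U_0^n\;\!Q_j\;\!U_0^{-n} = Q_j - nV_j$ on $\D\cap\dom(Q_j)\cap\dom(V_j)$. Essential self-adjointness of $Q_j - nV_j$ on $\dom(Q_j)\cap\dom(V_j)$ follows from the joint spectral calculus for the strongly commuting self-adjoint operators $Q_j$ and $V_j$, and its self-adjoint closure must coincide with the self-adjoint operator $U_0^n\;\!Q_j\;\!U_0^{-n}$. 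Part (b) is then immediate: strong commutation of $V_j$ and $U_0$ gives $U_0^n\dom(V_j) = \dom(V_j)$, and (c) gives $U_0^n\bigl(\dom(Q_j)\cap\dom(V_j)\bigr) \subset \dom(Q_j)\cap\dom(V_j)$, with the reverse inclusion obtained by replacing $n$ with $-n$.

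Part (d) is the directional analogue of (c): summing the relation $U_0\;\!Q_k\;\!U_0^{-1} = Q_k - V_k$ with weights $x_k$ produces $U_0^n(x\cdot Q)U_0^{-n} = x\cdot Q - n\;\!(x\cdot V)$ on $\D\cap\dom(x\cdot Q)\cap\dom(x\cdot V)$, with essential self-adjointness of $x\cdot Q - n\;\!(x\cdot V)$ on $\dom(x\cdot Q)\cap\dom(x\cdot V)$ again following from joint strong commutativity of $\{Q_k\}$ and $\{V_k\}$ (Lemma \ref{lemma_commute}). For (e), I would differentiate (a) once more in $x_k$ (using the product rule with $\partial_{x_k}|_{x=0}V_j(x)\varphi = V'_{jk}\varphi$ and $\partial_{x_k}|_{x=0}U_0(x)U_0^{-1}\varphi = -iV_k\varphi$) and evaluate at $x=0$; the two possible orders of differentiation produce $V'_{jk}\varphi - iV_jV_k\varphi$ and $V'_{kj}\varphi - iV_kV_j\varphi$. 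Equality of mixed partials (justified by the $C^2$-type regularity built into Assumption \ref{ass_regularity}) combined with $V_jV_k = V_kV_j$ on a joint core (Lemma \ref{lemma_commute}) yields $V'_{jk} = V'_{kj}$ on that core, hence equality of the self-adjoint extensions. The main obstacle will be making the passage to the limit in (a) fully rigorous—specifically, justifying strong convergence of the difference quotient on vectors in $\dom(V_j)\setminus\D$—after which the remaining parts reduce to iteration of commutation relations, joint spectral calculus for strongly commuting self-adjoint operators, and equality of mixed partial derivatives.
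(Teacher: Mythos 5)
Your algebraic decomposition for part (a) is correct, but the issue you flag at the end is a genuine gap, and the fix you propose is the wrong one. The difference quotients $T_t:=\tfrac1t\big(iU_0(te_j)U_0^{-1}-i\big)$ are not uniformly bounded as $t\to0$, so strong convergence of $T_t$ to $V_j$ on the core $\D$ does \emph{not} extend to the vector $\xi=e^{ix\cdot Q}U_0(x)U_0^{-1}\varphi\in\dom(V_j)\setminus\D$ by a density argument. The paper handles this by testing weakly against $\psi\in e^{-ix\cdot Q}\D$ and moving the difference quotient (via its adjoint, whose strong derivative at $0$ on $\D$ is also $V_j$) onto the vector $e^{ix\cdot Q}\psi\in\D$, where the convergence holds by Assumption \ref{ass_regularity}. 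Something equivalent — e.g.\ testing against $\zeta\in\D$ and using $\langle\zeta,T_t\xi\rangle=\langle T_t^*\zeta,\xi\rangle\to\langle V_j\zeta,\xi\rangle=\langle\zeta,V_j\xi\rangle$ — is unavoidable; a bare core argument does not suffice.

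There are two more substantial problems. First, in your derivation of (c) you assert that essential self-adjointness of $Q_j-nV_j$ on $\dom(Q_j)\cap\dom(V_j)$ follows from ``joint spectral calculus for the strongly commuting self-adjoint operators $Q_j$ and $V_j$,'' and you use the same claim in (d), citing Lemma \ref{lemma_commute}. This is false: $Q_j$ and $V_j$ do \emph{not} strongly commute — their commutator $i[V_j,Q_j]$ is precisely the acceleration $V'_{jj}$, which is generically nonzero (e.g.\ $V'_{jk}=2\delta_{jk}$ in Example \ref{ex_Laplacian}). Lemma \ref{lemma_commute} establishes mutual commutation among $U_0(x)$, $V_j(y)$, $V'_{k\ell}(z)$, not between any $Q_k$ and any $V_j$. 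The paper's argument is instead that $\D$ is a core for $Q_j$, hence $U_0^n\D$ is a core for the self-adjoint operator $U_0^nQ_jU_0^{-n}$; since $(Q_j-nV_j)\upharpoonright\dom(Q_j)\cap\dom(V_j)$ is symmetric, extends this operator on $U_0^n\D$, and a symmetric operator extending a self-adjoint one must coincide with it, the claim follows. Second, the statement ``specialising (a) at $x=0$ yields $V_j\varphi=Q_j\varphi-U_0Q_jU_0^{-1}\varphi$'' is only a formal product-rule computation; (a) at $x=0$ literally gives the tautology $V_j\varphi=V_j\varphi$, and the intertwining relation — in particular the nontrivial domain inclusion $U_0^{-1}\varphi\in\dom(Q_j)$ — must be established separately. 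The paper does this by decomposing $(\e^{-itQ_j}-1)U_0^{-1}\varphi$, using (a) integrated from $0$ to $t$, and taking the limit $t\to0$; this is the content of part (b) and it is logically prior to (c), not a consequence of it (your ordering (c)$\Rightarrow$(b) is circular because the very meaning of the formula in (c) presupposes the domain invariance of (b)). Your sketch of (e) is the right heuristic (symmetry of the second derivative), but the paper does not invoke a Schwarz-type theorem; the regularity in Assumption \ref{ass_regularity} is stated in $C^1(Q)$ terms, and the paper instead carries out the explicit chain of identities on $\D$ using the intertwining from (b) and the strong commutation of $U_0$, $V_k$, $V'_{kj}$.
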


The result of point (c) has the following interpretation: After time $n$, the position
$U_0^nQ_jU_0^{-n}$ of the quantum system with propagator $U_0$ is equal to the value
$Q$ of its initial position minus $n$ times the value $V$ of its initial velocity.

\begin{proof}
(a) Let $\varphi\in\D$ and $\psi\in\e^{-ix\cdot Q}\D$. Then, one has
\begin{align*}
\big\langle\psi,
\hbox{s\hspace{1pt}-}\tfrac\d{\d x_j}iU_0(x)U_0^{-1}\varphi\big\rangle_{\H_0}
&=\lim_{\varepsilon\to0}\left\langle\psi,
\tfrac i\varepsilon\big(\e^{-ix\cdot Q}U_0(\varepsilon e_j)\e^{ix\cdot Q}U_0^{-1}
-\e^{-ix\cdot Q}U_0\e^{ix\cdot Q}U_0^{-1}\big)\varphi\right\rangle_{\H_0}\\
&=\lim_{\varepsilon\to0}\left\langle-\tfrac i\varepsilon
\big(U_0(\varepsilon e_j)U_0^{-1}-1\big)^*\e^{ix\cdot Q}\psi,
U_0\e^{ix\cdot Q}U_0^{-1}\varphi\right\rangle_{\H_0}.
\end{align*}
Since $\e^{ix\cdot Q}\psi\in\D$, Assumption \ref{ass_regularity} implies that
$$
-\slim_{\varepsilon\to0}\tfrac i\varepsilon\big(U_0(\varepsilon e_j)U_0^{-1}-1\big)^*
\e^{ix\cdot Q}\psi=V_j\e^{ix\cdot Q}\psi.
$$
Thus,
$$
\big\langle\psi,
\hbox{s\hspace{1pt}-}\tfrac\d{\d x_j}iU_0(x)U_0^{-1}\varphi\big\rangle_{\H_0}
=\big\langle V_j\e^{ix\cdot Q}\psi,U_0\e^{ix\cdot Q}U_0^{-1}\varphi\big\rangle_{\H_0}
=\big\langle\psi,V_j(x)U_0(x)U_0^{-1}\varphi\big\rangle_{\H_0}
$$
with $U_0(x)U_0^{-1}\varphi\in\dom\big(V_j(x)\big)=\dom(V_j)$ due to the commutation
of $V_j(x)$ and $U_0(x)U_0^{-1}$. Since $\D$ is dense in $\H_0$ and
$\e^{-ix\cdot Q}:\H_0\to\H_0$ is a homeomorphism, the set of vectors $\psi$ is dense
in $\H_0$, and thus
$$
\emph{s\hspace{1pt}-}\tfrac\d{\d x_j}iU_0(x)U_0^{-1}\varphi
=V_j(x)U_0(x)U_0^{-1}\varphi.
$$

(b) Let $\varphi\in\dom(Q_j)\cap\dom(V_j)$ and $t\in\R\setminus\{0\}$. Then,
$$
\big(\e^{-itQ_j}-1\big)U_0^{-1}\varphi
=U(te_j)^{-1}\big(\e^{-itQ_j}-1\big)\varphi+\big(U(te_j)^{-1}-U_0^{-1}\big)\varphi,
$$
and point (a) and Lemma \ref{lemma_V} imply for each $\psi\in\D$ that
\begin{align*}
\big\langle\psi,\big(U(te_j)^{-1}-U_0^{-1}\big)\varphi\big\rangle_{\H_0}
&=\big\langle\big(U(te_j)U_0^{-1}-1\big)\psi,U_0^{-1}\varphi\big\rangle_{\H_0}\\
&=\left\langle\int_0^t\d s\;\!\tfrac\d{\d s}\;\!U(se_j)U_0^{-1}\psi,
U_0^{-1}\varphi\right\rangle_{\H_0}\\
&=\left\langle-i\int_0^t\d s\;\!V_j(se_j)U(se_j)U_0^{-1}\psi,
U_0^{-1}\varphi\right\rangle_{\H_0}\\
&=\left\langle\psi,i\int_0^t\d s\;\!U(se_j)^{-1}V_j(se_j)\varphi\right\rangle_{\H_0}.
\end{align*}
Since $\D$ is dense in $\H_0$, it follows that
$$
\big(\e^{-itQ_j}-1\big)U_0^{-1}\varphi
=U(te_j)^{-1}\big(\e^{-itQ_j}-1\big)\varphi
+i\int_0^t\d s\;\!U(se_j)^{-1}V_j(se_j)\varphi,
$$
and thus that
$$
Q_jU_0^{-1}\varphi
=\slim_{t\to0}\tfrac it\big(\e^{-itQ_j}-1\big)U_0^{-1}\varphi
=U_0^{-1}Q_j\varphi-U_0^{-1}V_j\varphi.
$$
Therefore, we have shown that
$U_0^{-1}\big(\dom(Q_j)\cap\dom(V_j)\big)\subset\dom(Q_j)$ with
$$
U_0Q_jU_0^{-1}\varphi=\big(Q_j-V_j\big)\varphi
$$
for all $\varphi\in\dom(Q_j)\cap\dom(V_j)$. Since $U_0^{-1}$ and $V_j$ commute, we
also have $U_0^{-1}\big(\dom(Q_j)\cap\dom(V_j)\big)\subset\dom(V_j)$, and thus
\begin{equation}\label{inclusion_one}
U_0^{-1}\big(\dom(Q_j)\cap\dom(V_j)\big)\subset\dom(Q_j)\cap\dom(V_j).
\end{equation}

Starting with the expression $(\e^{-itQ_j}-1)U_0\varphi$, we can show with similar
arguments that
\begin{equation}\label{inclusion_two}
U_0\big(\dom(Q_j)\cap\dom(V_j)\big)\subset\dom(Q_j)\cap\dom(V_j)
\end{equation}
and that
$$
U_0^{-1}Q_jU_0\varphi=\big(Q_j+V_j\big)\varphi
$$
for all $\varphi\in\dom(Q_j)\cap\dom(V_j)$. Using \eqref{inclusion_one} and
\eqref{inclusion_two} we get
$$
\dom(Q_j)\cap\dom(V_j)
=U_0U_0^{-1}\big(\dom(Q_j)\cap\dom(V_j)\big)
\subset U_0\big(\dom(Q_j)\cap\dom(V_j)\big)
\subset\big(\dom(Q_j)\cap\dom(V_j)\big).
$$
Thus, we obtain
$U_0\big(\dom(Q_j)\cap\dom(V_j)\big)=\big(\dom(Q_j)\cap\dom(V_j)\big)$, which in turn
implies for each $n\in\Z$ that
$$
U_0^n\big(\dom(Q_j)\cap\dom(V_j)\big)=\dom(Q_j)\cap\dom(V_j).
$$

(c) We prove the first claim by induction on $n\ge0$ (the case $n\le0$ is similar).
The case $n=0$ is trivial, the case $n=1$ has been shown in the proof of point (b),
and in the case $n-1\ge1$ we assume that the claim is true. Then, to prove the claim
in the case $n$, we take $\varphi\in\dom(Q_j)\cap\dom(V_j)$ and use successively the
fact that $U_0^{-1}\big(\dom(Q_j)\cap\dom(V_j)\big)=\dom(Q_j)\cap\dom(V_j)$, the
induction hypothesis, the commutation of $V_j$ and $U_0$, and the claim in the case
$n=1$ to obtain the equalities
\begin{align*}
U_0^nQ_jU_0^{-n}\varphi
&=U_0\big(Q_j-(n-1)V_j\big)U_0^{-1}\varphi\\
&=U_0Q_jU_0^{-1}\varphi-(n-1)V_j\varphi\\
&=\big(Q_j-V_j\big)\varphi-(n-1)V_j\varphi\\
&=(Q_j-n\;\!V_j)\varphi.
\end{align*}

The second claim follows from the first claim if one takes into account the fact that
$\D$ is a core for $Q_j$ and the inclusions
$$
\dom(Q_j)\cap\dom(V_j)=U_0^n\big(\dom(Q_j)\cap\dom(V_j)\big)\supset U_0^n\D,
$$
which follow from point (b) and the definition of the set $\D$.

(d) Point (c) implies that
\begin{equation}\label{eq_xQ}
U_0^n(x\cdot Q)U_0^{-n}=\big(x\cdot Q-n\;\!(x\cdot V)\big)
\quad\hbox{on}\quad\cap_{j=1}^d\big(\dom(Q_j)\cap\dom(V_j)\big).
\end{equation}
Furthermore, point (b) implies that
$$
\cap_{j=1}^d\big(\dom(Q_j)\cap\dom(V_j)\big)
=\cap_{j=1}^dU_0^n\big(\dom(Q_j)\cap\dom(V_j)\big)
\supset U_0^n\D.
$$
Since $\D$ is a core for $x\cdot Q$ (because $\D$ is dense in
$\dom(Q^2)\subset\dom(x\cdot Q)$), it follows from \eqref{eq_xQ} that the operator
$\big(x\cdot Q-n\;\!(x\cdot V)\big)$ is essentially self-adjoint on
$\cap_{j=1}^d\big(\dom(Q_j)\cap\dom(V_j)\big)$, and thus essentially self-adjoint on
$\dom(x\cdot Q)\cap\dom(x\cdot V)$. This, together with the uniqueness of the
self-adjoint extension of an essentially self-adjoint operator, implies that
\begin{align*}
\overline{\big(x\cdot Q-n\;\!(x\cdot V)\big)
\upharpoonright\dom(x\cdot Q)\cap\dom(x\cdot V)}
&=\overline{\big(x\cdot Q-n\;\!(x\cdot V)\big)
\upharpoonright\cap_{j=1}^d\big(\dom(Q_j)\cap\dom(V_j)\big)}\\
&=\overline{U_0^n(x\cdot Q)U_0^{-n}
\upharpoonright\cap_{j=1}^d\big(\dom(Q_j)\cap\dom(V_j)\big)}\\
&=U_0^n(x\cdot Q)U_0^{-n}.
\end{align*}

(e) Let $\varphi,\psi\in\D$. Using the facts that
$\D\subset\dom(V_j)\subset\dom\big(V_{jk}'\big)$, that
$V_{jk}'\upharpoonright\D=i[V_j,Q_k]\upharpoonright\D$, that
$\D\subset\dom(Q_j)\cap\dom(V_j)$ and point (b), we obtain the equalities
\begin{align}
\big\langle\varphi,V_{jk}'\psi\big\rangle_{\H_0}
&=\big\langle V_j\varphi,iQ_k\psi\big\rangle_{\H_0}
-\big\langle Q_k\varphi,iV_j\psi\big\rangle_{\H_0}\nonumber\\
&=\tfrac\d{\d t}\big|_{t=0}
\left(\big\langle U_0(te_j)U_0^{-1}\varphi,Q_k\psi\big\rangle_{\H_0}
+\big\langle Q_k\varphi,U_0(te_j)U_0^{-1}\psi\big\rangle_{\H_0}\right)\nonumber\\
&=\tfrac\d{\d t}\big|_{t=0}\left(\big\langle\e^{-itQ_j}U_0\e^{itQ_j}U_0^{-1}
\varphi,Q_k\psi\big\rangle_{\H_0}+\big\langle Q_k\varphi,\e^{-itQ_j}U_0
\e^{itQ_j}U_0^{-1}\psi\big\rangle_{\H_0}\right)\nonumber\\
&=\big\langle iU_0Q_jU_0^{-1}\varphi,Q_k\psi\big\rangle_{\H_0}
+\big\langle-iQ_j\varphi,Q_k\psi\big\rangle_{\H_0}
+\big\langle Q_k\varphi,iU_0Q_jU_0^{-1}\psi\big\rangle_{\H_0}\nonumber\\
&\quad+\big\langle Q_k\varphi,-iQ_j\psi\big\rangle_{\H_0},\label{eq_V_jk_1}
\end{align}
with
$$
\big\langle-iQ_j\varphi,Q_k\psi\big\rangle_{\H_0}
=\tfrac\d{\d t}\big|_{t=0}\big\langle\e^{-itQ_j}\varphi,Q_k\psi\big\rangle_{\H_0}
=\big\langle Q_k\varphi,iQ_j\psi\big\rangle_{\H_0}.
$$
Thus, using the fact that
$
\hbox{s\hspace{1pt}-}\tfrac\d{\d t}\big|_{t=0}\big(U_0(-te_k)U_0^{-1}\big)^*
=-i\;\!V_k
$
on $\D$, we obtain
\begin{align}
\eqref{eq_V_jk_1}
&=\big\langle Q_jU_0^{-1}\varphi,-iU_0^{-1}Q_k\psi\big\rangle_{\H_0}
+\big\langle -iU_0^{-1}Q_k\varphi,Q_jU_0^{-1}\psi\big\rangle_{\H_0}\nonumber\\
&=\tfrac\d{\d t}\big|_{t=0}\left(\big\langle Q_jU_0^{-1}\varphi,
U_0^{-1}\e^{-itQ_k}\psi\big\rangle_{\H_0}+\big\langle U_0^{-1}\e^{-itQ_k}
\varphi,Q_jU_0^{-1}\psi\big\rangle_{\H_0}\right)\nonumber\\
&=\tfrac\d{\d t}\big|_{t=0}\left(\big\langle Q_jU_0^{-1}\varphi,\e^{-itQ_k}
U_0^{-1}\big(U_0(-te_k)U_0^{-1}\big)^*\psi\big\rangle_{\H_0}\right.\nonumber\\
&\hspace{50pt}\left.+\big\langle\e^{-itQ_k}U_0^{-1}\big(U_0(-te_k)U_0^{-1}\big)^*
\varphi,Q_jU_0^{-1}\psi\big\rangle_{\H_0}\right)\nonumber\\
&=\big\langle Q_jU_0^{-1}\varphi,-iQ_kU_0^{-1}\psi\big\rangle_{\H_0}
+\big\langle Q_jU_0^{-1}\varphi,-iU_0^{-1}V_k\psi\big\rangle_{\H_0}
+\big\langle-iQ_kU_0^{-1}\varphi,Q_jU_0^{-1}\psi\big\rangle_{\H_0}\nonumber\\
&\quad+\big\langle-iU_0^{-1}V_k\varphi,Q_jU_0^{-1}\psi\big\rangle_{\H_0},
\label{eq_V_jk_2}
\end{align}
with
$$
\big\langle Q_jU_0^{-1}\varphi,-iQ_kU_0^{-1}\psi\big\rangle_{\H_0}
=\tfrac\d{\d t}\big|_{t=0}\big\langle Q_jU_0^{-1}\varphi,
\e^{-itQ_k}U_0^{-1}\psi\big\rangle_{\H_0}
=\big\langle iQ_kU_0^{-1}\varphi,Q_jU_0^{-1}\psi\big\rangle_{\H_0}.
$$
Therefore, using the fact that $U_0^{-1}$, $V_k$ and $V_{kj}'$ strongly commute, we
obtain
\begin{align*}
\eqref{eq_V_jk_2}
&=\big\langle Q_jU_0^{-1}\varphi,-iV_kU_0^{-1}\psi\big\rangle_{\H_0}
+\big\langle-iV_kU_0^{-1}\varphi,Q_jU_0^{-1}\psi\big\rangle_{\H_0}\\
&=\big\langle U_0^{-1}\varphi,i[V_k,Q_j]U_0^{-1}\psi\big\rangle_{\H_0}\\
&=\big\langle U_0^{-1}\varphi,V_{kj}'U_0^{-1}\psi\big\rangle_{\H_0}\\
&=\big\langle\varphi,V_{kj}'\psi\big\rangle_{\H_0}.
\end{align*}
Since $\D$ is dense in $\H_0$, it follows that $V_{jk}'\psi=V_{kj}'\psi$. Thus,
$V_{jk}'=V_{kj}'$ on $\D$, and since $\D$ is a core for $V_{jk}'$ and $V_{kj}'$, we
conclude that $V_{jk}'=V_{kj}'$.
\end{proof}

In rest of the section, we introduce and describe a set of critical values of $U_0$
which appears naturally under our assumptions. For this, we use the notation
$V:=(V_1,\ldots,V_d)$ for the velocity vector operator, for each measurable function
$f:\R^d\to\C$ we define the operator $f(V)$ by using the $d$-variables functional
calculus, and we use the shorthand notation
$$
E^{U_0}(\lambda;\delta)
:=E^{U_0}\big(\{\e^{i\theta}\mid\theta\in(\lambda-\delta,\lambda+\delta)\}\big),
\quad\lambda\in[0,2\pi),~\delta>0.
$$

\begin{Definition}[Critical values of $U_0$]\label{def_kappa}
A number $\e^{i\lambda}\in\S^1$ is called a regular value of $U_0$ if there exists
$\delta>0$ such that
\begin{equation}\label{eq_kappa}
\lim_{\varepsilon\searrow0}\big\|\big(V^2\big(V^2+1\big)^{-1}+\varepsilon\big)^{-1}
E^{U_0}(\lambda;\delta)\big\|_{\B(\H_0)}<\infty.
\end{equation}
A number $\e^{i\lambda}\in\S^1$ that is not a regular value of $U_0$ is called a
critical value of $U_0$, and we denote by $\kappa(U_0)$ the set of critical values of
$U_0$.
\end{Definition}

\begin{Lemma}\label{lemma_kappa}
Let Assumptions \ref{ass_regularity} and \ref{ass_commute} be satisfied.
\begin{enumerate}
\item[(a)] $\kappa(U_0)$ is closed.
\item[(b)] The limit
$
\lim_{\varepsilon\searrow0}
\big\|\big(V^2\big(V^2+1\big)^{-1}+\varepsilon\big)^{-1}E^{U_0}(\Theta)\big\|_{\B(\H_0)}
$
is finite for each closed set $\Theta\subset\S^1\setminus\kappa(U_0)$.
\item[(c)] For each closed set $\Theta\subset\S^1\setminus\kappa(U_0)$,
there exists $a>0$ such that
$$
E^{U_0}(\Theta)=E^{V^2}\big([a,\infty)\big)E^{U_0}(\Theta).
$$
\end{enumerate}
\end{Lemma}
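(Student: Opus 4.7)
The plan is to prove the three parts in order, with part (b) following from (a) by a compactness argument and part (c) extracting a lower bound on $V^2$ from the operator-norm bound of (b). The key structural fact I will use throughout is that, by Lemma \ref{lemma_commute} and \cite[Prop.~5.27]{Sch12}, the operators $V$ and $U_0$ strongly commute, so every spectral projection of $V^2$ commutes with every spectral projection of $U_0$, and any bounded Borel function of $V^2$ commutes with $E^{U_0}(\Theta)$ for every Borel $\Theta\subset\S^1$.

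For (a), I would show that the set of regular values is open. If $\e^{i\lambda_0}$ is regular with associated $\delta_0>0$ as in \eqref{eq_kappa}, then for any $\lambda$ with $|\lambda-\lambda_0|<\delta_0/2$, the choice $\delta:=\delta_0/2$ gives $(\lambda-\delta,\lambda+\delta)\subset(\lambda_0-\delta_0,\lambda_0+\delta_0)$, whence $E^{U_0}(\lambda;\delta)=E^{U_0}(\lambda;\delta)E^{U_0}(\lambda_0;\delta_0)$. Using the commutation of $V^2(V^2+1)^{-1}$ with spectral projections of $U_0$, the norm in \eqref{eq_kappa} for $\lambda$ is bounded by the corresponding norm for $\lambda_0$, which is finite. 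Hence $\e^{i\lambda}$ is regular.

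For (b), $\Theta$ is compact since it is closed in the compact space $\S^1$. For each $\e^{i\lambda}\in\Theta$ pick $\delta_\lambda>0$ satisfying \eqref{eq_kappa}; extract a finite subcover $(A_j)_{j=1}^N$, where $A_j$ is the arc associated to some $\lambda_j$ with width $\delta_j$. Partition $\Theta$ into disjoint Borel sets $\Theta_j\subset A_j$ (e.g.\ by removing from $A_j$ the lower-index arcs). Using the mutual orthogonality of the ranges of $E^{U_0}(\Theta_j)$, the identity $E^{U_0}(\Theta_j)=E^{U_0}(\Theta_j)E^{U_0}(\lambda_j;\delta_j)$, and the commutation of $V^2(V^2+1)^{-1}$ with $E^{U_0}$, a direct computation gives
$$
\big\|\big(V^2(V^2+1)^{-1}+\varepsilon\big)^{-1}E^{U_0}(\Theta)\big\|_{\B(\H_0)}
\le\max_{1\le j\le N}\big\|\big(V^2(V^2+1)^{-1}+\varepsilon\big)^{-1}E^{U_0}(\lambda_j;\delta_j)\big\|_{\B(\H_0)}.
$$
Taking $\varepsilon\searrow0$ and using \eqref{eq_kappa} for each $\lambda_j$ yields the claim.

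For (c), let $M$ denote the finite limit from (b). For any $\varphi\in E^{U_0}(\Theta)\H_0$ and $\varepsilon>0$ one has
$$
\int_{[0,\infty)}\big(\lambda/(\lambda+1)+\varepsilon\big)^{-2}\d\mu_\varphi(\lambda)
=\big\|\big(V^2(V^2+1)^{-1}+\varepsilon\big)^{-1}\varphi\big\|_{\H_0}^2
\le M^2\|\varphi\|_{\H_0}^2,
$$
where $\mu_\varphi$ is the spectral measure of $V^2$ at $\varphi$. Pick any $a>0$ small enough that $a/(a+1)<1/(M+1)$, and suppose by contradiction that $\psi:=E^{V^2}([0,a))E^{U_0}(\Theta)\varphi\ne0$ for some $\varphi$; by strong commutation $\psi$ belongs to $E^{U_0}(\Theta)\H_0$, so the above inequality applies. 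Since $\mu_\psi$ is supported in $[0,a)$ and $t\mapsto t/(t+1)$ is increasing, the integrand is at least $(a/(a+1)+\varepsilon)^{-2}$ on the support, producing $\|\psi\|^2(a/(a+1)+\varepsilon)^{-2}\le M^2\|\psi\|^2$. Letting $\varepsilon\searrow0$ contradicts the choice of $a$. Hence $E^{V^2}([0,a))E^{U_0}(\Theta)=0$, which, together with the commutation of $E^{V^2}$ and $E^{U_0}$, gives the required identity.

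The main obstacle is the norm decomposition in (b): one must exploit both the orthogonality of the $E^{U_0}(\Theta_j)$ and the strong commutation of $V$ with $U_0$ in the right way to turn a local (on small arcs) bound into a global (on the compact set $\Theta$) bound. Once (b) is in hand, (c) is then a careful but standard spectral-theoretic argument.
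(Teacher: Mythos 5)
Your proof is correct and follows essentially the same route as the paper: (a) by showing the regular values form an open set, (b) by a finite-subcover compactness argument combined with the strong commutation of $V^2$ and $U_0$, and (c) by contradiction from the finiteness in (b) together with the monotonicity of $t\mapsto t/(t+1)$. The only cosmetic difference is in (c), where you fix a single $a>0$ from the bound $M$ rather than letting $a=1/n\to 0$ as the paper does, but the underlying estimate is the same.
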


One could have the impression that the result of point (c) also holds in the other
direction; namely that for each $a>0$, there exists a closed set
$\Theta\subset\S^1\setminus\kappa(U_0)$ such that
$$
E^{V^2}\big([a,\infty)\big)=E^{U_0}(\Theta)E^{V^2}\big([a,\infty)\big).
$$
But this is not true in general, as can be seen for instance in Example
\ref{ex_Laplacian}.

\begin{proof}
The proof of (a) is similar to the one of \cite[Lemma~2.6(a)]{RT12_0}. (b) follows
directly by invoking a compacity argument. For (c), if $E^{U_0}(\Theta)=0$ or $V^2$
is strictly positive, then the claim is trivial. So, assume that $E^{U_0}(\Theta)\ne0$
and that $V^2$ is not strictly positive, that is, $0\in\sigma(V^2)$. Suppose by absurd
that there is no $a>0$ such that
$E^{U_0}(\Theta)=E^{V^2}\big([a,\infty)\big)E^{U_0}(\Theta)$. Then, for each
$n\in\N^*$, there exists $\psi_n\in\H_0$ such that
$E^{V^2}\big([0,1/n)\big)E^{U_0}(\Theta)\psi_n\neq0$, and the vectors
$$
\varphi_n:=\frac{E^{V^2}\big([0,1/n)\big)E^{U_0}(\Theta)\psi_n}
{\|E^{V^2}\big([0,1/n)\big)E^{U_0}(\Theta)\psi_n\|_{\H_0}}
$$
satisfy $E^{U_0}(\Theta)\varphi_n=E^{V^2}\big([0,1/n)\big)\varphi_n=\varphi_n$ and
$\|\varphi_n\|_{\H_0}=1$. It follows from (b) that
\begin{align*}
{\rm Const.}&\ge\lim_{\varepsilon\searrow0}\big\|\big(V^2\big(V^2+1\big)^{-1}
+\varepsilon\big)^{-1}E^{U_0}(\Theta)\big\|_{\B(\H_0)}\\
&\ge\lim_{\varepsilon\searrow0}\big\|\big(V^2\big(V^2+1\big)^{-1}
+\varepsilon\big)^{-1}E^{U_0}(\Theta)\varphi_n\big\|_{\H_0}\\
&=\lim_{\varepsilon\searrow0}\big\|\big(V^2\big(V^2+1\big)^{-1}
+\varepsilon\big)^{-1}E^{V^2}\big([0,1/n)\big)\varphi_n\big\|_{\H_0}\\
&\ge\lim_{\varepsilon\searrow0}\big(n^{-1}\big(n^{-1}+1\big)^{-1}
+\varepsilon\big)^{-1}\|\varphi_n\|_{\H_0}\\
&=1+n,
\end{align*}
which leads to a contradiction when $n\to\infty$.
\end{proof}

%--------------------------------------------------------------------------------------
\section{Locally $U_0$-smooth operators}\label{section_smooth}
\setcounter{equation}{0}
%--------------------------------------------------------------------------------------

In this section, we exhibit a class of locally $U_0$-smooth operators and prove that
$U_0$ has purely absolutely continuous spectrum in $\sigma(U_0)\setminus\kappa(U_0)$
using commutator methods for unitary operators \cite{FRT13}. We start with the
construction of a conjugate operator for $U_0$. For each $j\in\{1,\ldots,d\}$, we set
$\Pi_j:=V_j\big(V_j^2+1\big)^{-1}$. Since
$$
\Pi_j=\big(V_j+i\big)^{-1}+i\big(V_j-i\big)^{-1}\big(V_j+i\big)^{-1},
$$
with $\big(V_j\pm i\big)^{-1}\in C^1(Q_j)$, we have $\Pi_j\in C^1(Q_j)$
\cite[Prop.~5.1.5]{ABG96}. Thus $\Pi_j\dom(Q_j)\subset\dom(Q_j)$, and the operator
$$
\A\;\!\varphi:=\tfrac12\sum_{j=1}^d\big(\Pi_jQ_j+Q_j\Pi_j\big)\varphi,
\quad\varphi\in\cap_{j=1}^d \dom(Q_j),
$$
is well-defined and symmetric. In fact, the operator $\A$ is essentially self-adjoint:

\begin{Lemma}[Conjugate operator for $U_0$]\label{lemma_A}
Let Assumptions \ref{ass_regularity} and \ref{ass_commute} be satisfied.
Then, the operator $\A$ is essentially self-adjoint on $\dom(Q^2)$, and its closure
$A:=\overline{\A}$ is essentially self-adjoint on any core for $Q^2$.
\end{Lemma}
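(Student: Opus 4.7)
The plan is to invoke Nelson's commutator theorem with the comparison operator $N := Q^2 + 1 \ge 1$, whose domain coincides with $\dom(Q^2)$. The starting observation is that $V_j \in C^1(Q)$ (from Assumption~\ref{ass_regularity}), combined with the identity $\Pi_j = (V_j+i)^{-1} + i(V_j-i)^{-1}(V_j+i)^{-1}$ and the product/sum rules for the class $C^1(Q)$ \cite[Prop.~5.1.5]{ABG96}, gives $\Pi_j \in C^1(Q_k)$ for all $j, k$. Consequently, the form commutator $[\Pi_j, Q_k]$ extends to a bounded operator $\tilde B_{jk}$ on $\H_0$, satisfying $\tilde B_{jk}^{\,*} = -\tilde B_{jk}$, and $\Pi_j\dom(Q_k) \subset \dom(Q_k)$; in particular $\A$ is well defined and symmetric on $\dom(Q^2)$.

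First I would verify the relative bound $\|\A\varphi\|_{\H_0} \le c\,\|N\varphi\|_{\H_0}$ for $\varphi \in \dom(Q^2)$: after rewriting $\A = \sum_j \Pi_j Q_j - \tfrac12\sum_j \tilde B_{jj}$, this follows from $\|\Pi_j\|_{\B(\H_0)} \le 1/2$ (since $|\lambda|/(\lambda^2+1) \le 1/2$ for $\lambda \in \R$) together with the boundedness of the $\tilde B_{jj}$. Next I would establish the commutator form bound
$$
\bigl|\langle\A\varphi, N\varphi\rangle_{\H_0} - \langle N\varphi, \A\varphi\rangle_{\H_0}\bigr| \le c\,\langle\varphi, N\varphi\rangle_{\H_0}, \qquad \varphi \in \dom(Q^2),
$$
by expanding $[Q^2, \A]$ at the form level via $[Q_j, Q_k] = 0$ and $[Q_k^2, \Pi_j] = Q_k\tilde B_{jk} + \tilde B_{jk} Q_k$. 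The "symmetric" contributions of the shape $\langle Q_a\varphi, \tilde B_{jk} Q_b\varphi\rangle_{\H_0}$ are immediately bounded by $c\|Q\varphi\|_{\H_0}^2 \le c\langle\varphi, N\varphi\rangle_{\H_0}$; the remaining "antisymmetric" contributions, collected as $\langle\varphi, \{\tilde B_{jk}, Q_jQ_k\}\varphi\rangle_{\H_0}$, are reduced to symmetric ones by one further integration by parts, made possible by the second-order regularity in Assumption~\ref{ass_regularity} (namely $V_{jk}' \in C^1(Q)$), which upgrades $\Pi_j$ to $C^2(Q)$ and gives boundedness of the iterated commutators $[\tilde B_{jk}, Q_\ell]$. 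Once both hypotheses are in place, Nelson's theorem yields essential self-adjointness of $\A$ on any core of $N$; since $\dom(N) = \dom(Q^2)$, the cores of $N$ coincide with the cores of $Q^2$, which gives both assertions of the lemma.

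The main obstacle is the treatment of the antisymmetric terms in the commutator form bound: a direct Cauchy--Schwarz estimate produces only the bound $c\|\varphi\|_{\H_0}\|N\varphi\|_{\H_0}$, which is \emph{not} dominated by $c\langle\varphi, N\varphi\rangle_{\H_0}$. Closing this gap is precisely where the full regularity content of Assumption~\ref{ass_regularity} must be exploited --- the $C^1(Q)$ regularity of the first commutators $V_{jk}'$ allows one to carry out one additional commutation and convert the antisymmetric quadratic form into one controlled by $\|Q\varphi\|_{\H_0}^2 + \|\varphi\|_{\H_0}^2$.
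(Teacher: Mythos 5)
Your proposal is correct and follows essentially the same approach as the paper: Nelson's commutator theorem (the paper's \cite[Thm.~X.37]{RS75}) with a $Q^2$-based comparison operator and the $C^2(Q)$ regularity of $\Pi_j$ to control the cross commutator terms. The only cosmetic difference is that you take $N = Q^2+1$ and obtain the relative bound directly from $\|\Pi_j\|\le 1/2$, whereas the paper takes $N = Q^2 + \Pi^2 + n$ and derives the relative bound from an expansion of $N^2$; since $\dom(N)=\dom(Q^2)$ either way, the conclusion is identical.
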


\begin{proof}
We apply the commutator criterion of essential self-adjointness \cite[Thm.~X.37]{RS75}.
Let $\Pi:=(\Pi_1,\ldots,\Pi_d)$, and for $n>1$ define the self-adjoint operator
$N:=Q^2+\Pi^2+n$ with domain $\dom(N)=\dom(Q^2)$. In the form sense on $\dom(N)$,
one has
\begin{align*}
N^2
&=Q^4+\Pi^4+n^2+2n\;\!Q^2+2n\;\!\Pi^2+Q^2\Pi^2+\Pi^2Q^2\\
&=Q^4+\Pi^4+n^2+2n\;\!Q^2+2n\;\!\Pi^2
+\sum_{j,k=1}^d\big(Q_j\Pi_k^2Q_j+\Pi_kQ_j^2\Pi_k\big)+R
\end{align*}
with
$
R:=\sum_{j,k=1}^d\big(\Pi_k[\Pi_k,Q_j]Q_j
+Q_j[Q_j,\Pi_k]\Pi_k+[\Pi_k,Q_j]^2\big)
$.
Now, the following inequality holds
$$
\sum_{j,k=1}^d\big(\Pi_k[\Pi_k,Q_j]Q_j+Q_j[Q_j,\Pi_k]\Pi_k\big)
\ge-dQ^2-\sum_{j,k=1}^d\big|\Pi_k[\Pi_k,Q_j]\big|^2.
$$
Thus, there exists $c>0$ such that $R\ge-dQ^2-c$. Altogether, we have shown
in the form sense on $\dom(N)$ that 
$$
N^2\ge Q^4+\Pi^4+(n^2-c)+(2n-d)\;\!Q^2+2n\;\!\Pi^2
+\sum_{j,k=1}^d\big(Q_j\Pi_k^2Q_j+\Pi_kQ_j^2\Pi_k\big),
$$
where the r.h.s. is a sum of positive terms for $n$ large enough. In particular,
one has for $\varphi\in\dom(N)$ and $j\in\{1,\ldots,d\}$
$$
\|N\varphi\|_{\H_0}^2
\ge\big\|\Pi_jQ_j\varphi\big\|_{\H_0}^2+\big\|Q_j\Pi_j\varphi\big\|_{\H_0}^2,
$$
which implies that
$$
\|A\;\!\varphi\|_{\H_0}\le\tfrac12\sum_{j=1}^d\left(\big\|\Pi_jQ_j\varphi\big\|_{\H_0}
+\big\|\Phi_j\Pi_j\varphi\big\|_{\H_0}\right)
\le d\;\!\|N\varphi\|_{\H_0}.
$$

It remains to estimate the commutator $[A,N]$. In the form sense on $\dom(N)$, one has
\begin{align*}
2\;\![A,N]
&=\sum_{j,k=1}^d\big([\Pi_j,Q_k]Q_jQ_k+Q_k[\Pi_j,Q_k]Q_j
+Q_j[\Pi_j,Q_k]Q_k+Q_jQ_k[\Pi_j,Q_k]\\
&\quad+\Pi_j[Q_j,\Pi_k]\Pi_k+\Pi_j\Pi_k[Q_j,\Pi_k]+[Q_j,\Pi_k]\Pi_j\Pi_k
+\Pi_k[Q_j,\Pi_k]\Pi_j\big).
\end{align*}
The last four terms are bounded. For the other terms, the fact that $\Pi_j\in C^2(Q)$,
together with the bounds
$$
\left|\big\langle Q_j\varphi,BQ_k\varphi\big\rangle_{\H_0}\right|
\le\|B\|_{\B(\H_0)}\big\langle\varphi,Q^2\varphi\big\rangle_{\H_0}
\le{\rm Const.}\;\!\langle\varphi,N\varphi\rangle_{\H_0},
\quad\varphi\in\dom(N),~B\in\B(\H),
$$
leads to the desired estimate, namely,
$
\big\langle\varphi,[A,N]\varphi\big\rangle_{\H_0}
\le{\rm Const.}\;\!\langle\varphi,N\varphi\rangle_{\H_0}
$.
\end{proof}

The operator $U_0$ is regular with respect to $A:$

\begin{Lemma}\label{lemma_C2}
Let Assumptions \ref{ass_regularity} and \ref{ass_commute} be satisfied. Then, the
operator $U_0$ is of class $C^2(A)$ with
$[A,U_0]=U_0\sum_{j=1}^dV_j^2\big(V_j^2+1\big)^{-1}$.
\end{Lemma}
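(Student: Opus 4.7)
The plan is to verify that $U_0\in C^1(A)$ with the stated commutator value, and then to promote this to $C^2(A)$ via the Leibniz rule for products of $C^1(A)$-operators.

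For the $C^1(A)$ step, I would work on the core $\D$, which sits inside $\dom(Q^2)\subset\dom(A)$ and is itself a core for $A$ by Lemma \ref{lemma_A}. On $\D$ the commutator computation is formally
\begin{equation*}
[\mathcal{A},U_0]=\tfrac12\sum_{j=1}^d\bigl(\Pi_j[Q_j,U_0]+[Q_j,U_0]\Pi_j\bigr)=\tfrac12\sum_{j=1}^d\bigl(\Pi_jV_j+V_j\Pi_j\bigr)U_0=U_0T,
\end{equation*}
with $T:=\sum_{j=1}^dV_j^2(V_j^2+1)^{-1}$. Three inputs from the previous lemmas drive this calculation: the strong commutation $[\Pi_j,U_0]=0$ from Lemma \ref{lemma_commute}, which eliminates $\Pi_j$ from the commutator; the identity $[Q_j,U_0]\varphi=V_jU_0\varphi$ on $\dom(Q_j)\cap\dom(V_j)$, immediate from Lemma \ref{lemma_V}(c) with $n=1$; and the domain invariance $U_0^{\pm1}\bigl(\dom(Q_j)\cap\dom(V_j)\bigr)=\dom(Q_j)\cap\dom(V_j)$ from Lemma \ref{lemma_V}(b), which keeps every vector produced during the manipulation in $\dom(Q_j)$ so that the self-adjointness of $Q_j$ may be used to pass it across the inner product. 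The collapse in the last equality is $\tfrac12(\Pi_jV_j+V_j\Pi_j)=V_j\Pi_j$ by functional calculus. Since $T$ is bounded and $\D$ is a core for $A$, the rigorous version of the above---the bilinear form $\langle\mathcal{A}\varphi,U_0\psi\rangle_{\H_0}-\langle U_0^*\varphi,\mathcal{A}\psi\rangle_{\H_0}$ on $\D\times\D$---extends uniquely to a bounded form on $\dom(A)\times\dom(A)$, yielding $U_0\in C^1(A)$ with $[A,U_0]_\circ=U_0T$.

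For $C^2(A)$ it suffices to show that $U_0T\in C^1(A)$. Since $U_0$ and $T$ strongly commute (both are functions of $V$) and $U_0\in C^1(A)$ is already in hand, the Leibniz rule for $C^1(A)$-operators (\cite[Prop.~5.1.5]{ABG96}) reduces the task to proving $T\in C^1(A)$. This is done by the same scheme: on $\D$, using $[\Pi_j,T]=0$ (both functions of $V$), one finds
\begin{equation*}
[\mathcal{A},T]=\tfrac12\sum_{j=1}^d\bigl(\Pi_j[Q_j,T]+[Q_j,T]\Pi_j\bigr),\qquad[Q_j,T]=-\sum_{k=1}^d\bigl[Q_j,(V_k^2+1)^{-1}\bigr],
\end{equation*}
and each $(V_k\pm i)^{-1}$ lies in $C^1(Q_j)$ by Assumption \ref{ass_regularity}, hence so does $(V_k^2+1)^{-1}=(V_k-i)^{-1}(V_k+i)^{-1}$, making every $\bigl[Q_j,(V_k^2+1)^{-1}\bigr]$ a bounded operator. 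Consequently $[A,T]_\circ\in\B(\H_0)$, so $T\in C^1(A)$, and the Leibniz rule then gives $U_0T\in C^1(A)$, closing the proof.

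The main technical obstacle is the bookkeeping inside the $C^1(A)$ computation: at every step one must keep track of which vectors lie in $\dom(Q_j)$ (to apply self-adjointness of $Q_j$) and which lie in $\dom(V_j)$ (to apply the conjugation identity $Q_jU_0\varphi=U_0(Q_j+V_j)\varphi$ derived from Lemma \ref{lemma_V}(c)), as $Q_j$, $\Pi_j$ and $U_0^{\pm1}$ are commuted past one another. The invariance result in Lemma \ref{lemma_V}(b) is exactly the domain statement that guarantees the compatibility of these two requirements and legitimizes the formal commutator calculation; it plays the role of a genuine lemma here, whereas it is typically automatic in the standard self-adjoint setting.
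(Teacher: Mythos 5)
Your proof is correct and follows the paper's strategy: compute the commutator form on the core $\D$, eliminate $\Pi_j$ using the commutation $[U_0,\Pi_j]=0$, relate the $Q_j$-commutator of $U_0$ to $V_jU_0$, collapse via $\Pi_jV_j=V_j\Pi_j=V_j^2\big(V_j^2+1\big)^{-1}$, and promote to $C^2(A)$ by the Leibniz rule \cite[Prop.~5.1.5]{ABG96}. The one tactical difference is in how the $Q_j$-commutator is identified with $V_j$: the paper extracts $V_j$ directly by differentiating the identity $[\e^{-ix\cdot Q},U_0]=\big(U_0(x)U_0^{-1}-1\big)U_0\e^{-ix\cdot Q}$ at $x=0$, whereas you quote the integrated relation $[Q_j,U_0]\varphi=V_jU_0\varphi$ from Lemma \ref{lemma_V}(c); both are legitimate, and yours leans more heavily on the previously proved lemmas, which is what they are there for. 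A small plus of your writeup is that you supply an explicit justification for $\sum_jV_j^2\big(V_j^2+1\big)^{-1}\in C^1(A)$, via $(V_k\pm i)^{-1}\in C^1(Q_j)$ and the commutation $[\Pi_j,T]=0$, a step the paper asserts without argument.
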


\begin{proof}
Let $\varphi\in\D\subset\dom(Q^2)$. Using the fact that $[U_0,\Pi_j]=0$ for each
$j\in\{1,\ldots,d\}$, we obtain
\begin{align*}
&\langle A\;\!\varphi,U_0\varphi\rangle_{\H_0}
-\langle\varphi,U_0A\;\!\varphi\rangle_{\H_0}\\
&=\tfrac12\sum_{j=1}^d
\left(\big\langle\big(\Pi_jQ_j+Q_j\Pi_j\big)\varphi,U_0\varphi\big\rangle_{\H_0}
-\big\langle\varphi,U_0\big(\Pi_jQ_j+Q_j\Pi_j\big)\varphi\big\rangle_{\H_0}\right)\\
&=\tfrac12\sum_{j=1}^d\tfrac\d{\d x_j}\Big|_{x=0}
\left(\big\langle-i\big(\Pi_j\e^{ix\cdot Q}+\e^{ix\cdot Q}\Pi_j\big)\varphi,
U_0\varphi\big\rangle_{\H_0}-\big\langle\varphi,iU_0\big(\Pi_j\e^{-ix\cdot Q}
+\e^{-ix\cdot Q}\Pi_j\big)\varphi\big\rangle_{\H_0}\right)\\
&=\tfrac12\sum_{j=1}^d\tfrac\d{\d x_j}\Big|_{x=0}\left(\big\langle\varphi,i\;\!\Pi_j
\big[\e^{-ix\cdot Q},U_0\big]\varphi\big\rangle_{\H_0}
+\big\langle\varphi,i\big[\e^{-ix\cdot Q},U_0\big]\Pi_j\;\!\varphi
\big\rangle_{\H_0}\right).
\end{align*}
Now, a direct calculation shows that
$$
\big[\e^{-ix\cdot Q},U_0\big]
=\big(U_0(x)U_0^{-1}-1\big)U_0\e^{-ix\cdot Q}
=\e^{-ix\cdot Q}U_0\big(1-U_0(-x)U_0^{-1}\big),
$$
and the fact that $\Pi_j\in C^1(Q_j)$ implies that $\Pi_j\;\!\varphi\in\dom(Q_j)$.
Thus,
\begin{align*}
&\langle A\;\!\varphi,U_0\varphi\rangle_{\H_0}
-\langle\varphi,U_0A\;\!\varphi\rangle_{\H_0}\\
&=\tfrac12\sum_{j=1}^d\tfrac\d{\d x_j}\Big|_{x=0}
\left(\big\langle\e^{ix\cdot Q}\Pi_j\;\!\varphi,iU_0\big(1-U_0(-x)U_0^{-1}\big)
\varphi\big\rangle_{\H_0}+\big\langle\big(U_0(x)U_0^{-1}-1\big)^*\varphi,
iU_0\e^{-ix\cdot Q}\Pi_j\;\!\varphi\big\rangle_{\H_0}\right)\\
&=\tfrac12\sum_{j=1}^d
\left(\big\langle\Pi_j\;\!\varphi,U_0V_j\;\!\varphi\big\rangle_{\H_0}
+\big\langle V_j\;\!\varphi,U_0\Pi_j\;\!\varphi\big\rangle_{\H_0}\right)\\
&=\left\langle\varphi,U_0\sum_{j=1}^dV_j^2\big(V_j^2+1\big)^{-1}\varphi
\right\rangle_{\H_0}.
\end{align*}
Since $\D$ is a core for $Q^2$, and thus for $A$ by Lemma \ref{lemma_A}, this implies
that $U_0\in C^1(A)$ with
$$
[A,U_0]=U_0\sum_{j=1}^dV_j^2\big(V_j^2+1\big)^{-1}.
$$
Finally, since $U_0\in C^1(A)$ and
$\sum_{j=1}^dV_j^2\big(V_j^2+1\big)^{-1}\in C^1(A)$, we infer from
\cite[Prop.~5.1.5]{ABG96} that
$$
[A,U_0]=U_0\sum_{j=1}^dV_j^2\big(V_j^2+1\big)^{-1}\in C^1(A),
$$
and thus that $U_0\in C^2(A)$.
\end{proof}

Using Lemma \ref{lemma_C2}, we can prove a Mourre estimate for $U_0$ on the set
$\S^1\setminus\kappa(U_0):$

\begin{Lemma}[Mourre estimate for $U_0$]\label{lemma_Mourre}
Let Assumptions \ref{ass_regularity} and \ref{ass_commute} be satisfied, and let
$\e^{i\lambda}\in\S^1\setminus\kappa(U_0)$. Then, there exist $a,\delta>0$ such that
$$
E^{U_0}(\lambda;\delta)U_0^{-1}[A,U_0]E^{U_0}(\lambda;\delta)
\ge a\;\!E^{U_0}(\lambda;\delta).
$$
\end{Lemma}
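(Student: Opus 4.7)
The plan is to combine Lemma \ref{lemma_C2} (which identifies $U_0^{-1}[A,U_0]$ explicitly) with Lemma \ref{lemma_kappa}(c) (which says that $E^{U_0}(\lambda;\delta)$ is supported in a region where $V^2$ is bounded away from zero) via functional calculus.

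First, by Lemma \ref{lemma_C2},
$$
U_0^{-1}[A,U_0]=\sum_{j=1}^dV_j^2\big(V_j^2+1\big)^{-1}.
$$
The next step is to prove the pointwise inequality
$$
\sum_{j=1}^d\frac{V_j^2}{V_j^2+1}\ge\frac{V^2}{V^2+1},
$$
in the sense of self-adjoint operators, where $V^2=\sum_{j=1}^dV_j^2$. Since the $V_j$ strongly commute (Lemma \ref{lemma_commute}), this reduces by joint functional calculus to the scalar inequality $\sum_jv_j^2/(v_j^2+1)\ge\sum_jv_j^2/(\sum_kv_k^2+1)$, which is immediate from $v_j^2+1\le\sum_kv_k^2+1$ for each $j$.

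Next, since $\kappa(U_0)$ is closed by Lemma \ref{lemma_kappa}(a) and $\e^{i\lambda}\notin\kappa(U_0)$, there exists $\delta>0$ such that the closed arc $\Theta:=\{\e^{i\theta}\mid\theta\in[\lambda-\delta,\lambda+\delta]\}$ is contained in $\S^1\setminus\kappa(U_0)$. By Lemma \ref{lemma_kappa}(c), there exists $a>0$ with $E^{U_0}(\Theta)=E^{V^2}([a,\infty))E^{U_0}(\Theta)$. Since $E^{U_0}(\lambda;\delta)\le E^{U_0}(\Theta)$ and $E^{V^2}([a,\infty))$ commutes with $E^{U_0}(\lambda;\delta)$ (again by Lemma \ref{lemma_commute}), one obtains
$$
E^{U_0}(\lambda;\delta)=E^{V^2}\big([a,\infty)\big)E^{U_0}(\lambda;\delta).
$$

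Finally, $V^2(V^2+1)^{-1}$ commutes with $E^{U_0}(\lambda;\delta)$ (same argument), so combining the three inputs above,
\begin{align*}
E^{U_0}(\lambda;\delta)U_0^{-1}[A,U_0]E^{U_0}(\lambda;\delta)
&\ge E^{U_0}(\lambda;\delta)\;\!V^2\big(V^2+1\big)^{-1}E^{V^2}\big([a,\infty)\big)E^{U_0}(\lambda;\delta)\\
&\ge\tfrac a{a+1}E^{U_0}(\lambda;\delta),
\end{align*}
the last inequality by scalar functional calculus ($\lambda/(\lambda+1)\ge a/(a+1)$ for $\lambda\ge a$). This yields the Mourre estimate with constant $a/(a+1)$.

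No real obstacles are expected: all commutation relations were already established in Lemma \ref{lemma_commute}, and both the scalar inequality and the use of Lemma \ref{lemma_kappa}(c) are straightforward. The only minor care needed is in replacing the open arc defining $E^{U_0}(\lambda;\delta)$ by a slightly larger closed arc still disjoint from $\kappa(U_0)$ in order to legitimately invoke Lemma \ref{lemma_kappa}(c).
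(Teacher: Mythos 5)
Your proof is correct, but it routes through Lemma~\ref{lemma_kappa}(c) instead of working directly with the definition of $\kappa(U_0)$ as the paper does. The paper picks $\delta>0$ from the definition so that
$\lim_{\varepsilon\searrow0}\|(V^2(V^2+1)^{-1}+\varepsilon)^{-1}E^{U_0}(\lambda;\delta)\|<\infty$,
restricts to the invariant subspace $\H_{\lambda,\delta}:=E^{U_0}(\lambda;\delta)\H_0$, observes that this norm equals $(a+\varepsilon)^{-1}$ where $a$ is the infimum of the spectrum of $V^2(V^2+1)^{-1}$ on $\H_{\lambda,\delta}$, and hence concludes $a>0$, which is exactly the Mourre constant. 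You instead pass through the already-established identity $E^{U_0}(\Theta)=E^{V^2}([a,\infty))E^{U_0}(\Theta)$ of Lemma~\ref{lemma_kappa}(c), then apply scalar functional calculus to $t\mapsto t/(t+1)$ on $[a,\infty)$ to get the constant $a/(a+1)$. Both ultimately flow from the same boundedness condition in Definition~\ref{def_kappa}; your version is a bit more modular (it reuses a lemma already in hand and needs only elementary monotonicity of $t/(t+1)$), while the paper's version is more self-contained and produces the sharper-looking constant directly as a spectral infimum. Your care in enlarging the open arc $E^{U_0}(\lambda;\delta)$ to a closed arc $\Theta\subset\S^1\setminus\kappa(U_0)$ before invoking Lemma~\ref{lemma_kappa}(c), and the commutation facts you cite to push $E^{V^2}([a,\infty))$ through $E^{U_0}(\lambda;\delta)$, are exactly what is needed; no gaps.
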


\begin{proof}
Since $\e^{i\lambda}\in\S^1\setminus\kappa(U_0)$ and $U_0$ and $V^2$ strongly commute,
there exists $\delta>0$ such that
\begin{align}
{\rm Const.}&\ge\lim_{\varepsilon\searrow0}\big\|\big(V^2\big(V^2+1\big)^{-1}
+\varepsilon\big)^{-1}E^{U_0}(\lambda;\delta)\big\|_{\B(\H_0)}\nonumber\\
&=\lim_{\varepsilon\searrow0}\big\|E^{U_0}(\lambda;\delta)
\big(V^2\big(V^2+1\big)^{-1}E^{U_0}(\lambda;\delta)
+\varepsilon\big)^{-1}E^{U_0}(\lambda;\delta)\big\|_{\B(\H_0)}\nonumber\\
&=\lim_{\varepsilon\searrow0}\big\|\big(V^2\big(V^2+1\big)^{-1}E^{U_0}(\lambda;\delta)
+\varepsilon\big)^{-1}\big\|_{\B(\H_{\lambda,\delta})},\label{eq_bound}
\end{align}
with $\H_{\lambda,\delta}:=E^{U_0}(\lambda;\delta)\H_0$. Furthermore, we have
$$
\big\|\big(V^2\big(V^2+1\big)^{-1}E^{U_0}(\lambda;\delta)
+\varepsilon\big)^{-1}\big\|_{\B(\H_{\lambda,\delta})}
=(a+\varepsilon)^{-1},
$$
with $a\ge0$ the infimum of the spectrum of
$V^2\big(V^2+1\big)^{-1}E^{U_0}(\lambda;\delta)$ in $\H_{\lambda,\delta}$. Thus,
\eqref{eq_bound} entails the bound $a^{-1}\le{\rm Const.}$, which implies that
$a\ne0$. In consequence,
$$
V^2\big(V^2+1\big)^{-1}E^{U_0}(\lambda;\delta)\ge a\;\!E^{U_0}(\lambda;\delta),
$$
with $a>0$. This fact, together with the equality
$[A,U_0]=U_0\sum_{j=1}^dV_j^2\big(V_j^2+1\big)^{-1}$ of Lemma \ref{lemma_C2},
implies that
\begin{align*}
E^{U_0}(\lambda;\delta)U_0^{-1}[A,U_0]E^{U_0}(\lambda;\delta)
&=\sum_{j=1}^dV_j^2\big(V_j^2+1\big)^{-1}E^{U_0}(\lambda;\delta)\\
&\ge V^2\big(V^2+1\big)^{-1}E^{U_0}(\lambda;\delta)\\
&\ge a\;\!E^{U_0}(\lambda;\delta),
\end{align*}
which proves the claim.
\end{proof}

We now exhibit a class of locally $U_0$-smooth operators and prove that $U_0$ has
purely absolutely continuous spectrum in $\sigma(U_0)\setminus\kappa(U_0)$. For this,
we recall that an operator $B\in\B(\H_0)$ is locally $U_0$-smooth on an open set
$\Theta\subset\S^1$ if for each closed set $\Theta'\subset\Theta$ there exists
$c_{\Theta'}\ge0$ such that
$$
\sum_{n\in\Z}\big\|B\;\!U_0^nE^{U_0}(\Theta')\varphi\big\|_{\B(\H_0)}^2
\le c_{\Theta'}\|\varphi\|_{\H_0}^2
\quad\hbox{for all $\varphi\in\H_0$.}
$$
We also recall that the space $\big(\dom(A),\H_0\big)_{1/2,1}$ is defined by real
interpolation (see \cite[Sec.~3.4.1]{ABG96}). Since $\Pi_j\in C^1(Q_j)$ for each
$j\in\{1,\ldots,d\}$, we have $\dom(\langle Q\rangle)\subset\dom(A)$. Therefore, it
follows by interpolation \cite[Thm.~2.6.3 \& Thm.~3.4.3.(a)]{ABG96} that we have the
continuous embeddings
\begin{equation}\label{eq_embeddings}
\dom(\langle Q\rangle^s)
\subset\big(\dom(A),\H_0\big)_{1/2,1}
\subset\H
\subset\left(\big(\dom(A),\H_0\big)_{1/2,1}\right)^*
\subset\dom(\langle Q\rangle^{-s}),
\quad s>1/2.
\end{equation}

\begin{Theorem}[Locally $U_0$-smooth operators]\label{thm_spectrum}
Let Assumptions \ref{ass_regularity} and \ref{ass_commute} be satisfied.
\begin{enumerate}
\item[(a)] The spectrum of $U_0$ in $\sigma(U_0)\setminus\kappa(U_0)$ is purely
absolutely continuous.
\item[(b)] Each operator $B\in\B\big(\dom(\langle Q\rangle^{-s}),\H_0\big)$, with
$s>1/2$, is locally $U_0$-smooth on $\S^1\setminus\kappa(U_0)$.
\end{enumerate}
\end{Theorem}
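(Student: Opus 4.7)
The plan is to deduce both statements from the general commutator-method theory for unitary operators of \cite{FRT13}, whose two input hypotheses are already at our disposal: the regularity $U_0\in C^2(A)$ (Lemma \ref{lemma_C2}) and the strict Mourre estimate on $\S^1\setminus\kappa(U_0)$ (Lemma \ref{lemma_Mourre}). No genuinely new analysis is required beyond bookkeeping.

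For part (a), I would fix an arbitrary $\e^{i\lambda}\in\sigma(U_0)\setminus\kappa(U_0)$ and read Lemma \ref{lemma_Mourre} as a strict local Mourre estimate with constant $a>0$ on the arc of half-width $\delta$ around $\e^{i\lambda}$. Combined with the regularity $U_0\in C^2(A)$, the virial-type identity of \cite{FRT13} rules out eigenvalues of $U_0$ in that arc, while the associated limiting absorption principle excludes singular continuous spectrum. Hence $U_0$ is purely absolutely continuous on the arc. Since $\kappa(U_0)$ is closed by Lemma \ref{lemma_kappa}(a), these arcs form an open cover of $\sigma(U_0)\setminus\kappa(U_0)$, and the global statement follows by a standard patching argument.

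For part (b), I would apply the quantitative form of the limiting absorption principle from \cite{FRT13}: for every closed $\Theta'\subset\S^1\setminus\kappa(U_0)$,
$$
\sup_{|z|\ne 1,\; z/|z|\in\Theta'}\;\!
\big\|(U_0-z)^{-1}\big\|_{\B\left((\dom(A),\H_0)_{1/2,1},\,((\dom(A),\H_0)_{1/2,1})^*\right)}<\infty.
$$
By the unitary analogue of Kato's characterisation of smoothness, this bound is equivalent to the local $U_0$-smoothness on $\S^1\setminus\kappa(U_0)$ of every operator bounded from $((\dom(A),\H_0)_{1/2,1})^*$ into $\H_0$. Taking $\H_0$-duals of the continuous embedding $\dom(\langle Q\rangle^s)\subset(\dom(A),\H_0)_{1/2,1}$ from \eqref{eq_embeddings} gives $((\dom(A),\H_0)_{1/2,1})^*\subset\dom(\langle Q\rangle^{-s})$ for $s>1/2$, so every $B\in\B\big(\dom(\langle Q\rangle^{-s}),\H_0\big)$ restricts to such an operator. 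Hence $B$ is locally $U_0$-smooth on $\S^1\setminus\kappa(U_0)$.

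The only obstacle I expect is administrative, namely matching the precise formulation of the abstract theorems in \cite{FRT13} (Mourre estimate, $C^2(A)$ regularity, interpolation spaces and the characterisation of local smoothness via boundary values of the resolvent) to the quantities appearing in Lemmas \ref{lemma_C2} and \ref{lemma_Mourre} and in the embedding chain \eqref{eq_embeddings}. Once this identification is in place, both assertions are immediate consequences of the abstract theory.
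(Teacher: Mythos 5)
Your proposal is correct and follows essentially the same route as the paper: both deduce (a) from Lemmas \ref{lemma_C2} and \ref{lemma_Mourre} via the abstract Mourre theory for unitaries in \cite{FRT13} (the paper cites \cite[Thm.~2.7]{FRT13}), and both deduce (b) from the embedding chain \eqref{eq_embeddings} and the locally-smooth-operator criterion of \cite{FRT13} (the paper cites \cite[Prop.~2.9]{FRT13}). You merely unpack the virial/LAP mechanism and the interpolation-duality step that the paper leaves implicit in its two-line proof.
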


\begin{proof}
The first claim follows from Lemmas \ref{lemma_C2}-\ref{lemma_Mourre} and
\cite[Thm.~2.7]{FRT13}. The second claim follows from the embeddings
\eqref{eq_embeddings} and \cite[Prop~2.9]{FRT13}.
\end{proof}

\begin{Example}[$V$ constant]\label{ex_V_constant}
Assume that there exist a dense set $\G\subset\H_0$ and $v\in\R^d\setminus\{0\}$ such
that
$$
\emph{s\hspace{1pt}-}\tfrac\d{\d t}\;\!iU_0(te_j)U_0^{-1}\varphi
=v_j\;\!U_0(te_j)U_0^{-1}\varphi,\quad t\in\R,~j\in\{1,\ldots,d\},~\varphi\in\G.
$$
Then, $U_0(te_j)=\e^{-itv_j}U_0$ on $\G$, and thus $U_0(te_j)=\e^{-itv_j}U_0$ by the
density of $\G$. It follows that $U_0(x)=\e^{-ix\cdot v}U_0$ for each $x\in\R^d$ and
that $V=v$. Thus, Assumptions \ref{ass_regularity} and \ref{ass_commute} are
satisfied. Moreover, since we have for all $\e^{i\lambda}\in\S^1$ and $\delta>0$ that
$$
\lim_{\varepsilon\searrow0}\big\|\big(V^2\big(V^2+1\big)^{-1}+\varepsilon\big)^{-1}
E^{U_0}(\lambda;\delta)\big\|_{\B(\H_0)}
\le v^{-2}\;\!\big(v^2+1\big)
<\infty,
$$
the set $\kappa(U_0)$ of critical values of $U_0$ is empty, and Theorem
\ref{thm_spectrum}(a) implies that $\sigma(U_0)=\sigma_{\rm ac}(U_0)$.
\end{Example}

\begin{Example}[Time-one propagator for the Laplacian]\label{ex_Laplacian}
Let $U_0$ be the time-one propagator for the Laplacian in $\R^d$. That is, let
$Q=(Q_1,\ldots,Q_d)$ and $P=(P_1,\ldots,P_d)$ be the usual families of position and
momentum operators in the Hilbert space $\H_0:=\ltwo(\R^d)$, and let $U_0=\e^{-iP^2}$
be the time-one propagator for the operator $P^2$ in $\H_0$. In such a case, the set
of Schwartz functions $\D:=\SS(\R^d)$ is a core for $Q^2$, we have for each $x\in\R^d$
$$
U_0(x)
=\e^{-ix\cdot Q}\e^{-iP^2}\e^{ix\cdot Q}
=\e^{-i(P+x)^2},
$$
and the operator $U_0(x)U_0^{-1}=\e^{-i(P+x)^2}\e^{iP^2}$ is strongly differentiable
on $\D$. Straightforward calculations show that $V_j=2P_j$, $V'_{jk}=2\delta_{jk}$ and
$V''_{jk\ell}=0$ for $j,k,\ell\in\{1,\ldots,d\}$ and $\delta_{jk}$ the Kronecker delta
function. Thus, Assumptions \ref{ass_regularity} and \ref{ass_commute} are satisfied.
Moreover, since we have for $\e^{i\lambda}\in\S^1$ and $\delta>0$ that
$$
\lim_{\varepsilon\searrow0}\left\|\big(V^2\big(V^2+1\big)^{-1}+\varepsilon\big)^{-1}
E^{U_0}(\lambda;\delta)\right\|_{\B(\H_0)}
=\lim_{\varepsilon\searrow0}\left\|\big(4P^2\big(4P^2+1\big)^{-1}+\varepsilon\big)^{-1}
E^{\e^{-iP^2}}(\lambda;\delta)\right\|_{\B(\H_0)},
$$
the set of critical values of $U_0$ is the singleton $\kappa(U_0)=\{1\}$, and Theorem
\ref{thm_spectrum}(a) implies that $\sigma(U_0)=\sigma_{\rm ac}(U_0)$.
\end{Example}

%--------------------------------------------------------------------------------------
\section{Summation formula}\label{section_summation}
\setcounter{equation}{0}
%--------------------------------------------------------------------------------------

In this section, we prove and give an interpretation of a summation formula which
relates the evolution of the localisation operator $f(Q)$ under $U_0$ to a time
operator $T_f$.

%--------------------------------------------------------------------------------------
\subsection{Averaged localisation functions}\label{section_averaged}
%--------------------------------------------------------------------------------------

First, we recall some properties of a class of averaged localisation functions which
appears naturally when dealing with quantum time delay. These functions, which are
denoted $R_f$, are constructed in terms of functions $f\in\linf(\R^d)$ of localisation
around the origin $0$ of $\R^d$. They were already used, in one form or another, in
\cite{GT07,RT12_0,RT12,Tie08,Tie09}.

\begin{Assumption}\label{ass_f}
The function $f\in\linf(\R^d)$ satisfies the following conditions:
\begin{enumerate}
\item[(i)] There exists $\rho>0$ such that
$|f(x)|\le{\rm Const.}\;\!\langle x\rangle^{-\rho}$ for almost every $x\in\R^d$.
\item[(ii)] $f=1$ on a neighbourhood of~~$0$.
\end{enumerate}
\end{Assumption}

If $f$ satisfies Assumption \ref{ass_f}, then $\slim_{r\to\infty}f(Q/r)=1$.
Furthermore, one has for each $x\in\R^d\setminus\{0\}$
$$
\left|\int_0^\infty\frac{\d\mu}\mu\big(f(\mu x)-\chi_{[0,1]}(\mu)\big)\right|
\le\int_0^1\frac{\d\mu}\mu\,|f(\mu x)-1|
+{\rm Const.}\int_1^{+\infty}\d\mu\,\mu^{-(1+\rho)}
<\infty,
$$
where $\chi_{[0,1]}$ denotes the characteristic function for the interval $[0,1]$.
Therefore the function
$$
R_f:\R^d\setminus\{0\}\to\C,
~~x\mapsto\int_0^{+\infty}\frac{\d\mu}\mu\big(f(\mu x)-\chi_{[0,1]}(\mu)\big)
$$
is well-defined. If $\R^*_+:=(0,\infty)$, endowed with the multiplication, is seen as
a Lie group with Haar measure $\frac{\d\mu}\mu$, then $R_f$ is the renormalised
average of $f$ with respect to the (dilation) action of $\R^*_+$ on $\R^d$.

In the next lemma we recall some differentiability and homogeneity properties of
$R_f$. We also give the explicit form of $\nabla R_f$ when $f$ is a radial function.
The reader is referred to \cite[Sec.~2]{Tie09} for proofs and details.

\begin{Lemma}\label{lemma_R_f}
Let Assumption \ref{ass_f} be satisfied.
\begin{enumerate}
\item[(a)] If $(\partial_jf)(x)$ exists for all $j\in\{1,\ldots,d\}$ and $x\in\R^d$,
and if there exists some $\rho'>0$ such that
$|(\partial_jf)(x)|\le{\rm Const.}\langle x\rangle^{-(1+\rho')}$ for each $x\in\R^d$,
then $R_f$ is differentiable on $\R^d\setminus\{0\}$ with partial derivative given by
$$
\big(\partial_jR_f\big)(x)=\int_0^\infty\d\mu\,(\partial_jf)(\mu x).
$$
In particular, if $f\in\SS(\R^d)$ then $R_f\in C^\infty(\R^d\setminus\{0\})$.
\item[(b)] If $R_f\in C^m(\R^d\setminus\{0\})$ for some $m\ge1$, then $R_f$ satisfies
the homogeneity properties
\begin{gather}
x\cdot(\nabla R_f)(x)=-1,\label{eq_homo_1}\\
t^{|\alpha|}(\partial^\alpha R_f)(tx)=(\partial^\alpha R_f)(x),\label{eq_homo_2}
\end{gather}
where $x\in\R^d\setminus\{0\}$, $t>0$ and $\alpha\in\N^d$ is a multi-index with
$1\le|\alpha|\le m$.
\item[(c)] If $f$ is radial, i.e. there exists $f_0\in\linf(\R)$ such that
$f(x)=f_0(|x|)$ for almost every $x\in\R^d$, then
$R_f\in C^\infty(\R^d\setminus\{0\})$ and $(\nabla R_f)(x)=-x^{-2}x$ for
$x\in\R^d\setminus\{0\}$.
\end{enumerate}
\end{Lemma}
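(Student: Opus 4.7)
The plan is to handle the three parts in turn, with the main technical input being differentiation under the integral sign for (a) and a Haar-measure substitution for (b) and (c). Throughout, I use the fact that $\d\mu/\mu$ is the Haar measure on the multiplicative group $\R^*_+$.

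For (a), I would start from the difference quotient
\begin{equation*}
\frac{R_f(x+he_j)-R_f(x)}{h}
=\int_0^\infty\frac{\d\mu}{\mu}\;\!\frac{f(\mu(x+he_j))-f(\mu x)}{h},
\end{equation*}
where the $\chi_{[0,1]}$ terms cancel. By the Mean Value Theorem, the integrand equals $(\partial_j f)(\mu(x+\theta h e_j))$ for some $\theta=\theta(\mu,x,h)\in(0,1)$, and pointwise in $\mu$ it tends to $(\partial_j f)(\mu x)$ as $h\to 0$. Choosing $|h|$ small enough that the segment from $x$ to $x+he_j$ stays in the ball $B(x,|x|/2)$, the assumed decay of $\partial_j f$ gives
\begin{equation*}
\bigl|(\partial_j f)(\mu(x+\theta h e_j))\bigr|
\le C\;\!\langle\mu(x+\theta h e_j)\rangle^{-(1+\rho')}
\le C'\min\bigl\{1,(\mu|x|)^{-(1+\rho')}\bigr\},
\end{equation*}
a function of $\mu$ alone that is integrable on $(0,\infty)$ and independent of $h$. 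Lebesgue's dominated convergence theorem then produces the stated formula for $(\partial_j R_f)(x)$. For $f\in\SS(\R^d)$ every partial derivative of $f$ satisfies a bound of this type with arbitrarily large $\rho'$, so the same argument iterates: differentiating once more brings out an extra factor $\mu$, differentiating $\alpha$ times brings out $\mu^{|\alpha|-1}$, and the rapid decay of $\partial^\alpha f$ absorbs these polynomial factors, yielding $R_f\in C^\infty(\R^d\setminus\{0\})$.

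For (b), the substitution $\nu=\mu t$ in the definition of $R_f(tx)$ gives, after using the Haar-invariance of $\d\mu/\mu$,
\begin{equation*}
R_f(tx)
=\int_0^\infty\frac{\d\nu}{\nu}\bigl(f(\nu x)-\chi_{[0,t]}(\nu)\bigr)
=R_f(x)+\int_0^\infty\frac{\d\nu}{\nu}\bigl(\chi_{[0,1]}(\nu)-\chi_{[0,t]}(\nu)\bigr)
=R_f(x)-\ln t,
\end{equation*}
the last equality by splitting according to whether $t\gtrless 1$ and evaluating explicitly. Differentiating this scaling identity in $t$ at $t=1$ gives \eqref{eq_homo_1}. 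Differentiating it in $x_j$ gives $t(\partial_j R_f)(tx)=(\partial_j R_f)(x)$, and an easy induction on $|\alpha|\le m$ (which is allowed since $R_f\in C^m$ by hypothesis) produces \eqref{eq_homo_2}.

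For (c), radiality of $f$ forces $R_f$ to be rotationally invariant, so $R_f(x)=g(|x|)$ for some function $g\colon(0,\infty)\to\C$. Specialising the scaling identity of (b) to $|x|=1$ gives $g(t)=g(1)-\ln t$ for every $t>0$, hence $R_f(x)=g(1)-\ln|x|$ on $\R^d\setminus\{0\}$. This expression is manifestly $C^\infty$ there, and its gradient is $-x/|x|^2$, which matches $-x^{-2}x$ in the notation of the statement. I expect the only genuinely delicate point of the whole proof to be the construction of the uniform dominating function in (a); once that is in place, parts (b) and (c) reduce to bookkeeping around the Haar substitution and the resulting explicit logarithmic formula.
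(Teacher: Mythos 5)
Your argument is correct, and there is nothing in the paper to compare it against directly: the paper states the lemma without proof and points the reader to \cite[Sec.~2]{Tie09}, so the proof there is the intended one, and your route (dominated convergence for (a), a Haar-measure substitution giving a scaling law for (b)--(c)) is essentially the standard argument one expects to find in that reference.

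Two remarks worth recording. First, the key identity you derive in (b),
$R_f(tx)=R_f(x)-\ln t$ for all $t>0$ and $x\ne 0$, holds for \emph{any} $f$
satisfying Assumption \ref{ass_f}, requiring no regularity of $R_f$ at all since
it follows purely from Haar invariance of $\tfrac{\d\mu}{\mu}$ in an absolutely
convergent integral; the $C^m$ hypothesis enters only when you differentiate this
identity to deduce \eqref{eq_homo_1}--\eqref{eq_homo_2}. This unified scaling law
then makes (c) immediate, since rotational invariance plus the scaling law forces
$R_f(x)=R_f(x_0)-\ln|x|$ with $|x_0|=1$, with no need to invoke (a). This is a
cleaner organization than a separate proof for each part. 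Second, a very minor
notational point in (a): after applying the Mean Value Theorem to
$s\mapsto f(\mu(x+se_j))$, a factor of $\mu$ appears that exactly cancels the
$\tfrac1\mu$ of the Haar measure, which is implicit in your phrase ``the integrand
equals $(\partial_jf)(\mu(x+\theta he_j))$'' but would be worth spelling out. Your
dominating function argument and the iterated-derivative estimate for
$f\in\SS(\R^d)$ (each extra derivative contributes one power of $\mu$, absorbed by
the rapid decay of $\partial^\alpha f$) are both correct.
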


Obviously, one can show as in Lemma \ref{lemma_R_f}(a) that
$R_f\in C^m(\R^d\setminus\{0\})$ if $(\partial^\alpha f)(x)$ exists for all $x\in\R^d$
and $\alpha\in\N^d$ with $|\alpha|\le m$, and
$|(\partial^\alpha f)(x)|\le{\rm Const.}\langle x\rangle^{-(|\alpha|+\rho')}$ some
$\rho'>0$. However, this is not a necessary condition. In some cases (as in Lemma 
\ref{lemma_R_f}(c)), the function $R_f$ is very regular outside the point $0$ even if
$f$ is not continuous.

%--------------------------------------------------------------------------------------
\subsection{Proof of the summation formula}\label{section_proof}
%--------------------------------------------------------------------------------------

In the sequel, we let $D$ be any self-adjoint operator in $\H_0$ satisfying the
following: $D$ and $U_0$ strongly commute, and if $\rho\in C^\infty_{\rm c}(\R)$, then
there exists $\eta\in C^\infty_{\rm c}(\R)$ such that
$\rho(D)=\rho(D)\eta\big(V^2\big)$. Obviously, the simplest choice is to take $D=V^2$,
but in certain cases other choices can be more convenient. For instance, when $U_0$ is
the time-one propagator of some self-adjoint operator $H_0$, that is,
$U_0=\e^{-iH_0}$, it can be more advantageous to take $D=H_0$ (see Section
\ref{section_symmetrised} for more comments on this). With the operator $D$ at hand,
we define for each $t\ge0$ the set
$$
\dom_t:=\left\{\varphi\in\dom(\langle Q\rangle^t)
\mid\varphi=\zeta(U_0)\varphi=\rho(D)\varphi
\hbox{ for some $\zeta\in C^\infty_{\rm c}\big(\S^1\setminus\kappa(U_0)\big)$ and
$\rho\in C^\infty_{\rm c}(\R)$}\right\}.
$$
The sets $\dom_t$ are well-defined because the set of critital values $\kappa(U_0)$ is
closed due to Lemma \ref{lemma_kappa}(a). Furthermore, we have
$\dom_{t_1}\subset\dom_{t_2}$ if $t_1\ge t_2$, and Theorem \ref{thm_spectrum}(a)
implies that $\dom_t$ is included in the subspace $\H_{\rm ac}(U_0)$ of absolute
continuity of $U_0$.

In the next proposition, we define the operator $T_f$. For that purpose, we consider
the operators $V'_{jk}$ as the components of a $d$-dimensional (Hessian) matrix which
we denote by $V'$ ($V'^\intercal$ stands for its matrix transpose). Also, we use
sometimes the notation $C^{-1}$ for an operator $C$ a priori not invertible. In such a
case, the operator $C^{-1}$ is restricted to a set where it is well-defined.

\begin{Proposition}[Operator $T_f$]\label{prop_T_f}
Let Assumptions \ref{ass_regularity}, \ref{ass_commute} and \ref{ass_f} be satisfied,
and assume that $R_f\in C^1(\R^d\setminus\{0\})$. Then, the map
$$
t_f:\dom_1\to\C,~~\varphi\mapsto
\tfrac12\sum_{j=1}^d\left(\big\langle Q_j\varphi,
\big(\partial_jR_f\big)(V)\varphi\big\rangle_{\H_0}
+\big\langle\big(\partial_jR_{\overline f}\big)(V)\varphi,
Q_j\varphi\big\rangle_{\H_0}\right),
$$
is well-defined. Moreover, if $\big(\partial_jR_f\big)(V)\varphi\in\dom(Q_j)$ for each
$j\in\{1,\ldots,d\}$, then the operator
\begin{equation}\label{eq_T_f}
T_f\;\!\varphi:=\tfrac12\left(Q\cdot(\nabla R_f)(V)
+(\nabla R_f)\big(\tfrac V{|V|}\big)\cdot Q\;\!|V|^{-1}
+i(\nabla R_f)\big(\tfrac V{|V|}\big)\cdot
\big(V'^\intercal V\big)|V|^{-3}\right)\varphi,\quad\varphi\in\dom_1,
\end{equation}
satisfies $t_f(\varphi)=\big\langle\varphi,T_f\;\!\varphi\big\rangle_{\H_0}$ for each
$\varphi\in\dom_1$. In particular, $T_f$ is a symmetric operator if $f$ is real and
$\dom_1$ is dense in $\H_0$.
\end{Proposition}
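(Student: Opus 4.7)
The plan is to verify the three claims (well-definedness of $t_f$, the formula $t_f(\varphi)=\langle\varphi,T_f\varphi\rangle_{\H_0}$, and symmetry) in order, leaning on the spectral localisation $\varphi=\zeta(U_0)\varphi$ available in $\dom_1$ to remove the singularity of $R_f$ at the origin, and on the homogeneity identities in Lemma \ref{lemma_R_f}(b) together with the strong commutation from Lemma \ref{lemma_commute} to justify the commutator calculations.

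First I would establish well-definedness. Given $\varphi\in\dom_1$, pick $\zeta\in C^\infty_{\rm c}(\S^1\setminus\kappa(U_0))$ with $\zeta(U_0)\varphi=\varphi$. By Lemma \ref{lemma_kappa}(c) applied to $\Theta=\mathrm{supp}(\zeta)$, there exists $a>0$ with $E^{U_0}(\Theta)=E^{V^2}([a,\infty))E^{U_0}(\Theta)$, so $\varphi=E^{V^2}([a,\infty))\varphi$. The homogeneity relation \eqref{eq_homo_2} gives $(\partial_jR_f)(x)=|x|^{-1}(\partial_jR_f)(x/|x|)$, so $(\partial_jR_f)$ is bounded on $\{|x|\ge\sqrt a\}$, and hence $(\partial_jR_f)(V)\varphi\in\H_0$. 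Since $\varphi\in\dom(\langle Q\rangle)\subset\dom(Q_j)$, both inner products defining $t_f(\varphi)$ are finite.

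Next I would derive the formula for $T_f$. Strong commutation of $V_1,\dots,V_d$ and the identity $R_{\overline f}=\overline{R_f}$ yield $(\partial_jR_{\overline f})(V)=(\partial_jR_f)(V)^*$ via the joint functional calculus. Under the hypothesis $(\partial_jR_f)(V)\varphi\in\dom(Q_j)$, one may move the position operator across each inner product to obtain
\[
t_f(\varphi)=\tfrac12\big\langle\varphi,\big(Q\cdot(\nabla R_f)(V)+(\nabla R_f)(V)\cdot Q\big)\varphi\big\rangle_{\H_0}.
\]
Now using $(\partial_jR_f)(V)=|V|^{-1}(\partial_jR_f)(V/|V|)$ (valid on $E^{V^2}([a,\infty))\H_0$) and the fact that $|V|^{-1}$ commutes strongly with $(\partial_jR_f)(V/|V|)$, the term $(\nabla R_f)(V)\cdot Q$ can be rewritten as
\[
(\nabla R_f)(V/|V|)\cdot Q\;\!|V|^{-1}+(\nabla R_f)(V/|V|)\cdot\big[|V|^{-1},Q\big].
\]
The commutator $\big[|V|^{-1},Q_j\big]$ is computed from the representation $|V|^{-1}=\pi^{-1}\int_0^\infty\d\lambda\,\lambda^{-1/2}(V^2+\lambda)^{-1}$ combined with $[V^2,Q_j]=-2i(V'^\intercal V)_j$; the latter uses $[V_k,Q_j]=-iV'_{kj}$, the strong commutation of $V_k$ with $V'_{kj}$ (Lemma \ref{lemma_commute}), and the symmetry $V'_{kj}=V'_{jk}$ (Lemma \ref{lemma_V}(e)). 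The resulting integral evaluates to $[|V|^{-1},Q_j]=i|V|^{-3}(V'^\intercal V)_j$, which upon substitution produces exactly the third summand in \eqref{eq_T_f}.

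For the last claim, if $f$ is real then $R_f$ is real, whence $(\partial_jR_f)(V)$ is self-adjoint on the relevant spectral subspace, and the two inner products defining $t_f(\varphi)$ are complex conjugates of each other. Hence $t_f(\varphi)\in\R$ for every $\varphi\in\dom_1$, and polarisation of the sesquilinear form associated with the (now hermitian) quadratic form $t_f$ shows that $\langle T_f\varphi,\psi\rangle_{\H_0}=\langle\varphi,T_f\psi\rangle_{\H_0}$ on $\dom_1$; density of $\dom_1$ in $\H_0$ yields that $T_f$ is a symmetric operator. The main obstacle I anticipate is the rigorous justification of the commutator $\big[|V|^{-1},Q_j\big]$: one must be sure that the Stieltjes integral representation of $|V|^{-1}$ can be differentiated inside the integral against $Q_j$ in the form sense on $\dom_1$, which is where the strong commutation relations of Lemma \ref{lemma_commute} and the spectral gap $|V|\ge\sqrt a$ on $\dom_1$ are essential.
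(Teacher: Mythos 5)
Your well-definedness argument (extract the spectral gap $V^2\ge a>0$ from Lemma \ref{lemma_kappa}(c), then use the homogeneity \eqref{eq_homo_2}) and your symmetry argument coincide with the paper's, and your evaluation of the commutator $\big[|V|^{-1},Q_j\big]=i\,|V|^{-3}\big(V'^\intercal V\big)_j$ from the Stieltjes representation of $|V|^{-1}$ is exactly the computation the paper performs via Davies' formula $\cite[Eq.~4.3.2]{Dav95}$.

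The gap is the intermediate identity
\[
t_f(\varphi)=\tfrac12\big\langle\varphi,\big(Q\cdot(\nabla R_f)(V)+(\nabla R_f)(V)\cdot Q\big)\varphi\big\rangle_{\H_0}.
\]
Writing the second summand requires $Q_j\varphi\in\dom\big((\partial_jR_f)(V)\big)$, which is not available: the spectral localisation $\varphi=E^{V^2}\big([a,\infty)\big)\varphi$ is not preserved by $Q_j$ (since $Q_j$ and $V$ do not commute), so $Q_j\varphi$ may carry spectral weight near $V=0$ where $(\partial_jR_f)(V)$ is unbounded. The hypothesis $(\partial_jR_f)(V)\varphi\in\dom(Q_j)$ lets you move $Q_j$ across the \emph{first} inner product only; to pass from $\big\langle(\partial_jR_{\overline f})(V)\varphi,Q_j\varphi\big\rangle$ to $\big\langle\varphi,(\partial_jR_f)(V)Q_j\varphi\big\rangle$ you must instead move the function of $V$, which needs the unproved domain inclusion. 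Indeed Remark \ref{remark_T_f}(a) in the paper explicitly states that the operator $\tfrac12\big(Q\cdot(\nabla R_f)(V)+(\nabla R_f)(V)\cdot Q\big)$ has no precise meaning at this level of generality — the subsequent commutator decomposition therefore cannot legitimately be applied to its second term.

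The paper avoids the ill-defined expression altogether: it keeps the quantity in the quadratic-form pairing $\big\langle(\partial_jR_{\overline f})(V)\varphi,Q_j\varphi\big\rangle$, shifts the homogeneity to the \emph{left} vector so that $(\partial_jR_{\overline f})(V/|V|)\varphi$ is bounded, and then regularises $|V|^{-1}$ as $(V^2+\varepsilon)^{-1/2}$ and $Q_j$ as $Q_j(1+i\delta Q_j)^{-1}$, so that every object in sight lies in $\B(\H_0)$ \emph{before} the commutator is extracted via Davies' formula; the limits $\delta\to 0$ and $\varepsilon\searrow 0$ are taken last. Your worry about differentiating the Stieltjes integral is well placed but secondary: the essential repair is that the commutator decomposition must be carried out on the regularised quadratic form rather than on the operator $(\nabla R_f)(V)\cdot Q$, which never becomes well-defined.
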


\begin{Remark}\label{remark_T_f}
(a) The operator on the r.h.s. of \eqref{eq_T_f} is rather complicated, and one could
be tempted to replace it by the simpler operator
$\frac12\big(Q\cdot(\nabla R_f)(V)+(\nabla R_f)(V)\cdot Q\big)$. Unfortunately, a
precise meaning for this operator is not available at this level of generality; it can
be rigorously defined only in concrete examples.

(b) If $\varphi\in\dom_1$ and $f$ either belongs to $\SS(\R^d)$ or is radial, then the
assumption $\big(\partial_jR_f\big)(V)\varphi\in\dom(Q_j)$ holds for each
$j\in\{1,\ldots,d\}$. Indeed, due to Lemma \ref{lemma_kappa}(c) and the definition of
$\dom_1$, there exists $\eta\in C^\infty_{\rm c}\big((0,\infty)\big)$ such that
$
\big(\partial_jR_f\big)(V)\varphi
=\big(\partial_jR_f\big)(V)\eta\big(V^2\big)\varphi
$,
and we have the inclusion $\big(\partial_jR_f\big)(V)\eta\big(V^2\big)\in C^1(Q_j)$
due to Lemma \ref{lemma_R_f} and \cite[Prop.~5.1]{RT12_0}. Thus,
$\big(\partial_jR_f\big)(V)\varphi\in\dom(Q_j)$.
\end{Remark}

\begin{proof}[Proof of Proposition \ref{prop_T_f}]
Let $\varphi\in\dom_1$. Then, there exists
$\eta\in C^\infty_{\rm c}\big((0,\infty)\big)$ such that
$
\big(\partial_jR_f\big)(V)\varphi
=\big(\partial_jR_f\big)(V)\eta\big(V^2\big)\varphi
$.
Thus, we have
$
\big\|\big(\partial_jR_f\big)(V)\varphi\big\|_{\H_0}
\le{\rm Const.}\;\!\|\varphi\|_{\H_0}
$
as in Remark \ref{remark_T_f}(b), and we obtain
$$
\big|t_f(\varphi)\big|
\le{\rm Const.}\;\!\|\varphi\|_{\H_0}\;\!\|\langle Q\rangle\varphi\|_{\H_0},
$$
which implies the first part of the proposition. For the second part, it is sufficient
to show that
$$
\sum_{j=1}^d\big\langle\big(\partial_jR_{\overline f}\big)(V)\varphi,
Q_j\varphi\big\rangle_{\H_0}
=\left\langle\varphi,\left((\nabla R_f)\big(\tfrac V{|V|}\big)
\cdot Q\;\!|V|^{-1}+i(\nabla R_f)\big(\tfrac V{|V|}\big)
\cdot\big(V'^\intercal V\big)|V|^{-3}\right)\varphi\right\rangle_{\H_0}.
$$
Using Formula \eqref{eq_homo_2} and \cite[Eq.~4.3.2]{Dav95}, we get
\begin{align}
&\sum_{j=1}^d\big\langle\big(\partial_jR_{\overline f}\big)(V)\varphi,
Q_j\varphi\big\rangle_{\H_0}\nonumber\\
&=\sum_{j=1}^d\left\langle(\partial_jR_{\overline f})\big(\tfrac V{|V|}\big)
|V|^{-1}\varphi,Q_j\varphi\right\rangle_{\H_0}\nonumber\\
&=\sum_{j=1}^d\lim_{\varepsilon\searrow0}\;\!\lim_{\delta\to0}
\left\langle\big(\partial_jR_{\overline f}\big)\big(\tfrac V{|V|}\big)\varphi,
\big(V^2+\varepsilon\big)^{-1/2}Q_j\big(1+i\delta Q_j\big)^{-1}\varphi
\right\rangle_{\H_0}\nonumber\\
&=\left\langle\varphi,(\nabla R_f)\big(\tfrac V{|V|}\big)\cdot Q\;\!|V|^{-1}
\varphi\right\rangle_{\H_0}\nonumber\\
&\quad+\pi^{-1}\sum_{j=1}^d\lim_{\varepsilon\searrow0}\;\!\lim_{\delta\to0}
\int_0^\infty\d t\,t^{-1/2}\left\langle
\big(\partial_jR_{\overline f}\big)\big(\tfrac V{|V|}\big)\varphi,
\left[\big(V^2+\varepsilon+t\big)^{-1},Q_j\big(1+i\delta Q_j\big)^{-1}\right]
\varphi\right\rangle_{\H_0}.\label{eq_Davies}
\end{align}
Now, using Assumption \ref{ass_regularity} and the fact that
$\varphi=\eta\big(V^2\big)\varphi$ with
$\eta\in C^\infty_{\rm c}\big((0,\infty)\big)$, we obtain that
$$
\lim_{\delta\to0}\left[\big(V^2+\varepsilon+t\big)^{-1},
Q_j\big(1+i\delta Q_j\big)^{-1}\right]\varphi
=2i\big(V^2+\varepsilon+t\big)^{-2}\big(V'^\intercal V\big)_j\;\!\varphi.
$$
So,
\begin{align*}
\eqref{eq_Davies}
&=\pi^{-1}\sum_{j=1}^d\lim_{\varepsilon\searrow0}\int_0^\infty\d t\,t^{-1/2}
\left\langle\big(\partial_jR_{\overline f}\big)\big(\tfrac V{|V|}\big)\varphi,
2i\big(V^2+\varepsilon+t\big)^{-2}\big(V'^\intercal V\big)_j
\varphi\right\rangle_{\H_0}\\
&=\sum_{j=1}^d\lim_{\varepsilon\searrow0}
\left\langle\big(\partial_jR_{\overline f}\big)\big(\tfrac V{|V|}\big)\varphi,
i\big(V^2+\varepsilon\big)^{-3/2}\big(V'^\intercal V\big)_j
\varphi\right\rangle_{\H_0}\\
&=\left\langle\varphi,i(\nabla R_f)\big(\tfrac V{|V|}\big)
\cdot\big(V'^\intercal V\big)\;\!|V|^{-3}\varphi\right\rangle_{\H_0},
\end{align*}
and thus
$$
\sum_{j=1}^d\big\langle\big(\partial_jR_{\overline f}\big)(V)\varphi,
Q_j\varphi\big\rangle_{\H_0}
=\left\langle\varphi,\left((\nabla R_f)\big(\tfrac V{|V|}\big)
\cdot Q\;\!|V|^{-1}+i(\nabla R_f)\big(\tfrac V{|V|}\big)
\cdot\big(V'^\intercal V\big)|V|^{-3}\right)\varphi\right\rangle_{\H_0}.
$$
\end{proof}

If $f$ is radial, then $\big(\partial_jR_f\big)(x)=-x^{-2}x_j$ due to Lemma
\ref{lemma_R_f}(c), and Formula \eqref{eq_T_f} holds by Remark \ref{remark_T_f}(b).
Thus,
\begin{equation}\label{eq_T}
T_f=T:=-\tfrac12\left(Q\cdot\tfrac V{V^2}+\tfrac V{|V|}\cdot Q\;\!|V|^{-1}
+\tfrac{i\;\!V}{V^4}\cdot\big(V'^\intercal V\big)\right)
\quad\hbox{on}\quad\dom_1.
\end{equation}

In the next lemma, we establish identities necessary for the proof of the main theorem
of this section. We use the symbol $\F$ for the Fourier transformation on $\R^d$, and
the symbol $\underline\d x$ for the measure on $\R^d$ making $\F$ a unitary operator
in $\ltwo(\R^d,\underline\d x)$.

\begin{Lemma}\label{lemma_trotter}
Let Assumptions \ref{ass_regularity} and \ref{ass_commute} be satisfied.
\begin{enumerate}
\item[(a)] For each compact set $I\subset\R$, $f\in\SS(\R^d)$, $n\in\Z$ and
$\nu>0$, we have the identities
\begin{align}
U_0^{-n}f(\nu Q)U_0^nE^{V^2}(I)
&=\int_{\R^d}\underline\d x\,(\F f)(x)\e^{i\nu x\cdot Q}
\e^{in\int_0^\nu\d s\,(x\cdot V(sx))E^{V^2}(I)}E^{V^2}(I),\label{eq_Trotter_1}\\
E^{V^2}(I)U_0^nf(\nu Q)U_0^{-n}
&=E^{V^2}(I)\int_{\R^d}\underline\d x\,(\F f)(x)
\e^{-in\int_{-\nu}^0\d s\,(x\cdot V(sx))E^{V^2}(I)}
\e^{i\nu x\cdot Q},\label{eq_Trotter_2}
\end{align}
with $\int_0^\nu\d s\,\big(x\cdot V(sx)\big)E^{V^2}(I)$ the Bochner integral of the map
$$
[0,\nu]\ni s\mapsto\big(x\cdot V(sx)\big)E^{V^2}(I)\in\B(\H_0).
$$
\item[(b)] For each compact set $I\subset\R$, $k\in\{1,\ldots,d\}$, $x\in\R^d$ and
$\nu>0$, we have the identity
$$
\frac\d{\d x_k}\int_0^\nu\d s\,\big(x\cdot V(sx)\big)E^{V^2}(I)
=\nu\;\!V_k(\nu x)E^{V^2}(I).
$$
with $\frac\d{\d x_k}$ the derivative in the topology of $\B(\H_0)$.
\end{enumerate}
\end{Lemma}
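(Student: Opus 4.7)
The plan is to reduce both identities in (a) to a single conjugation identity combined with Fourier inversion, and to handle (b) by differentiating the Bochner integral directly. I would first write $f(\nu Q)=\int_{\R^d}\underline\d x\,(\F f)(x)\e^{i\nu x\cdot Q}$ as a Bochner integral, which is legitimate since $\F f\in\SS(\R^d)\subset\lone(\R^d)$. The first algebraic task is then the identity
$$
U_0^{-n}\e^{i\nu x\cdot Q}U_0^n
=\e^{i\nu x\cdot Q}\bigl(U_0(\nu x)U_0^{-1}\bigr)^{-n},
$$
which I would derive from $U_0(-\nu x)=\e^{i\nu x\cdot Q}U_0\e^{-i\nu x\cdot Q}$ (yielding $\e^{i\nu x\cdot Q}U_0^n=U_0(-\nu x)^n\e^{i\nu x\cdot Q}$), the conjugation $\e^{-i\nu x\cdot Q}U_0^{-1}\e^{i\nu x\cdot Q}=U_0(\nu x)^{-1}$, and the strong commutation of $U_0$ with $U_0(\pm\nu x)$ provided by Assumption \ref{ass_commute}.

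The core step is to identify $\bigl(U_0(\nu x)U_0^{-1}\bigr)^{-n}E^{V^2}(I)$ with the exponential of the Bochner integral. On the invariant subspace $E^{V^2}(I)\H_0$ the operator $V$ is bounded (since $I$ is compact), and Lemma \ref{lemma_commute} shows that the whole family $\{V_j(sx)\}$ mutually commutes and commutes with $V^2$. Lemma \ref{lemma_V}(a) combined with the chain rule supplies the strong differential equation
$$
\tfrac{\d}{\d\nu}U_0(\nu x)U_0^{-1}
=-i\bigl(x\cdot V(\nu x)\bigr)U_0(\nu x)U_0^{-1},
\qquad U_0(0)U_0^{-1}=1,
$$
which on $E^{V^2}(I)\H_0$ becomes a bounded-operator ODE whose unique solution is $\e^{-i\int_0^\nu\d s\,(x\cdot V(sx))E^{V^2}(I)}E^{V^2}(I)$. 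Raising to the $(-n)$-th power and substituting back into the Fourier inversion formula, then interchanging the Bochner integral with $U_0^{\pm n}$ (justified by the Schwartz decay of $\F f$), produces (\ref{eq_Trotter_1}). The mirror identity (\ref{eq_Trotter_2}) is obtained by running the same procedure on $U_0^n\e^{i\nu x\cdot Q}U_0^{-n}$: conjugating by $\e^{\pm i\nu x\cdot Q}$ converts the relevant factor into $\bigl(U_0(-\nu x)U_0^{-1}\bigr)^{-n}$, and the ODE identification with $\nu$ replaced by $-\nu$ produces the integral from $-\nu$ to $0$.

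For part (b), I would differentiate the Bochner integral under the integral sign, justified by the boundedness and smoothness of the integrand on $E^{V^2}(I)\H_0$. Using $V_j(sx)=\e^{-isx\cdot Q}V_j\e^{isx\cdot Q}$ together with the extension $V'_{jk}=\overline{i[V_j,Q_k]\upharpoonright\dom(V_j)}$ from Assumption \ref{ass_regularity}, one obtains
$$
\tfrac{\d}{\d x_k}V_j(sx)E^{V^2}(I)=sV'_{jk}(sx)E^{V^2}(I),
$$
and consequently
$$
\tfrac{\d}{\d x_k}\bigl(x\cdot V(sx)\bigr)E^{V^2}(I)
=V_k(sx)E^{V^2}(I)+s\sum_{j=1}^dx_jV'_{jk}(sx)E^{V^2}(I).
$$
The punchline is the symmetry $V'_{jk}=V'_{kj}$ from Lemma \ref{lemma_V}(e), which identifies the right-hand side with $\frac{\d}{\d s}\bigl(sV_k(sx)E^{V^2}(I)\bigr)$. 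The fundamental theorem of calculus then collapses $\int_0^\nu$ to $\nu V_k(\nu x)E^{V^2}(I)$.

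The main obstacle is the ODE identification in the core step of (a): one has to check that $s\mapsto(x\cdot V(sx))E^{V^2}(I)$ is a norm-continuous family of bounded self-adjoint operators on $E^{V^2}(I)\H_0$ (so that the Bochner integral is well-defined), that the strong commutativity of Lemma \ref{lemma_commute} lifts to a joint functional calculus (so that the exponential of the Bochner integral genuinely solves the operator ODE in the norm topology), and that the strong derivative supplied by Lemma \ref{lemma_V}(a) on the core $\D$ upgrades to a norm-derivative identity once restricted to the spectral subspace.
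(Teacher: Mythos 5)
Your proof of part (b) is essentially the same as the paper's: the paper computes $\int_0^\nu\d s\,V_k(sx)E^{V^2}(I)+\int_0^\nu\d s\,s\sum_jx_jV'_{kj}(sx)E^{V^2}(I)$ and integrates by parts, while you repackage the same symmetry $V'_{jk}=V'_{kj}$ as $\frac{\d}{\d s}\bigl(sV_k(sx)E^{V^2}(I)\bigr)$ and invoke the fundamental theorem of calculus; the two are identical in substance.

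Part (a) is a genuinely different route. The paper starts from $U_0^{-n}f(\nu Q)U_0^n=f\bigl(\nu U_0^{-n}QU_0^n\bigr)$, inserts the closed-operator identity $U_0^{-n}(x\cdot Q)U_0^n=\overline{x\cdot Q+n(x\cdot V)}$ from Lemma \ref{lemma_V}(d), and then applies the Trotter--Kato product formula together with an explicit induction showing
$\bigl(\e^{i\nu(x\cdot Q)/m}\e^{i\nu n(x\cdot V)/m}\bigr)^m=\e^{i\nu(x\cdot Q)}\e^{\frac{i\nu n}m\sum_{\ell=1}^m x\cdot V(\frac{(\ell-1)\nu}m x)}$,
finally passing to the limit $m\to\infty$ in which the Riemann sum becomes the Bochner integral. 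You instead expand $f(\nu Q)$ by Fourier inversion, prove the conjugation identity $U_0^{-n}\e^{i\nu x\cdot Q}U_0^n=\e^{i\nu x\cdot Q}\bigl(U_0(\nu x)U_0^{-1}\bigr)^{-n}$ directly from Assumption \ref{ass_commute}, and then identify $U_0(\nu x)U_0^{-1}E^{V^2}(I)$ with $\e^{-i\int_0^\nu\d s\,(x\cdot V(sx))E^{V^2}(I)}E^{V^2}(I)$ by uniqueness of solutions of a bounded-operator ODE. This avoids both the Trotter--Kato formula and the inductive product computation, which is arguably cleaner. A remark on the three ``obstacles'' you flag: the first (boundedness of $(x\cdot V(sx))E^{V^2}(I)$ and norm-continuity in $s$) is implicitly asserted by the paper as well when it takes the uniform limit of the Riemann sums, so it is not a deficiency specific to your route; the second (joint functional calculus from the strong commutativity of Lemma \ref{lemma_commute}) is standard; and the third is slightly misdiagnosed --- you do not need to upgrade the strong derivative to a norm derivative. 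For $\varphi\in\D$ set $y_\varphi(\nu):=E^{V^2}(I)U_0(\nu x)U_0^{-1}\varphi$; then Lemma \ref{lemma_V}(a), the chain rule, and the strong commutation of $E^{V^2}(I)$ with $V(\nu x)$ and $U_0(\nu x)U_0^{-1}$ give the vector-valued ODE $y_\varphi'(\nu)=-iA(\nu)y_\varphi(\nu)$ with $A(\nu)=(x\cdot V(\nu x))E^{V^2}(I)$ bounded, whence $y_\varphi(\nu)=\e^{-i\int_0^\nu A(s)\d s}E^{V^2}(I)\varphi$ by uniqueness, and density of $\D$ gives the operator identity. So your proposal is correct; the ``obstacles'' reduce to the same implicit hypothesis the paper uses plus routine details.
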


\begin{proof}
(a) Using functional calculus, we obtain
$$
U_0^{-n}f(\nu Q)U_0^n
=f\big(\nu U_0^{-n}QU_0^n\big)
=\int_{\R^d}\underline\d x\,(\F f)(x)\e^{i\nu x\cdot(U_0^{-n}QU_0^n)}
=\int_{\R^d}\underline\d x\,(\F f)(x)\e^{i\nu U_0^{-n}(x\cdot Q)U_0^n}.
$$
Moreover, we know from Lemma \ref{lemma_V}(d) that
$$
U_0^{-n}(x\cdot Q)U_0^n
=\overline{\big(x\cdot Q+n\;\!(x\cdot V)\big)
\upharpoonright\dom(x\cdot Q)\cap\dom(x\cdot V)}.
$$
Thus, it follows by the Trotter-Kato formula \cite[Thm.~VIII.31]{RS80} that
$$
U_0^{-n}f(\nu Q)U_0^{-n}
=\int_{\R^d}\underline\d x\,(\F f)(x)\slim_{m\to\infty}
\left(\e^{i\nu(x\cdot Q)/m}\e^{i\nu n(x\cdot V)/m}\right)^m.
$$
Now, an induction argument shows that
$
\left(\e^{i\nu(x\cdot Q)/m}\e^{i\nu n(x\cdot V)/m}\right)^m
=\e^{i\nu(x\cdot Q)}\e^{\frac{i\nu n}m\sum_{\ell=1}^m
x\cdot V\big(\frac{(\ell-1)\nu}m\;\!x\big)}
$.
Indeed, for $m=1$ the claim is trivial. For $m-1\ge1$, we assume that the claim is
true. Then, for $m\ge3$ the change of variable $y:=\frac{m-1}m\;\!x$ and the induction
hypothesis imply that
\begin{align*}
\left(\e^{i\nu(x\cdot Q)/m}\e^{i\nu n(x\cdot V)/m}\right)^m
&=\left(\e^{i\nu(y\cdot Q)/(m-1)}\e^{i\nu n(y\cdot V)/(m-1)}\right)^{m-1}
\e^{i\nu(x\cdot Q)/m}\e^{i\nu n(x\cdot V)/m}\\
&=\e^{i\nu(y\cdot Q)}\e^{\frac{i\nu n}{m-1}\sum_{\ell=1}^{m-1}
y\cdot V\big(\frac{(\ell-1)\nu}{m-1}y\big)}
\e^{i\nu(x\cdot Q)/m}\e^{i\nu n(x\cdot V)/m}\\
&=\e^{i\nu(m-1)(x\cdot Q)/m}\e^{\frac{i\nu n}m\sum_{\ell=2}^m
x\cdot V\big(\frac{(\ell-2)\nu}mx\big)}
\e^{i\nu(x\cdot Q)/m}\e^{i\nu n(x\cdot V)/m}\\
&=\e^{i\nu(x\cdot Q)}\e^{\frac{i\nu n}m\sum_{\ell=2}^m
x\cdot V\big(\frac{(\ell-2)\nu}mx+\frac\nu m\;\!x\big)}\e^{i\nu n(x\cdot V)/m}\\
&=\e^{i\nu(x\cdot Q)}\e^{\frac{i\nu n}m\sum_{\ell=1}^m
x\cdot V\big(\frac{(\ell-1)\nu}mx\big)}.
\end{align*}
Thus,
\begin{align*}
U_0^{-n}f(\nu Q)U_0^nE^{V^2}(I)
=\int_{\R^d}\underline\d x\,(\F f)(x)\e^{i\nu(x\cdot Q)}\slim_{m\to\infty}
\e^{\frac{i\nu n}m\sum_{\ell=1}^mx\cdot V\big(\frac{(\ell-1)\nu}mx\big)}E^{V^2}(I).
\end{align*}
But, using the continuity of the map $\B(\H_0)\ni B\mapsto\e^B\in\B(\H_0)$, the mutual
strong commutation of the operators $V_j(\;\!\cdot\;\!)$ and the boundedness of the
operator $\big(x\cdot V\big(\frac{(\ell-1)\nu}mx\big)\big)E^{V^2}(I)$, we obtain that
\begin{align*}
\slim_{m\to\infty}\e^{\frac{i\nu n}m\sum_{\ell=1}^mx\cdot V
\big(\frac{(\ell-1)\nu}mx\big)}E^{V^2}(I)
&=\e^{\ulim_{m\to\infty}\frac{i\nu n}m\sum_{\ell=1}^m
\big(x\cdot V\big(\frac{(\ell-1)\nu}mx\big)\big)E^{V^2}(I)}E^{V^2}(I)\\
&=\e^{in\int_0^\nu\d s\,(x\cdot V(sx))E^{V^2}(I)}E^{V^2}(I),
\end{align*}
with $\ulim$ the limit in the topology of $\B(\H_0)$. This concludes the proof of
\eqref{eq_Trotter_1}. The proof of \eqref{eq_Trotter_2} is similar.

(b) Let $j\in\{1,\ldots,d\}$. Then, Assumption \ref{ass_regularity} and
\cite[Prop.~5.1.2(b)]{ABG96} imply that $V_j\in C^2\big(Q,\dom(V_j),\H_0\big)$, which
in turns implies that $V_j\in C^1_{\rm u}\big(Q,\dom(V_j),\H_0\big)$ (see
\cite[Sec.~5.2.2]{ABG96}). Thus, the map
$$
\R^d\ni x\mapsto V_j(x))E^{V^2}(I)\in\B(\H_0)
$$
is differentiable in the topology of $\B(\H_0)$, with derivative
$\frac\d{\d x_k}V_j(x)E^{V^2}(I)=V_{jk}'(x)E^{V^2}(I)$. Using this fact, Lemma
\ref{lemma_V}(e) and an integration by parts, one obtains that
\begin{align*}
\frac\d{\d x_k}\int_0^\nu\d s\,\big(x\cdot V(sx)\big)E^{V^2}(I)
&=\int_0^\nu\d s\,V_k(sx)E^{V^2}(I)
+\int_0^\nu\d s\,\sum_{j=1}^dx_js\;\!V_{jk}'(sx))E^{V^2}(I)\\
&=\int_0^\nu\d s\,V_k(sx)E^{V^2}(I)
+\int_0^\nu\d s\,s\sum_{j=1}^dx_jV_{kj}'(sx))E^{V^2}(I)\\
&=\int_0^\nu\d s\,V_k(sx)E^{V^2}(I)
+\int_0^\nu\d s\,s\;\!\tfrac\d{\d s}\;\!V_k(sx)E^{V^2}(I)\\
&=\nu\;\!V_k(\nu x)E^{V^2}(I),
\end{align*}
which proves the claim.
\end{proof}

The next theorem is the main result of this section; it relates the evolution of the
localisation operator $f(Q)$ under $U_0$ to the operator $T_f$.

\begin{Theorem}[Summation formula]\label{thm_summation}
Let Assumptions \ref{ass_regularity} and \ref{ass_commute} be satisfied, and let
$f\in\SS(\R^d)$ be even and equal to $1$ on a neighbourhood of $0\in\R^d$. Then, we
have for each $\varphi\in\dom_2$
\begin{equation}\label{eq_sum}
\lim_{\nu\searrow0}\tfrac12\sum_{n\ge0}\big\langle\varphi,\big(U_0^{-n}f(\nu Q)U_0^n
-U_0^nf(\nu Q)U_0^{-n}\big)\varphi\big\rangle_{\H_0}=t_f(\varphi).
\end{equation}
\end{Theorem}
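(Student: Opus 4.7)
The plan is: apply the Trotter-Kato identities of Lemma \ref{lemma_trotter} to write the bracket under the sum as a Bochner integral over $\R^d$ against $(\F f)(x)$; use the evenness of $f$ to merge the two contributions; perform the $n\ge 0$ summation in an Abel-regularised sense; and finally extract the $\nu\searrow 0$ asymptotics to match the expression for $t_f(\varphi)$ furnished by Proposition \ref{prop_T_f}.

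First I would exploit $\varphi\in\dom_2$ to localise spectrally: there exist $\zeta\in C^\infty_{\rm c}(\S^1\setminus\kappa(U_0))$ and $\rho\in C^\infty_{\rm c}(\R)$ with $\varphi=\zeta(U_0)\rho(D)\varphi$, and the standing hypothesis on $D$ combined with Lemma \ref{lemma_kappa}(c) yields a compact $I\subset(0,\infty)$ such that $E^{V^2}(I)\varphi=\varphi$. Setting $A_x(\nu):=\int_0^\nu\d s\,(x\cdot V(sx))E^{V^2}(I)$, applying \eqref{eq_Trotter_1}--\eqref{eq_Trotter_2}, and changing variables $x\mapsto-x$ in the integral representation of the second term (legitimate since $\F f$ is even), the bracket $\langle\varphi,U_0^{-n}f(\nu Q)U_0^n\varphi\rangle-\langle\varphi,U_0^nf(\nu Q)U_0^{-n}\varphi\rangle$ collapses to $\int_{\R^d}\underline\d x\,(\F f)(x)\cdot 2i\im\langle\e^{-i\nu x\cdot Q}\varphi,\e^{inA_x(\nu)}\varphi\rangle$.

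Second, summing over $n\ge 0$ in the Abel sense via $\sum_{n\ge 0}\e^{inA_x(\nu)}=\tfrac12+\tfrac{i}2\cot(A_x(\nu)/2)$ converts the left-hand side of \eqref{eq_sum} into $\tfrac{i}2\int\underline\d x\,(\F f)(x)\bigl[\im\langle\e^{-i\nu x\cdot Q}\varphi,\varphi\rangle+\re\langle\e^{-i\nu x\cdot Q}\varphi,\cot(A_x(\nu)/2)\varphi\rangle\bigr]$. Convergence of the actual series and the validity of the Abel passage are controlled by the class of locally $U_0$-smooth operators from Theorem \ref{thm_spectrum}(b), applied to $f(\nu Q)\in\B(\dom(\langle Q\rangle^{-s}),\H_0)$ (for any $s>0$, since $f\in\SS(\R^d)$). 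To extract the $\nu\searrow 0$ limit I would expand $\e^{-i\nu x\cdot Q}\varphi=\varphi-i\nu(x\cdot Q)\varphi+O(\nu^2\|\langle Q\rangle^2\varphi\|)$ (this is precisely where $\dom_2$, not merely $\dom_1$, is required), $\cot(A_x(\nu)/2)=2/A_x(\nu)-A_x(\nu)/6+\cdots$, and $A_x(\nu)=\nu(x\cdot V)+\tfrac{\nu^2}2\sum_{j,k}x_jx_kV'_{jk}+O(\nu^3)$, the latter obtained from Lemma \ref{lemma_trotter}(b) together with the symmetry $V'_{jk}=V'_{kj}$ of Lemma \ref{lemma_V}(e).

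The $\nu^{-1}$ singular contribution is odd in $x$ and is killed by the evenness of $\F f$; the surviving finite pieces combine, through the Fourier identity $-\int\underline\d x\,(\F f)(x)x_j/(x\cdot v)=(\partial_jR_f)(v)$ (derived from Lemma \ref{lemma_R_f}(a) and Fourier inversion), into the symmetric combination $\tfrac12\bigl(\langle Q\varphi,(\nabla R_f)(V)\varphi\rangle+\langle(\nabla R_{\overline f})(V)\varphi,Q\varphi\rangle\bigr)$, while the $\nu^2$ correction to $A_x(\nu)$ produces the curvature term $\tfrac i2(\nabla R_f)(V/|V|)\cdot V'^\intercal V/|V|^3$; Proposition \ref{prop_T_f} identifies the total with $t_f(\varphi)$. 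The main obstacle is the rigorous interchange of the Abel regularisation, the $\nu\searrow 0$ limit, and the $x$-integration, which demands uniform-in-$\nu$ control of the integrands: local $U_0$-smoothness of $f(\nu Q)$ supplies the absolute Abel summability for each $\nu>0$, the Schwartz decay of $\F f$ furnishes dominating functions in $x$, and the principal-value structure at small $x\cdot V$ (even $\F f$ against odd $1/(x\cdot V)$-type kernels) tames the apparent singularity so that only the symmetric combinations producing $T_f$ persist in the limit.
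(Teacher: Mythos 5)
Your proposal shares the paper's starting point (the Trotter--Kato identities of Lemma \ref{lemma_trotter} plus the evenness of $\F f$), but then takes a genuinely different route: you want to evaluate the $n$-sum in closed (Abel) form via $\sum_{n\ge0}\e^{in\theta}=\tfrac12+\tfrac i2\cot(\theta/2)$ and then Taylor-expand in $\nu$, whereas the paper never evaluates the sum; it isolates a factor $(\e^{i\nu x\cdot Q}-1)=O(\nu)$, shows that the sum over $n$ can be replaced by an integral over $t$ with an error that vanishes as $\nu\searrow0$, rescales $\mu=\nu t$, and passes to the limit by dominated convergence after repeated integrations by parts in $x$. This difference is not cosmetic: the $(\e^{i\nu x\cdot Q}-1)$ factorisation is precisely what keeps every intermediate object bounded, and your route, which skips it, runs into genuine problems.

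Concretely, I see three gaps. First, the claimed ``collapse'' of the bracket to $2i\im\langle\e^{-i\nu x\cdot Q}\varphi,\e^{inA_x(\nu)}\varphi\rangle$ is wrong. Writing $A_x(\nu):=\int_0^\nu\d s\,(x\cdot V(sx))E^{V^2}(I)$ and substituting $x\mapsto-x$ in \eqref{eq_Trotter_2} (using $A_{-x}^-(\nu)=-A_x(\nu)$), the $n$-th summand becomes
\[
\int_{\R^d}\underline\d x\,(\F f)(x)\Big(\big\langle\e^{-i\nu x\cdot Q}\varphi,\e^{inA_x(\nu)}\varphi\big\rangle-\overline{\big\langle\e^{-i\nu x\cdot Q}\varphi,\e^{-inA_x(\nu)}\varphi\big\rangle}\Big),
\]
and the second scalar product carries $\e^{-inA_x(\nu)}$, not $\e^{+inA_x(\nu)}$; so this is not twice the imaginary part of the first scalar product, and the subsequent manipulation with $\cot$ starts from a false identity. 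Second, even setting this aside, $\cot(A_x(\nu)/2)$ is an unbounded operator for generic $x$: although $V^2$ is bounded away from $0$ on the range of $E^{V^2}(I)$, the operator $x\cdot V$ (hence $A_x(\nu)$) has $0$ in its spectrum whenever $x$ is orthogonal to a velocity direction in the support of $\varphi$, and the spectral measure of $A_x(\nu)$ for the pair $(\e^{-i\nu x\cdot Q}\varphi,\varphi)$ has no reason to vanish there. So the pointwise-in-$x$ objects $\langle\cdot,\cot(A_x(\nu)/2)\cdot\rangle$ and $\langle\cdot,2A_x(\nu)^{-1}\cdot\rangle$ in your expansion are not defined, and the ``principal value'' you invoke is not an integral of an odd function against an even one but a statement about a family of unbounded operators; making it rigorous would require an analysis of the regularity of the $x$-dependent spectral measures near $A_x(\nu)=0$ that you have not supplied. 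Third, the parity cancellation of the $\nu^{-1}$ term is not immediate: one has $A_{-x}(\nu)=-\int_0^\nu\d s\,(x\cdot V(-sx))E^{V^2}(I)$, which equals $-A_x(\nu)$ only to leading order in $\nu$, so the odd/even bookkeeping only works after further expansion and error control. All three issues are sidestepped in the paper by the split $\e^{i\nu x\cdot Q}=(\e^{i\nu x\cdot Q}-1)+1$ in step (i): the ``$+1$'' piece cancels exactly by the $x\mapsto-x$ symmetry together with the commutation of $\eta(V(\nu x)^2)$ and $\e^{inA_x(\nu)}$, and the remaining piece carries an explicit $O(\nu)$ factor that makes the sum-to-integral replacement and the $\nu\searrow0$ limit tractable via Lemma \ref{lemma_trotter}(b) and integration by parts. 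That factorisation is the essential idea your proposal is missing.
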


Note that the sum on the l.h.s. of \eqref{eq_sum} is finite for each $\nu>0$ because
$f(\nu Q)$ can be factorised as
$$
f(\nu Q)
=\big|f(\nu Q)\big|^{1/2}\cdot\hbox{sgn}\big(f(\nu Q)\big)\big|f(\nu Q)\big|^{1/2},
$$
with $\big|f(\nu Q)\big|^{1/2}$ locally $U_0$-smooth on $\S^1\setminus\kappa(U_0)$ due
to Theorem \ref{thm_spectrum}(b). Furthermore, since Remark \ref{remark_T_f}(b)
applies, the r.h.s. of \eqref{eq_sum} can also be written as the expectation value
$\big\langle\varphi,T_f\;\!\varphi\big\rangle_{\H_0}$.

\begin{proof}
(i) Let $\varphi\in\dom_2$ and $\nu>0$. Then, there exist a real function
$\eta\in C^\infty_{\rm c}\big((0,\infty)\big)$ and a compact set $I\subset(0,\infty)$
such that $\varphi=\eta\big(V^2\big)\varphi=E^{V^2}(I)\varphi$, and it follows from
Lemma \ref{lemma_trotter} that
\begin{align}
&\sum_{n\ge0}\big\langle\varphi,\big(U_0^{-n}f(\nu Q)U_0^n
-U_0^nf(\nu Q)U_0^{-n}\big)\varphi\big\rangle_{\H_0}\nonumber\\
&=\sum_{n\ge0}\int_{\R^d}\underline\d x\,(\F f)(x)
\left\langle\varphi,\left(\eta\big(V^2\big)\e^{i\nu x\cdot Q}
\e^{in\int_0^\nu\d s\,(x\cdot V(sx))E^{V^2}(I)}\right.\right.\nonumber\\
&\left.\left.\quad-\e^{-in\int_{-\nu}^0\d s\,(x\cdot V(sx))E^{V^2}(I)}
\e^{i\nu x\cdot Q}\eta\big(V^2\big)\right)\varphi\right\rangle_{\H_0}\nonumber\\
&=\sum_{n\ge0}\int_{\R^d}\underline\d x\,(\F f)(x)
\left\langle\varphi,\left(\e^{i\nu x\cdot Q}\eta\big(V(\nu x)^2\big)
\e^{in\int_0^\nu\d s\,(x\cdot V(sx))E^{V^2}(I)}\right.\right.\nonumber\\
&\left.\left.\quad-\e^{-in\int_{-\nu}^0\d s\,(x\cdot V(sx))E^{V^2}(I)}
\eta\big(V(-\nu x)^2\big)\e^{i\nu x\cdot Q}\right)
\varphi\right\rangle_{\H_0}\nonumber\\
&=\sum_{n\ge0}\int_{\R^d}\underline\d x\,(\F f)(x)
\left\langle\varphi,\left(\big(\e^{i\nu x\cdot Q}-1\big)\eta\big(V(\nu x)^2\big)
\e^{in\int_0^\nu\d s\,(x\cdot V(sx))E^{V^2}(I)}\right.\right.\nonumber\\
&\left.\left.\quad-\e^{-in\int_{-\nu}^0\d s\,(x\cdot V(sx))E^{V^2}(I)}
\eta\big(V(-\nu x)^2\big)\big(\e^{i\nu x\cdot Q}-1\big)\right)
\varphi\right\rangle_{\H_0}\nonumber\\
&\quad+\sum_{n\ge0}\int_{\R^d}\underline\d x\,(\F f)(x)
\left\langle\varphi,\left(\eta\big(V(\nu x)^2\big)
\e^{in\int_0^\nu\d s\,(x\cdot V(sx))E^{V^2}(I)}\right.\right.\nonumber\\
&\left.\left.\quad-\e^{-in\int_{-\nu}^0\d s\,(x\cdot V(sx))E^{V^2}(I)}
\eta\big(V(-\nu x)^2\big)\right)
\varphi\right\rangle_{\H_0}.\label{eq_even}
\end{align}
But, by using the change of variable $x'=-x$ and the fact $\F f$ is even, one obtains
that the second term in \eqref{eq_even} is equal to zero. Thus,

\begin{align}
&\lim_{\nu\searrow0}\sum_{n\ge0}\big\langle\varphi,\big(U_0^{-n}f(\nu Q)U_0^n
-U_0^nf(\nu Q)U_0^{-n}\big)\varphi\big\rangle_{\H_0}\nonumber\\
&=\lim_{\nu\searrow0}\sum_{n\ge0}\int_{\R^d}\underline\d x\,(\F f)(x)
\left\langle\varphi,\left(\big(\e^{i\nu x\cdot Q}-1\big)\eta\big(V(\nu x)^2\big)
\e^{in\int_0^\nu\d s\,(x\cdot V(sx))E^{V^2}(I)}\right.\right.\nonumber\\
&\left.\left.\quad-\e^{-in\int_{-\nu}^0\d s\,(x\cdot V(sx))E^{V^2}(I)}
\eta\big(V(-\nu x)^2\big)\big(\e^{i\nu x\cdot Q}-1\big)\right)
\varphi\right\rangle_{\H_0},\label{eq_Lebesgue_1}
\end{align}
and in point (ii) below we show that we can replace the sum over $n$ by an integral
over $t:$
\begin{align*}
\eqref{eq_Lebesgue_1}
&=\lim_{\nu\searrow0}\int_0^\infty\d t\int_{\R^d}\underline\d x\,(\F f)(x)
\left\langle\varphi,\left(\big(\e^{i\nu x\cdot Q}-1\big)\eta\big(V(\nu x)^2\big)
\e^{it\int_0^\nu\d s\,(x\cdot V(sx))E^{V^2}(I)}\right.\right.\\
&\left.\left.\quad-\e^{-it\int_{-\nu}^0\d s\,(x\cdot V(sx))E^{V^2}(I)}
\eta\big(V(-\nu x)^2\big)\big(\e^{i\nu x\cdot Q}-1\big)\right)
\varphi\right\rangle_{\H_0}.
\end{align*}
Thus, using the change of variable $\mu:=\nu t$, we get
\begin{align*}
&\lim_{\nu\searrow0}\sum_{n\ge0}\big\langle\varphi,\big(U_0^{-n}f(\nu Q)U_0^n
-U_0^nf(\nu Q)U_0^{-n}\big)\varphi\big\rangle_{\H_0}\\
&=\lim_{\nu\searrow0}\int_0^\infty\d\mu\int_{\R^d}\underline\d x\,
(\F f)(x)\left\langle\varphi,\left(\tfrac1\nu\big(\e^{i\nu x\cdot Q}-1\big)
\eta\big(V(\nu x)^2\big)\e^{i\frac\mu\nu\int_0^\nu\d s\,(x\cdot V(sx))E^{V^2}(I)}
\right.\right.\\
&\left.\left.\quad-\e^{-i\frac\mu\nu\int_{-\nu}^0\d s\,(x\cdot V(sx))E^{V^2}(I)}
\eta\big(V(-\nu x)^2\big)\tfrac1\nu\big(\e^{i\nu x\cdot Q}-1\big)\right)
\varphi\right\rangle_{\H_0},
\end{align*}
and in point (iii) below we show that we can exchange the limit $\lim_{\nu\searrow0}$
with the integrals over $\mu$ and $x$ in the last expression. This, together with the
fact that $\F f$ is even, Lemma \ref{lemma_R_f}(a) and Proposition \ref{prop_T_f},
implies that
\begin{align*}
&\lim_{\nu\searrow0}\sum_{n\ge0}\big\langle\varphi,\big(U_0^{-n}f(\nu Q)U_0^n
-U_0^nf(\nu Q)U_0^{-n}\big)\varphi\big\rangle_{\H_0}\\
&=i\int_0^\infty\d\mu\int_{\R^d}\underline\d x\,(\F f)(x)
\left(\left\langle\big(x\cdot Q\big)\varphi,\e^{i\mu(x\cdot V)E^{V^2}(I)}
\varphi\right\rangle_{\H_0}
-\left\langle\varphi,\e^{-i\mu(x\cdot V)E^{V^2}(I)}\big(x\cdot Q\big)
\varphi\right\rangle_{\H_0}\right)\\
&=\sum_{j=1}^d\int_0^\infty\d\mu\int_{\R^d}\underline\d x\,
\big(\F\big(\partial_jf\big)\big)(x)\left(\left\langle Q_j\varphi,
\e^{i\mu(x\cdot V)}\varphi\right\rangle_{\H_0}
+\left\langle\varphi,\e^{i\mu(x\cdot V)}Q_j\varphi\right\rangle_{\H_0}\right)\\
&=\sum_{j=1}^d\int_0^\infty\d\mu\left(\big\langle Q_j\varphi,
\big(\partial_jf\big)(\mu V)\varphi\big\rangle_{\H_0}
+\big\langle\big(\partial_j\overline f\big)(\mu V)\varphi,
Q_j\varphi\big\rangle_{\H_0}\right)\\
&=2\;\!t_f(\varphi).
\end{align*}

(ii) We show here that
\begin{align*}
&\lim_{\nu\searrow0}\sum_{n\ge0}\int_{\R^d}\underline\d x\,(\F f)(x)
\left\langle\varphi,\left(\big(\e^{i\nu x\cdot Q}-1\big)\eta\big(V(\nu x)^2\big)
\e^{in\int_0^\nu\d s\,(x\cdot V(sx))E^{V^2}(I)}\right.\right.\\
&\left.\left.\quad-\e^{-in\int_{-\nu}^0\d s\,(x\cdot V(sx))E^{V^2}(I)}
\eta\big(V(-\nu x)^2\big)\big(\e^{i\nu x\cdot Q}-1\big)\right)
\varphi\right\rangle_{\H_0}\\
&=\lim_{\nu\searrow0}\int_0^\infty\d t\int_{\R^d}\underline\d x\,(\F f)(x)
\left\langle\varphi,\left(\big(\e^{i\nu x\cdot Q}-1\big)\eta\big(V(\nu x)^2\big)
\e^{it\int_0^\nu\d s\,(x\cdot V(sx))E^{V^2}(I)}\right.\right.\\
&\left.\left.\quad-\e^{-it\int_{-\nu}^0\d s\,(x\cdot V(sx))E^{V^2}(I)}
\eta\big(V(-\nu x)^2\big)\big(\e^{i\nu x\cdot Q}-1\big)\right)
\varphi\right\rangle_{\H_0},
\end{align*}
which is equivalent to
\begin{align}\label{eq_exchange_1}
&\lim_{\nu\searrow0}\sum_{n\ge0}\int_0^1\d t\int_{\R^d}\underline\d x\,(\F f)(x)\\
&\quad\cdot\left\langle\varphi,\left(\big(\e^{i\nu x\cdot Q}-1\big)
\left(1-\e^{it\int_0^\nu\d s\,(x\cdot V(sx))E^{V^2}(I)}\right)\eta\big(V(\nu x)^2\big)
\e^{in\int_0^\nu\d s\,(x\cdot V(sx))E^{V^2}(I)}\right.\right.\nonumber\\
&\left.\left.\quad-\e^{-in\int_{-\nu}^0\d s\,(x\cdot V(sx))E^{V^2}(I)}
\eta\big(V(-\nu x)^2\big)
\left(1-\e^{-it\int_{-\nu}^0\d s\,(x\cdot V(sx))E^{V^2}(I)}\right)
\big(\e^{i\nu x\cdot Q}-1\big)\right)\varphi\right\rangle_{\H_0}=0.\nonumber
\end{align}
For this, it is sufficient to prove that we can exchange in \eqref{eq_exchange_1} the
limit $\lim_{\nu\searrow0}$ with the sum over $n$ and the integrals over $t$ and $x$.
We present the calculations only for the first term on the l.h.s. of
\eqref{eq_exchange_1}, since the second term can be handled in a similar way. So, let
\begin{align*}
T(\nu,n)&:=\int_0^1\d t\int_{\R^d}\underline\d x\,(\F f)(x)\\
&\quad\cdot\left\langle\varphi,\big(\e^{i\nu x\cdot Q}-1\big)
\left(1-\e^{it\int_0^\nu\d s\,(x\cdot V(sx))E^{V^2}(I)}\right)\eta\big(V(\nu x)^2\big)
\e^{in\int_0^\nu\d s\,(x\cdot V(sx))E^{V^2}(I)}\varphi\right\rangle_{\H_0}.
\end{align*}
Since $\F f\in\SS(\R^d)$ and
$$
\left\|\big(\e^{i\nu x\cdot Q}-1\big)
\left(1-\e^{it\int_0^\nu\d s\,(x\cdot V(sx))E^{V^2}(I)}\right)\eta\big(V(\nu x)^2\big)
\e^{in\int_0^\nu\d s\,(x\cdot V(sx))E^{V^2}(I)}\right\|_{\B(\H_0)}
\le{\rm Const.},
$$
we have that
\begin{equation}\label{eq_bound_n_0}
|T(\nu,n)|\le{\rm Const.},
\end{equation}
and thus $T(\nu,n)$ is uniformly bounded in $\nu>0$ by a function in
$\ell^1(\{1,\ldots,n_0\})$ for any $n_0\in\N^*$.

For the case $n>n_0$, let
$B_{\nu,n}^I(x):=\e^{in\int_0^\nu\d s\,(x\cdot V(sx))E^{V^2}(I)}$. Then, Lemma
\ref{lemma_trotter}(b) implies that
$$
\big(\partial_jB_{\nu,n}^I\big)(x)\varphi
=in\nu\;\!V_j(\nu x)E^{V^2}(I)B_{\nu,n}^I(x)\varphi
=in\nu\;\!V_j(\nu x)B_{\nu,n}^I(x)\varphi,
$$
and thus $T(\nu,n)$ can be written as
$$
T(\nu,n)
=\frac\nu{in}\sum_{j=1}^d\int_0^1\d t\int_{\R^d}\underline\d x\,
\left\langle\langle Q\rangle^2\varphi,A_{j,\nu,t}(x)
\big(\partial_jB_{\nu,n}^I\big)(x)\varphi\right\rangle_{\H_0},
$$
with
$$
A_{j,\nu,t}(x)
:=(\F f)(x)\;\!\tfrac1\nu\big(\e^{i\nu x\cdot Q}-1\big)\langle Q\rangle^{-2}
\;\!\tfrac1\nu\left(1-\e^{it\int_0^\nu\d s\,(x\cdot V(sx))E^{V^2}(I)}\right)
V_j(\nu x)V(\nu x)^{-2}\eta\big(V(\nu x)^2\big).
$$
Now, for each multi-index $\alpha\in\N^d$ with $|\alpha|\le2$, we have
\begin{equation}\label{eq_bound_Q}
\left\|\partial_x^\alpha\;\!\tfrac1\nu\big(\e^{i\nu x\cdot Q}-1\big)
\langle Q\rangle^{-2}\right\|_{\B(\H_0)}\le{\rm Const.}\;\!\langle x\rangle.
\end{equation}
Thus, it follows from Assumption \ref{ass_regularity}, Lemma \ref{lemma_trotter}(b),
\cite[Prop.~5.1]{RT12_0} and the rapid decay of $\F f$ that the map
$\R^d\ni x\mapsto A_{j,\nu,t}(x)\in\B(\H_0)$ is twice strongly differentiable, with
strong derivatives satisfying for all $j,k\in\{1,\ldots,d\}$ and all $m\in\N$
$$
\big\|\big(\partial_jA_{j,\nu,t}\big)(x)\big\|_{\B(\H_0)}
\le{\rm Const.}\;\!\langle t\rangle\langle x\rangle^{-m}
$$
and
\begin{equation}\label{eq_second_A}
\big\|\partial_k\big\{\big(\partial_jA_{j,\nu,t}\big)V_k(\nu\;\!\cdot\;\!)
V(\nu\;\!\cdot\;\!)^{-2}\big\}(x)\big\|_{\B(\H_0)}
\le{\rm Const.}\;\!\langle\nu\rangle\langle t\rangle^2\langle x\rangle^{-m}.
\end{equation}
So, we can perform two successive integrations by parts with vanishing boundary
contributions to get
\begin{align*}
T(\nu,n)
&=\frac{i\nu}n\sum_{j=1}^d\int_0^1\d t\int_{\R^d}\underline\d x\,
\left\langle\langle Q\rangle^2\varphi,\big(\partial_jA_{j,\nu,t}\big)(x)
B_{\nu,n}^I(x)\varphi\right\rangle_{\H_0}\\
&=-\frac1{n^2}\sum_{j,k=1}^d\int_0^1\d t\int_{\R^d}\underline\d x\,
\left\langle\langle Q\rangle^2\varphi,\partial_k\big\{\big(\partial_jA_{j,\nu,t}\big)
V_k(\nu\;\!\cdot\;\!)V(\nu\;\!\cdot\;\!)^{-2}\big\}(x)B_{\nu,n}^I(x)
\varphi\right\rangle_{\H_0}.
\end{align*}
Combining this with the bound \eqref{eq_second_A}, we get for any $\nu\in(0,1)$ and
$n>n_0$ that
$$
|T(\nu,n)|\le{\rm Const.}\;\!n^{-2}.
$$
This, together with the bound \eqref{eq_bound_n_0}, implies that that $T(\nu,n)$ is
uniformly bounded in $\nu\in(0,1)$ by a function in $\ell^1(\N)$. Thus, we can apply
Lebesgue's dominated convergence theorem to exchange the limit $\lim_{\nu\searrow0}$
with the sum over $n$ in \eqref{eq_exchange_1}. Since the exchange of the limit
$\lim_{\nu\searrow0}$ with the integrals over $t$ and $x$ in \eqref{eq_exchange_1} is
trivial, the result follows.

(iii) We show here that we can exchange the limit $\lim_{\nu\searrow0}$ with the
integrals over $\mu$ and $x$ in the expression
\begin{align}
&\lim_{\nu\searrow0}\int_0^\infty\d\mu\int_{\R^d}\underline\d x\,
(\F f)(x)\left\langle\varphi,\left(\tfrac1\nu\big(\e^{i\nu x\cdot Q}-1\big)
\eta\big(V(\nu x)^2\big)\e^{i\frac\mu\nu\int_0^\nu\d s\,(x\cdot V(sx))E^{V^2}(I)}
\right.\right.\nonumber\\
&\left.\left.\quad-\e^{-i\frac\mu\nu\int_{-\nu}^0\d s\,(x\cdot V(sx))E^{V^2}(I)}
\eta\big(V(-\nu x)^2\big)\tfrac1\nu\big(\e^{i\nu x\cdot Q}-1\big)\right)
\varphi\right\rangle_{\H_0}.\label{eq_exchange_2}
\end{align}
We present the calculations only for the first term in \eqref{eq_exchange_2}, since
the second term can be handled in a similar way. So, let
$$
\widetilde T(\nu,\mu)
:=\int_{\R^d}\underline\d x\,\left\langle\varphi,
(\F f)(x)\;\!\tfrac1\nu\big(\e^{i\nu x\cdot Q}-1\big)\eta\big(V(\nu x)^2\big)
\e^{i\frac\mu\nu\int_0^\nu\d s\,(x\cdot V(sx))E^{V^2}(I)}\varphi\right\rangle_{\H_0}
$$
Due to the bound \eqref{eq_bound_Q} and the inclusion $\F f\in\SS(\R^d)$, we have
\begin{equation}\label{eq_bound_mu}
\big|\widetilde T(\nu,\mu)\big|\le{\rm Const.},
\end{equation}
and thus $\widetilde T(\nu,\mu)$ is uniformly bounded in $\nu>0$ by a function in
$\lone\big((0,1],\d\mu\big)$.

For the case $\mu>1$, let
$
\widetilde B_{\nu,\mu}^I(x)
:=\e^{i\frac\mu\nu\int_0^\nu\d s\,(x\cdot V(sx))E^{V^2}(I)}
$.
Then, Lemma \ref{lemma_trotter}(b) implies that
$$
\big(\partial_j\widetilde B_{\nu,n}^I\big)(x)\varphi
=i\mu\;\!V_j(\nu x)E^{V^2}(I)\widetilde B_{\nu,n}^I(x)\varphi
=i\mu\;\!V_j(\nu x)\widetilde B_{\nu,n}^I(x)\varphi,
$$
and thus $\widetilde T(\nu,\mu)$ can be written as
\begin{align*}
\widetilde T(\nu,\mu)
=\frac1{i\mu}\sum_{j=1}^d\int_{\R^d}\underline\d x\,
\left\langle\langle Q\rangle^2\varphi,\widetilde A_{j,\nu}(x)
\big(\partial_j\widetilde B_{\nu,n}^I\big)(x)\varphi\right\rangle_{\H_0},
\end{align*}
with
$$
\widetilde A_{j,\nu}(x)
:=(\F f)(x)\;\!\tfrac1\nu\big(\e^{i\nu x\cdot Q}-1\big)\langle Q\rangle^{-2}
\;\!V_j(\nu x)V(\nu x)^{-2}\eta\big(V(\nu x)^2\big).
$$
Furthermore, one can show as in point (ii) that that the map
$\R^d\ni x\mapsto \widetilde A_{j,\nu}(x)\in\B(\H_0)$ is twice strongly
differentiable, with strong derivatives satisfying for all $j,k\in\{1,\ldots,d\}$ and
all $m\in\N$
$$
\big\|\big(\partial_j\widetilde A_{j,\nu}\big)(x)\big\|_{\B(\H_0)}
\le{\rm Const.}\;\!\langle x\rangle^{-m}
$$
and
\begin{equation}\label{eq_second_tilde_A}
\big\|\partial_k\big\{\big(\partial_j\widetilde A_{j,\nu}\big)V_k(\nu\;\!\cdot\;\!)
V(\nu\;\!\cdot\;\!)^{-2}\big\}(x)\big\|_{\B(\H_0)}
\le{\rm Const.}\;\!\langle\nu\rangle\langle x\rangle^{-m}.
\end{equation}
So, we can perform two successive integrations by parts with vanishing boundary
contributions to get
\begin{align*}
\widetilde T(\nu,\mu)
&=\frac i\mu\sum_{j=1}^d\int_{\R^d}\underline\d x\,
\left\langle\langle Q\rangle^2\varphi,\big(\partial_j\widetilde A_{j,\nu,t}\big)(x)
\widetilde B_{\nu,n}^I(x)\varphi\right\rangle_{\H_0}\\
&=-\frac1{\mu^2}\sum_{j,k=1}^d\int_{\R^d}\underline\d x\,
\left\langle\langle Q\rangle^2\varphi,\partial_k
\big\{\big(\partial_j\widetilde A_{j,\nu,t}\big)V_k(\nu\;\!\cdot\;\!)
V(\nu\;\!\cdot\;\!)^{-2}\big\}(x)\widetilde B_{\nu,n}^I(x)\varphi\right\rangle_{\H_0}.
\end{align*}
Combining this with the bound \eqref{eq_second_tilde_A}, we get for any $\nu\in(0,1)$
and $\mu>1$ that
$$
\big|\widetilde T(\nu,\mu)\big|\le{\rm Const.}\;\!\mu^{-2}.
$$
This, together with the bound \eqref{eq_bound_mu}, implies that that
$\widetilde T(\nu,\mu)$ is uniformly bounded in $\nu\in(0,1)$ by a function in
$\lone\big((0,\infty),\d\mu\big)$. Thus, we can apply Lebesgue's dominated convergence
theorem to exchange the limit $\lim_{\nu\searrow0}$ with the integral over $\mu$ in 
\eqref{eq_exchange_2}. Since the exchange of the limit $\lim_{\nu\searrow0}$ with the
integral over $x$ in \eqref{eq_exchange_2} is trivial, the result follows.
\end{proof}

%--------------------------------------------------------------------------------------
\subsection{Interpretation of the summation formula}\label{section_interpretation}
%--------------------------------------------------------------------------------------

In this section, we explain why the operator $T_f$ can be considered as a time
operator for $U_0$ and we give an interpretation of the summation formula
\eqref{eq_sum}. We start with a lemma which establishes crucial commutation relations
between the operators $T_f$ and $U_0:$

\begin{Lemma}\label{lemma_canonical}
Let Assumptions \ref{ass_regularity} and \ref{ass_commute} be satisfied, and let
$f\in\SS(\R^d)$ be real and equal to $1$ on a neighbourhood of $0\in\R^d$.
\begin{enumerate}
\item[(a)] We have
\begin{equation}\label{eq_weyl_1}
T_fU_0^n\varphi=\big(U_0^nT_f-n\;\!U_0^n\big)\varphi,\quad n\in\Z,~\varphi\in\dom_1.
\end{equation}
\item[(b)] If $\dom_1$ is dense in $\H_0$ and $T_f$ is essentially self-adjoint on
$\dom_1$ with closure $\overline{T_f}$, then we have the imprimitivity relation
\begin{equation}\label{eq_weyl_2}
\e^{is\overline{T_f}}\gamma(U_0)\e^{-is\overline{T_f}}=\gamma(\e^{-is}U_0),
\quad s\in\R,~\gamma\in C(\S^1).
\end{equation}
\end{enumerate}
\end{Lemma}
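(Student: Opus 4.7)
\medskip\noindent The plan is to prove \eqref{eq_weyl_1} directly on $\dom_1$ using the explicit formula \eqref{eq_T_f} for $T_f$, and then to exponentiate the resulting relation in (b). For part (a), I would first verify that $U_0^n$ leaves $\dom_1$ invariant: the conditions $\varphi=\zeta(U_0)\varphi$ and $\varphi=\rho(D)\varphi$ transfer to $U_0^n\varphi$ because $U_0$ strongly commutes with $\zeta(U_0)$ and with $D$, while $\varphi\in\dom(\langle Q\rangle)=\cap_{j=1}^d\dom(Q_j)$ transfers via Lemma \ref{lemma_V}(b) (note that $\varphi\in\dom(V_j)$ automatically, since $\varphi=\eta(V^2)\varphi$ with $\eta\in C^\infty_{\rm c}((0,\infty))$). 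Next I would compute $T_fU_0^n\varphi$ term by term via \eqref{eq_T_f}. The third term $\frac{i}{2}\nabla R_f(V/|V|)\cdot(V'^{\intercal}V)|V|^{-3}$ depends only on $V$ and $V'$, which commute with $U_0^n$ by Lemma \ref{lemma_commute}, so it simply passes through unchanged. For the first term, I would apply Lemma \ref{lemma_V}(c) (with $n$ replaced by $-n$) to move $Q_j$ past $U_0^n$ via $Q_jU_0^n=U_0^n(Q_j+nV_j)$, using that $(\partial_jR_f)(V)\varphi\in\dom(Q_j)\cap\dom(V_j)$ by Remark \ref{remark_T_f}(b). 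Summing over $j$ produces the extra contribution $\tfrac{n}{2}U_0^n\;\!V\cdot\nabla R_f(V)\varphi$, and the homogeneity identity $x\cdot(\nabla R_f)(x)=-1$ of Lemma \ref{lemma_R_f}(b), combined with the spectral fact $\varphi=E^{V^2}([a,\infty))\varphi$ from Lemma \ref{lemma_kappa}(c), reduces this via functional calculus to $-\tfrac{n}{2}U_0^n\varphi$. A parallel computation on the symmetrised term $\tfrac12\nabla R_f(V/|V|)\cdot Q|V|^{-1}$, in which I would use \eqref{eq_homo_2} to rewrite $\nabla R_f(V/|V|)=|V|\nabla R_f(V)$ so that $\nabla R_f(V/|V|)\cdot V|V|^{-1}=V\cdot\nabla R_f(V)=-1$, would contribute another $-\tfrac{n}{2}U_0^n\varphi$, and adding the three contributions yields \eqref{eq_weyl_1}.

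For part (b), I would first upgrade (a) to the closure $\overline{T_f}$ by a standard closure argument: given $\psi\in\dom(\overline{T_f})$ and a sequence $\varphi_k\in\dom_1$ with $\varphi_k\to\psi$ and $T_f\varphi_k\to\overline{T_f}\psi$, the continuity of $U_0^n$ combined with (a) yields $T_fU_0^n\varphi_k=U_0^nT_f\varphi_k-nU_0^n\varphi_k\to U_0^n\overline{T_f}\psi-nU_0^n\psi$, and closedness of $\overline{T_f}$ then gives $U_0^n\psi\in\dom(\overline{T_f})$ together with $\overline{T_f}U_0^n\psi=U_0^n\overline{T_f}\psi-nU_0^n\psi$. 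This is equivalent to the self-adjoint identity $U_0^{-n}\overline{T_f}U_0^n=\overline{T_f}-n$, which Borel functional calculus promotes to $U_0^{-n}e^{is\overline{T_f}}U_0^n=e^{-ins}e^{is\overline{T_f}}$, that is,
\[
e^{is\overline{T_f}}U_0^ne^{-is\overline{T_f}}=e^{-ins}U_0^n=(e^{-is}U_0)^n.
\]
This is exactly \eqref{eq_weyl_2} for the monomials $\gamma(z)=z^n$. Since both sides of \eqref{eq_weyl_2} are bounded $*$-homomorphisms of $C(\S^1)$ that are continuous in $\gamma$ for the uniform norm, Stone-Weierstrass density of the Laurent polynomials in $C(\S^1)$ extends the identity to arbitrary $\gamma\in C(\S^1)$.

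The main technical obstacle lies in part (a), where the somewhat elaborate form of \eqref{eq_T_f}---particularly the occurrences of $V/|V|$, $|V|^{-1}$ and $|V|^{-3}$---forces careful use of both homogeneity identities \eqref{eq_homo_1}--\eqref{eq_homo_2} and constant vigilance as to whether the intermediate vectors belong to $\dom(Q_j)$, $\dom(V_j)$ and $\dom(V'_{jk})$; Lemma \ref{lemma_V}, Lemma \ref{lemma_commute}, Remark \ref{remark_T_f}(b) and the spectral information of Lemma \ref{lemma_kappa}(c) are all used in essential ways. Once (a) is established, the derivation of (b) is routine: it reduces to the standard recipe for transporting a commutation relation between a self-adjoint and a bounded operator through the functional calculus.
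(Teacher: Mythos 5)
Your proposal is correct and follows essentially the same route as the paper: part (a) is established by commuting $U_0^n$ through the three terms of $T_f$ using the strong commutation of $U_0$ with $V$, Lemma \ref{lemma_V}(c) to shift $Q_j$ past $U_0^n$, and the homogeneity identities \eqref{eq_homo_1}--\eqref{eq_homo_2}, and part (b) is obtained by passing to the closure and exponentiating. The only cosmetic difference is in part (b), where the paper conjugates $\gamma(U_0)$ directly via the functional calculus identity $\e^{is\overline{T_f}}\gamma(U_0)\e^{-is\overline{T_f}}=\gamma\big(\e^{is\overline{T_f}}U_0\e^{-is\overline{T_f}}\big)$, while you verify the relation on the Laurent monomials $z\mapsto z^n$ and invoke Stone--Weierstrass, an equivalent but slightly longer route.
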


\begin{proof}
(a) Since $\dom(\langle Q\rangle)=\cap_{j=1}^d\dom(Q_j)$, we have the equality
$$
\dom_1=\left\{\varphi\in\cap_{j=1}^d\dom(Q_j)
\mid\varphi=\zeta(U_0)\varphi=\rho(D)\varphi
\hbox{ for some $\zeta\in C^\infty_{\rm c}\big(\S^1\setminus\kappa(U_0)\big)$ and
$\rho\in C^\infty_{\rm c}(\R)$}\right\},
$$
and thus $U_0^n\dom_1\subset\dom_1$ for each $n\in\Z$ due to Lemma \ref{lemma_V}(c)
and the definition of the operator $D$. Moreover, if $\varphi\in\dom_1$ and
$f\in\SS(\R^d)$, then we have the inclusions
$\big(\partial_jR_f\big)(V)\varphi\in\dom(Q_j)\cap\dom(V_j)$ and
$|V|^{-1}\varphi\in\dom(Q_j)\cap\dom(V_j)$ for each $j\in\{1,\ldots,d\}$. Therefore,
using successively the strong commutation of $U_0$ and $V$, Lemma \ref{lemma_V}(c),
and the relations \eqref{eq_homo_1}-\eqref{eq_homo_2}, we obtain
\begin{align*}
&\left(Q\cdot(\nabla R_f)(V)
+(\nabla R_f)\big(\tfrac V{|V|}\big)\cdot Q\;\!|V|^{-1}\right)U_0^n\varphi\\
&=\sum_{j=1}^d\left(Q_jU_0^n\big(\partial_jR_f\big)(V)
+\big(\partial_jR_f\big)\big(\tfrac V{|V|}\big)Q_jU_0^n|V|^{-1}\right)\varphi\\
&=U_0^n\sum_{j=1}^d\left((Q_j+n\;\!V_j)\big(\partial_jR_f\big)(V)
+\big(\partial_jR_f\big)\big(\tfrac V{|V|}\big)(Q_j+n\;\!V_j)|V|^{-1}\right)\varphi\\
&=U_0^n\left(Q\cdot(\nabla R_f)(V)
+(\nabla R_f)\big(\tfrac V{|V|}\big)\cdot Q\;\!|V|^{-1}\right)\varphi
-2n\;\!U_0^n\varphi.
\end{align*}
This, together with \eqref{eq_T_f}, implies that
$T_fU_0^n\varphi=\big(U_0^nT_f-n\;\!U_0^n\big)\varphi$.

(b) We know from \eqref{eq_weyl_1} that $U_0^{-1}T_fU_0\varphi=(T_f-1)\varphi$ for
each $\varphi\in\dom_1$. Since $T_f$ is essentially self-adjoint on $\dom_1$, it
follows that
$$
U_0^{-1}\overline{T_f}U_0=\overline{U_0^{-1}T_fU_0}=\overline{T_f}-1.
$$
Using this relation and functional calculus, we infer that
\begin{align*}
\e^{is\overline{T_f}}\gamma(U_0)\e^{-is\overline{T_f}}
=\gamma\big(U_0\e^{isU_0^{-1}\overline{T_f}U_0}\e^{-is\overline{T_f}}\big)
=\gamma\big(U_0\e^{is(\overline{T_f}-1)}\e^{-is\overline{T_f}}\big)
=\gamma(\e^{-is}U_0),
\end{align*}
which proves the claim.
\end{proof}

If $\dom_1$ is dense in $\H_0$ and $T_f$ is essentially self-adjoint on $\dom_1$, then
\eqref{eq_weyl_2} and Mackey's imprimitivity theorem \cite[Thm.~5]{Ors79} applied to
the group $\R$ and the subgroup $\Z$ imply the existence of a continuous unitary
representation $\sigma$ of $\Z$ in a Hilbert space $\frak h_\sigma$ achieving the
following: Let $F_\sigma$ be the set of functions $f_\sigma:\R\to\frak h_\sigma$ such
that
\begin{enumerate}
\item[(i)] $f_\sigma(n+s)=\sigma(n)f_\sigma(s)$ for all $n\in\Z$ and $s\in\R$,
\item[(ii)] $\|f_\sigma(\;\!\cdot\;\!)\|_{\frak h_\sigma}\in\ltwoloc(\R)$,
\item[(iii)] $f_\sigma$ is strongly measurable,
\end{enumerate}
let $\langle\cdot,\cdot\rangle_{\H_\sigma}$ and $\|\cdot\|_{\H_\sigma}$ be the scalar
product and norm on $F_\sigma$ given by
$$
\langle f_\sigma,g_\sigma\rangle_{\H_\sigma}
:=\int_0^1\d s\,\langle f_\sigma(s),g_\sigma(s)\rangle_{\frak h_\sigma}
\quad\hbox{and}\quad
\|f_\sigma\|_{\H_\sigma}:=\sqrt{\langle f_\sigma,f_\sigma\rangle_{\H_\sigma}},
\quad f_\sigma,g_\sigma\in F_\sigma,
$$
and let $\H_\sigma$ be the Hilbert space completion of $F_\sigma$ for the norm
$\|\cdot\|_{\H_\sigma}$, that is,
$$
\H_\sigma:=\big\{f_\sigma\in F_\sigma\mid\|f_\sigma\|_{\H_\sigma}<\infty\big\}
/\big\{f_\sigma\in F_\sigma\mid\|f_\sigma\|_{\H_\sigma}=0\big\}.
$$
Then, there exists a unitary operator $\U:\H_0\to\H_\sigma$ satisfying for all
$s\in\R$ and $\gamma\in C(\S^1)$
$$
\U\e^{-2\pi is\overline{T_f}}\U^{-1}=U_\sigma(s)
\quad\hbox{and}\quad
\U\gamma(U_0)\U^{-1}=P_\sigma(\gamma),
$$
with $U_\sigma$ the induced continuous unitary representation of $\sigma$ from $\Z$ to
$\R$ given by
$$
\big(U_\sigma(s)f_\sigma\big)(t):=f_\sigma(t+s),\quad s,t\in\R,~f_\sigma\in\H_\sigma,
$$
and $P_\sigma$ given by
$$
\big(P_\sigma(\gamma)f_\sigma\big)(s):=\gamma(\e^{2\pi is})f_\sigma(s),
\quad s\in\R,~f_\sigma\in\H_\sigma,~\gamma\in C(\S^1).
$$
Therefore, the spectrum of $U_0$ is purely absolutely continuous and covers the whole
unit circle $\S^1$, and we get for all $\psi\in\H_0$ and $\varphi\in\dom_1$ the
equalities
$$
\big\langle\psi,T_f\;\!\varphi\big\rangle_{\H_0}
=\big\langle\psi,\overline{T_f}\;\!\varphi\big\rangle_{\H_0}
=\int_0^1\d s\,\left\langle(\U\psi)(s),
\frac i{2\pi}\frac{\d(\U\varphi)}{\d s}(s)\right\rangle_{\frak h_\sigma},
$$
with $\frac{\d(\U\varphi)}{\d s}(s)$ the distributional derivative at $s$ of the
function $\R\ni s\mapsto(\U\varphi)(s)\in\frak h_\sigma$. In particular, if we make
the change of variable $z(s):=\e^{2\pi i s}\in\S^1$ and choose functions
$\widetilde{\U\psi},\widetilde{\U\varphi}:\S^1\to\frak h_\sigma$ satisfying
$$
(\U\psi)(s)=\big(\widetilde{\U\psi}\big)(z(s))
\quad\hbox{and}\quad
(\U\varphi)(s)=\big(\widetilde{\U\varphi}\big)(z(s))
$$
for each $s\in[0,1)$, we obtain the identity
\begin{equation}\label{eq_derivative}
\big\langle\psi,T_f\;\!\varphi\big\rangle_{\H_0}
=\int_{\S^1}\d\mu_{\S^1}(z)\left\langle\big(\widetilde{\U\psi}\big)(z),
-z\;\!\frac{\d\big(\widetilde{\U\varphi}\big)}{\d z}(z)\right\rangle_{\frak h_\sigma}
\end{equation}
with $\d\mu_{\S^1}(z):=\frac{\d z}{2\pi iz}$ the Haar measure on $\S^1$.

If $\dom_1$ is dense in $\H_0$, then Proposition \ref{prop_T_f} and Remark
\ref{remark_T_f}(b) imply that $T_f$ is symmetric, and the relations
$\H_0=\overline{\dom_1}\subset\H_{\rm ac}(U_0)$ imply that $U_0$ has purely absolutely
continuous spectrum. However, the spectrum of $U_0$ may not cover the whole unit
circle $\S^1$. Either way, we expect that the operator $T_f$ is still equal to a
differential operator in some Hilbert space isomorphic to $\H_0$, but we have not been
able to prove it in this generality.

If $\dom_1$ is not dense in $\H_0$, then we are not aware of works using a relation
like \eqref{eq_weyl_1} to infer results on the spectral nature of $U_0$ or on the form
of $T_f$. In such a case, we only know from Theorem \ref{thm_spectrum}(a) that $U_0$
has purely absolutely continuous spectrum in $\sigma(U_0)\setminus\kappa(U_0)$.
However, if one makes some additional assumption on the action of $T_f$ on $\dom_1$,
one should be able to obtain further results on $U_0$ and $T_f$. We refrain to do it
here, but we refer to \cite[p.~324]{RT12_0} for a discussion of this issue in the
self-adjoint setup.

\begin{Remark}[Interpretation of the summation formula]\label{rem_interpretation}
The results that precede have a nice physical interpretation. Lemma
\ref{lemma_canonical}(a) implies that the operators $T_f$ and $U_0$ satisfy on
$\dom_1$ the relation
$$
U_0^{-1}\big[T_f,U_0\big]=-1,
$$
which is the unitary analogue of the canonical time-energy commutation relation of the
self-adjoint setup. Accordingly, the operator $T_f$ can be interpreted as a time
operator for $U_0$, and $T_f$ should be equal in some suitable sense to the operator
$-U_0\;\!\frac\d{\d U_0}$. Indeed, this is essentially what tells us Equation
\eqref{eq_derivative}: if $\dom_1$ is dense in $\H_0$ and $T_f$ is essentially
self-adjoint on $\dom_1$, then $T_f$ acts (after a change of variable) as the
differential operator $-z\;\!\frac\d{\d z}$ ($z\in\S^1$) in the Hilbert space
$\H_\sigma$ isomorphic to $\H_0$.

On another hand, the l.h.s. of Formula \eqref{eq_sum} has the following meaning: For
$\nu>0$ fixed, it can be interpreted as the difference of times spent by the evolving
state $U_0^n\varphi$ in the future (first term) and in the past (second term) within
the region defined by the localisation operator $f(\nu Q)$. Therefore, Formula
\eqref{eq_sum} shows that this difference of times tends as $\nu\searrow0$ to the
expectation value in $\varphi$ of the time operator $T_f$.
\end{Remark}

We conclude this section with an illustration of these results in the setups of
Examples \ref{ex_V_constant} and \ref{ex_Laplacian}.

\begin{Example}[$V$ constant, continued]\label{V_constant_continued}
If $Q$ and $U_0$ are such that $V=v\in\R^d\setminus\{0\}$ as in Example
\ref{ex_V_constant}, then we have $\kappa(U_0)=\varnothing$. Therefore, if we set
$D=V^2=v^2$, we obtain that
\begin{align*}
\dom_1
&=\left\{\varphi\in\dom(\langle Q\rangle)
\mid\varphi=\zeta(U_0)\varphi=\rho(v^2)\varphi
\hbox{ for some $\zeta\in C^\infty_{\rm c}(\S^1)$ and
$\rho\in C^\infty_{\rm c}(\R)$}\right\}\\
&=\dom(\langle Q\rangle)\\
&=\cap_{j=1}^d\dom(Q_j).
\end{align*}
Since \eqref{eq_T_f} implies the equality $T_f=(\nabla R_f)(v)\cdot Q$ on $\dom_1$, it
follows that $T_f$ is essentially self-adjoint on $\dom_1$, and thus
\eqref{eq_derivative} implies that $T_f$ acts (after a change of variable) as the
differential operator $-z\;\!\frac\d{\d z}$ ($z\in\S^1$) in the Hilbert space
$\H_\sigma$.
\end{Example}

\begin{Example}[Time-one propagator for the Laplacian, continued]
\label{Laplacian_continued}
If $Q$, $P$, $U_0$ and $\H_0$ are as in Example \ref{ex_Laplacian}, then we have
$V=2P$ and $\kappa(U_0)=\{1\}$. Furthermore, if we take $f$ radial and set
$D=V^2=4P^2$, then the set
$$
\dom_1=\left\{\varphi\in\dom(\langle Q\rangle)
\mid\varphi=\zeta(\e^{-iP^2})\varphi=\rho(P^2)\varphi
\hbox{ for some $\zeta\in C^\infty_{\rm c}(\S^1\setminus\{1\})$ and
$\rho\in C^\infty_{\rm c}(\R)$}\right\}
$$
is dense in $\H_0$, and a calculation using \eqref{eq_T} shows the following
equalities on $\dom_1$
$$
T_f=T
=-\tfrac14\left(Q\cdot\tfrac P{P^2}+\tfrac P{|P|}\cdot Q\;\!|P|^{-1}+iP^{-2}\right)
=-\tfrac14\left(Q\cdot\tfrac P{P^2}+\tfrac P{P^2}\cdot Q\right).
$$
Thus, it follows from \cite[p.~484-485]{AC87} that $T$ is symmetric on $\dom_1$ and
acts as the differential operator $-z\;\!\frac\d{\d z}$ ($z\in\S^1$) in the spectral
representation of $U_0$.
\end{Example}

%--------------------------------------------------------------------------------------
\section{Quantum time delay}\label{section_delay}
\setcounter{equation}{0}
%--------------------------------------------------------------------------------------

%--------------------------------------------------------------------------------------
\subsection{Symmetrised time delay}\label{section_symmetrised}
%--------------------------------------------------------------------------------------

In this section, we prove the existence of symmetrised time delay for a quantum
scattering system $(U_0,U,J)$ with free propagator $U_0$, full propagator $U$, and
identification operator $J$. The propagator $U_0$ is a unitary operator that acts in
the Hilbert space $\H_0$ and satisfies Assumptions \ref{ass_regularity} and
\ref{ass_commute} with respect to the family of position operators $Q$. The propagator
$U$ is a unitary operator in a Hilbert space $\H$ that satisfies Assumption
\ref{ass_wave} below. The operator $J:\H_0\to\H$ is a bounded operator used to
identify the Hilbert space $\H_0$ with a subset of the Hilbert space $\H$. The
assumption on $U$ asserts the existence, the isometry and the completeness of the
generalised wave operators for the scattering system $(U_0,U,J)$. To state it, we use
the notation $P_{\rm ac}(U_0)$ for the projection onto the subspace $\H_{\rm ac}(U_0)$
of absolute continuity of $U_0$ (and idem for $U$):

\begin{Assumption}[Wave operators]\label{ass_wave}
The wave operators
$$
W_\pm:=\slim_{n\to\pm\infty}U^{-n}J\;\!U_0^nP_{\rm ac}(U_0)
$$
exist and are partial isometries with initial subspaces $\H_0^\pm\subset\H_0$ and
final subspaces $\H_{\rm ac}(U)$.
\end{Assumption}

Sufficient conditions on the difference $JU_0-UJ$ guaranteeing the existence and the
completeness of $W_\pm$ are given, for instance, in \cite[Sec.~2]{RST_2}. The main
consequence of Assumption \ref{ass_wave} is that the scattering operator
$$
S:=W_+^*W_-:\H_0^-\to\H_0^+
$$
is a well-defined unitary operator commuting with $U_0$.

We now define the sojourn times for the scattering system $(U_0,U,J)$, starting with
the sojourn time for the free evolution $\{U_0^n\}_{n\in\Z}$. Given a positive number
$r>0$, a non-negative function $f\in\SS(\R^d)$ equal to $1$ on a neighbourhood
$\Sigma$ of $0\in\R^d$ and a vector $\varphi\in\dom_0$, we set
$$
T_r^0(\varphi)
:=\sum_{n\in\Z}\big\langle U_0^n\varphi,f(Q/r)U_0^n\varphi\big\rangle_{\H_0}.
$$
The operator $f(Q/r)$ is approximately equal to the projection onto the subspace
$E^Q(r\Sigma)\H_0$ of $\H_0$, with $r\Sigma:=\{x\in\R^d\mid x/r\in\Sigma\}$.
Therefore, if $\|\varphi\|_{\H_0}=1$, then $T_r^0(\varphi)$ can be roughly interpreted
as the time spent by the evolving state $U_0^n\varphi$ inside $E^Q(r\Sigma)\H_0$. The
quantity $T_r^0(\varphi)$ is finite for each $\varphi\in\dom_0$, since we know from
Lemma \ref{thm_spectrum}(b) that the operator $|f(Q/r)|^{1/2}$ is locally $U_0$-smooth
on $\S^1\setminus\kappa(U_0)$.

When defining the sojourn time for the full evolution $\{U^n\}_{n\in\Z}$, one faces
the problem that the localisation operator $f(Q/r)$ acts in $\H_0$, while the operator
$U^n$ acts in $\H$. The obvious modification would be to use the operator
$Jf(Q/r)J^*\in\B(\H)$, but the resulting definitions could be not general enough (see
\cite[Rem.~4.5]{RT12} for a discussion of this issue in the case of scattering for
self-adjoint operators). Sticking to the basic idea that the freely evolving state
$U_0^n\varphi$ should approximate, as $n\to\pm\infty$, the corresponding evolving
state $U^nW_\pm\varphi$, one should look for operators $L_n:\H\to\H_0$ satisfying the
condition
$$
\lim_{n\to\pm\infty}\big\|L_nU^nW_\pm\varphi-U_0^n\varphi\big\|_{\H_0}=0.
$$
Since we consider vectors $\varphi\in\dom_0$, the operators $L_n$ can be unbounded as
long as $L_nU^nW_\pm E^D(I)$ are bounded for all compact sets $I\subset\R$ (if $U_0$
is the time-one propagator of some Hamiltonian $H_0$ and $D=H_0$, then one can simply
require that $L_nE^H(I)$ are bounded for each compact set $I\subset\R$). With these
operators $L_n$ at hand, it is natural to define the sojourn time for the full
evolution $\{U^n\}_{n\in\Z}$ as
$$
T_{r,1}(\varphi):=\sum_{n\in\Z}\big\langle L_nU^nW_-\varphi,
f(Q/r)L_nU^nW_-\varphi\big\rangle_{\H_0}.
$$
Another sojourn time appearing naturally in this context is
$$
T_2(\varphi):=\sum_{n\in\Z}\left(\langle\varphi,\varphi\rangle_{\H_0}
-\big\langle L_nU^nW_-\varphi,L_nU^nW_-\varphi\big\rangle_{\H_0}\right).
$$
The finiteness of $T_{r,1}(\varphi)$ and $T_2(\varphi)$ is proved under some
additional assumptions in Lemma \ref{lemma_free} below.

The term $T_{r,1}(\varphi)$ can be roughly interpreted as the time spent by the
scattering state $U^nW_-\varphi$ inside $E^Q(r\Sigma)\H_0$ after being injected in
$\H_0$ by $L_n$. If some slight abuse of notation is allowed to write the term
$T_2(\varphi)$ as
$$
T_2(\varphi)
=\sum_{n\in\Z}\big\langle U^nW_-\varphi,\big(1-L_n^*L_n\big)U^nW_-\varphi\big\rangle_{\H_0},
$$
then $T_2(\varphi)$ can be interpreted as the time spent by the scattering state
$U^nW_-\varphi$ inside the time-dependent subset $(1-L_n^*L_n)\H$ of $\H$. If $L_n$ is
considered as a time-dependent quasi-inverse for the operator $J$ (see
\cite[Sec.~2.3.2]{Yaf92} for a related notion of time-independent quasi-inverse), then
the subset $(1-L_n^*L_n)\H$ can be interpreted as an approximate complement of $J\H_0$
in $\H$ at time $n$. The necessity of the term $T_2(\varphi)$ in the setup of
two-Hilbert spaces quantum scattering can easily be illustrated when, for example,
$U_0$ and $U$ are time-one propagators of Hamiltonians presenting some multichannel
structure (see for instance \cite[Sec.~5]{RT13_1}). On the other hand, when $\H_0=\H$,
it is natural to set $L_n=J^*=1$, and then $T_2(\varphi)$ vanishes.

Within this general framework, we say that
$$
\tau_r^{\rm sym}(\varphi)
:=T_r(\varphi)-\tfrac12\big(T_r^0(\varphi)+T_r^0(S\varphi)\big),
$$
with $T_r(\varphi):=T_{r,1}(\varphi)+T_2(\varphi)$, is the symmetrised time delay of
the scattering system $(U_0,U,J)$ with incoming state $\varphi$ in the region defined
by the localisation operator $f(Q/r)$, and we say that
$$
\tau_r^{\rm nsym}(\varphi):=T_r(\varphi)-T_r^0(\varphi)
$$
is the non-symmetrised time delay of the scattering system $(U_0,U,J)$ with incoming
state $\varphi$ in the region defined by the localisation operator $f(Q/r)$. In the
case of scattering for self-adjoint operators, the symmetrised time delay is the only
time delay having a well-defined limit as $r\to\infty$ for complicated scattering
systems (see for example \cite{AJ07,BO79,GT07,Mart75,Mar81,RT13_1,SM92,Smi60,Tie06}).

Finally, for the next lemma, we need the auxiliary quantity
\begin{equation}\label{eq_tau_free}
\tau_r^{\rm free}(\varphi):=\tfrac12\sum_{n\ge0}\big\langle\varphi,
S^*\big[U_0^{-n}f(Q/r)U_0^n-U_0^nf(Q/r)U_0^{-n},S\big]\varphi\big\rangle_{\H_0},
\end{equation}
which is finite for all vectors $\varphi\in\H_0^-\cap\dom_0$ satisfying
$S\varphi\in\dom_0$ (see \cite[Eq.~(4.5)]{RT12} for a similar definition in the case
of scattering for self-adjoint operators).

\begin{Lemma}\label{lemma_free}
Let Assumptions \ref{ass_regularity}, \ref{ass_commute} and \ref{ass_wave} be
satisfied. Let $f\in\SS(\R^d)$ be non-negative and equal to $1$ on a neighbourhood of
$0\in\R^d$. For each $n\in\Z$, let $L_n:\H\to\H_0$ satisfy
$L_nU^nW_\pm E^D(I)\in\B(\H,\H_0)$ for all compact sets $I\subset\R$. Finally, let
$\varphi\in\H_0^-\cap\dom_0$ satisfy $S\varphi\in\dom_0$ and
\begin{equation}\label{eq_l_one}
n\mapsto\big\|\big(L_nW_--1\big)U_0^n\varphi\big\|_{\H_0}\in\ell^1(-\N)
\quad\hbox{and}\quad
n\mapsto\big\|\big(L_nW_+-1\big)U_0^nS\varphi\big\|_{\H_0}\in\ell^1(\N).
\end{equation}
Then, $T_r(\varphi)$ is finite for each $r>0$, and
$$
\lim_{r\to\infty}\big(\tau_r^{\rm sym}(\varphi)-\tau_r^{\rm free}(\varphi)\big)=0.
$$
\end{Lemma}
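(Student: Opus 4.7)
The strategy is to compare each fully evolved sojourn time to a purely free reference built out of $U_0$ alone, and to absorb all discrepancies into error terms controlled by the $\ell^1$ hypothesis \eqref{eq_l_one}. Using the intertwining relations $U^nW_\pm=W_\pm U_0^n$ together with the identity $W_-\varphi=W_+S\varphi$ (valid because $\varphi\in\H_0^-$ and $\mathrm{ran}(W_-)\subset\H_{\rm ac}(U)=\mathrm{ran}(W_+)$), I introduce the free reference
$$
M_n:=\begin{cases}U_0^n\varphi&\text{if }n<0,\\ U_0^nS\varphi&\text{if }n\ge0,\end{cases}
\qquad e_n:=L_nU^nW_-\varphi-M_n,
$$
so that $\|e_n\|_{\H_0}=\|(L_nW_--1)U_0^n\varphi\|_{\H_0}$ for $n<0$ and $\|e_n\|_{\H_0}=\|(L_nW_+-1)U_0^nS\varphi\|_{\H_0}$ for $n\ge0$. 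Hypothesis \eqref{eq_l_one} becomes exactly $\{\|e_n\|_{\H_0}\}_{n\in\Z}\in\ell^1(\Z)$, while the boundedness of $L_nU^nW_\pm E^D(I)$ combined with $\varphi=\rho(D)\varphi\in\dom_0$ for some $\rho\in C^\infty_{\rm c}(\R)$ ensures that $\|L_nU^nW_-\varphi\|_{\H_0}$ is uniformly bounded in $n$.

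For finiteness of $T_r(\varphi)$ I would expand
$$
\big\langle L_nU^nW_-\varphi,f(Q/r)L_nU^nW_-\varphi\big\rangle_{\H_0}
=\big\langle M_n,f(Q/r)M_n\big\rangle_{\H_0}+R_n(r)
$$
with $|R_n(r)|\le2\|f\|_\infty\|e_n\|_{\H_0}(\|M_n\|_{\H_0}+\|e_n\|_{\H_0})$. The series $\sum_nR_n(r)$ converges absolutely by the $\ell^1$ bound on $e_n$, while $\sum_n\langle M_n,f(Q/r)M_n\rangle_{\H_0}$ is finite because $\varphi,S\varphi\in\dom_0$ and $|f(Q/r)|^{1/2}$ is locally $U_0$-smooth on $\S^1\setminus\kappa(U_0)$ by Theorem \ref{thm_spectrum}(b). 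The same kind of estimate, together with $\|\varphi\|_{\H_0}=\|S\varphi\|_{\H_0}=\|M_n\|_{\H_0}$, gives $|T_2(\varphi)|\le\sum_n(2\|M_n\|_{\H_0}\|e_n\|_{\H_0}+\|e_n\|_{\H_0}^2)<\infty$.

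To evaluate the limit, I would next identify the ``free part'' of $T_r(\varphi)-\tfrac12(T_r^0(\varphi)+T_r^0(S\varphi))$. A direct reindexation $n\mapsto-n$ on the negative half of each sum, using that $S$ commutes with $U_0$ and $S^*S\varphi=\varphi$, yields
$$
\sum_n\big\langle M_n,f(Q/r)M_n\big\rangle_{\H_0}-\tfrac12\big(T_r^0(\varphi)+T_r^0(S\varphi)\big)
=\tau_r^{\rm free}(\varphi)+\tfrac12\big(\langle S\varphi,f(Q/r)S\varphi\rangle_{\H_0}-\langle\varphi,f(Q/r)\varphi\rangle_{\H_0}\big),
$$
and the trailing correction tends to $\tfrac12(\|S\varphi\|_{\H_0}^2-\|\varphi\|_{\H_0}^2)=0$ as $r\to\infty$ by the isometry of $S$ on $\H_0^-$ and $\slim_{r\to\infty}f(Q/r)=1$.

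The final and crucial step is to show that the total deviation between $T_r(\varphi)$ and its free part vanishes in the limit. Here the key algebraic identity
$$
\|M_n\|_{\H_0}^2-\|L_nU^nW_-\varphi\|_{\H_0}^2
=-\langle e_n,L_nU^nW_-\varphi\rangle_{\H_0}-\langle M_n,e_n\rangle_{\H_0},
$$
combined with the expansion for $T_{r,1}(\varphi)$ and the equality $\|\varphi\|_{\H_0}^2=\|M_n\|_{\H_0}^2$, rewrites the deviation as
$$
\sum_n\Big(\big\langle e_n,(f(Q/r)-1)L_nU^nW_-\varphi\big\rangle_{\H_0}
+\big\langle M_n,(f(Q/r)-1)e_n\big\rangle_{\H_0}\Big).
$$
Each summand is bounded in absolute value by $\mathrm{Const.}\,\|e_n\|_{\H_0}$, providing a summable majorant, while for each fixed $n$ the summand tends to zero because $f(Q/r)-1\to0$ strongly. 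Lebesgue's dominated convergence on $\ell^1(\Z)$ then gives the vanishing, and combining with the previous paragraph yields $\lim_{r\to\infty}(\tau_r^{\rm sym}(\varphi)-\tau_r^{\rm free}(\varphi))=0$. The main obstacle is isolating this algebraic cancellation: recognizing that the auxiliary two-Hilbert-space sojourn time $T_2(\varphi)$ is precisely what cancels the first-order cross terms in the expansion of $T_{r,1}(\varphi)$ around the free reference, leaving only the genuinely small second-order contributions carrying the factor $(f(Q/r)-1)$.
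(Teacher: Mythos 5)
Your proof is correct and takes essentially the same route as the paper, which simply records the decomposition of $I_r(\varphi):=T_{r,1}(\varphi)-\tfrac12(T_r^0(\varphi)+T_r^0(S\varphi))-\tau_r^{\rm free}(\varphi)$ into the two half-sums plus the two boundary terms $\pm\tfrac12\langle\cdot,f(Q/r)\cdot\rangle$ at $n=0$, and then refers the convergence $I_r(\varphi)\to-T_2(\varphi)$ to the self-adjoint analogue \cite[Lemma~4.2]{RT12}; what you have written out (the free reference $M_n$, the identity $W_-\varphi=W_+S\varphi$, the cancellation of first-order cross terms against $T_2(\varphi)$, and the $\ell^1$-dominated convergence with majorant ${\rm Const.}\,\|e_n\|_{\H_0}$) is exactly that omitted argument. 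One small inaccuracy worth flagging: the boundedness of $L_nU^nW_\pm E^D(I)$ only guarantees, for each fixed $n$, that $L_nU^nW_-\varphi$ is well defined in $\H_0$; the uniform bound in $n$ that you need for the majorant should instead be read off from $\|L_nU^nW_-\varphi\|_{\H_0}\le\|M_n\|_{\H_0}+\|e_n\|_{\H_0}=\|\varphi\|_{\H_0}+\|e_n\|_{\H_0}$ together with the $\ell^1$ hypothesis, not from the operator-norm assumption alone.
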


\begin{proof}
The proof consists in showing that the expression
\begin{align}
I_r(\varphi)
&:=T_{r,1}(\varphi)-\tfrac12\big(T_r^0(\varphi)+T_r^0(S\varphi)\big)
-\tau_r^{\rm free}(\varphi)\nonumber\\
&=\sum_{n\le-1}\left(\big\langle L_nU^nW_-\varphi,
f(Q/r)L_nU^nW_-\varphi\big\rangle_{\H_0}
-\big\langle U_0^n\varphi,f(Q/r)U_0^n\varphi\big\rangle_{\H_0}\right)\nonumber\\
&\quad+\sum_{n\ge0}\left(\big\langle L_nU^nW_-\varphi,
f(Q/r)L_nU^nW_-\varphi\big\rangle_{\H_0}
-\big\langle U_0^nS\varphi,f(Q/r)U_0^nS\varphi\big\rangle_{\H_0}\right)\nonumber\\
&\quad-\tfrac12\big\langle\varphi,f(Q/r)\varphi\big\rangle_{\H_0}
+\tfrac12\big\langle S\varphi,f(Q/r)S\varphi\big\rangle_{\H_0}\label{eq_boundary_terms}
\end{align}
converges as $r\to\infty$ to $-T_2(\varphi)$. But, apart from the boundary terms
\eqref{eq_boundary_terms} which cancel out as $r\to\infty$, this can be done as in the
self-adjoint case \cite[Lemma~4.2]{RT12}. So, we leave the details to the reader.
\end{proof}

The next Theorem establishes the existence of the symmetrized time delay; it is a
direct consequence of Theorem \ref{thm_summation}, Definition \eqref{eq_tau_free} and
Lemma \ref{lemma_free}.

\begin{Theorem}[Symmetrised time delay]\label{thm_sym}
Let Assumptions \ref{ass_regularity}, \ref{ass_commute} and \ref{ass_wave} be
satisfied. Let $f\in\SS(\R^d)$ be non-negative, even and equal to $1$ on a
neighbourhood of $0\in\R^d$. For each $n\in\Z$, let $L_n:\H\to\H_0$ satisfy
$L_nU^nW_\pm E^D(I)\in\B(\H,\H_0)$ for all compact sets $I\subset\R$. Finally, let
$\varphi\in\H_0^-\cap\dom_2$ satisfy $S\varphi\in\dom_2$ and \eqref{eq_l_one}. Then,
one has
\begin{equation}\label{eq_EW_sym}
\lim_{r\to\infty}\tau_r^{\rm sym}(\varphi)
=\big\langle\varphi,S^*[T_f,S]\varphi\big\rangle_{\H_0},
\end{equation}
with $T_f$ defined in \eqref{eq_T_f}.
\end{Theorem}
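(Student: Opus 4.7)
The plan is to combine Lemma \ref{lemma_free} with Theorem \ref{thm_summation} via the auxiliary quantity $\tau_r^{\rm free}(\varphi)$. First, Lemma \ref{lemma_free} reduces the task to showing
$$
\lim_{r\to\infty}\tau_r^{\rm free}(\varphi)
=\big\langle\varphi,S^*[T_f,S]\varphi\big\rangle_{\H_0},
$$
since $\tau_r^{\rm sym}(\varphi)-\tau_r^{\rm free}(\varphi)\to0$ under the stated hypotheses. So the only real work lies in computing the limit of $\tau_r^{\rm free}(\varphi)$.

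Writing $B_n(r):=U_0^{-n}f(Q/r)U_0^n-U_0^nf(Q/r)U_0^{-n}$, I would expand the commutator under the sum as $S^*[B_n(r),S]=S^*B_n(r)S-S^*SB_n(r)$. Because $S$ commutes with $U_0$ (Assumption \ref{ass_wave}), the operator $S^*B_n(r)S$ is obtained from $B_n(r)$ simply by replacing the test vector $\varphi$ with $S\varphi$ in the matrix element. Moreover, since $\varphi\in\H_0^-$ and $S:\H_0^-\to\H_0^+$ is unitary, we have $S^*S\varphi=\varphi$. Combining these observations yields
$$
\tau_r^{\rm free}(\varphi)
=\tfrac12\sum_{n\ge0}\big\langle S\varphi,B_n(r)\,S\varphi\big\rangle_{\H_0}
-\tfrac12\sum_{n\ge0}\big\langle\varphi,B_n(r)\,\varphi\big\rangle_{\H_0}.
$$

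Next, I would apply Theorem \ref{thm_summation} separately to each sum with the identification $\nu:=1/r\searrow0$. The hypothesis $\varphi,S\varphi\in\dom_2$ is precisely what is needed for the summation formula to be applicable to both matrix elements, producing
$$
\lim_{r\to\infty}\tau_r^{\rm free}(\varphi)
=\big\langle S\varphi,T_f\,S\varphi\big\rangle_{\H_0}
-\big\langle\varphi,T_f\,\varphi\big\rangle_{\H_0}.
$$
Reversing the earlier commutator expansion, and using once more $S^*S\varphi=\varphi$, rewrites the right-hand side as $\langle\varphi,S^*[T_f,S]\varphi\rangle_{\H_0}$, which is the desired identity.

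The potential obstacle here is bookkeeping rather than substance: one must verify that $S\varphi$ genuinely lies in $\dom_2$ (ensured by hypothesis, though it implicitly uses that $S$ commutes with $U_0$ and with the chosen operator $D$ that defines $\dom_t$), that the quadratic form of $T_f$ on $\dom_1$ is interpreted consistently with the map $t_f$ of Proposition \ref{prop_T_f} when rewriting the limits as $\langle\psi,T_f\psi\rangle_{\H_0}$, and that the partial-isometry relation $S^*S\varphi=\varphi$ for $\varphi\in\H_0^-$ is invoked in both directions of the commutator manipulation. All of these points are built into the hypotheses and the prior structural results of the paper, so the theorem follows essentially by assembling Lemma \ref{lemma_free} and Theorem \ref{thm_summation}.
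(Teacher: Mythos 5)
Your proposal is correct and takes the same route the paper intends: the paper states that Theorem \ref{thm_sym} is a direct consequence of Theorem \ref{thm_summation}, Definition \eqref{eq_tau_free}, and Lemma \ref{lemma_free}, and you supply exactly the computation behind that assertion (expand the commutator, use $S^*S\varphi=\varphi$ for $\varphi\in\H_0^-$, apply the summation formula to $\varphi$ and to $S\varphi$, and reassemble as $\langle\varphi,S^*[T_f,S]\varphi\rangle_{\H_0}$). The only slight imprecision is the remark that $\langle\varphi,S^*B_n(r)S\varphi\rangle=\langle S\varphi,B_n(r)S\varphi\rangle$ depends on $S$ commuting with $U_0$ --- this is just the definition of the adjoint and needs no commutation; the commutation of $S$ with $U_0$ is only relevant to the hypothesis $S\varphi\in\dom_2$, which here is simply assumed.
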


\begin{Remark}\label{rem_sym}
Theorem \ref{thm_sym} is the main result of the paper. It shows the identity of the
symmetrised time delay defined in terms of sojourn times and an analogue of
Eisenbud-Wigner time delay for general unitary scattering systems $(U_0,U,J)$. The
l.h.s. of \eqref{eq_EW_sym} is equal to the global symmetrised time delay of the
scattering system $(U_0,U,J)$, with incoming state $\varphi$, in the dilated regions
defined by the localisation operators $f(Q/r)$. The r.h.s. of \eqref{eq_EW_sym} is the
expectation value in $\varphi$ of the Eisenbud-Wigner-type time delay operator
$S^*[T_f,S]$. When $T_f$ acts in some suitable sense as the differential operator
$-U_0\frac\d{\d U_0}$, which occurs in most of the situations of interest (see Section
\ref{section_interpretation}), one obtains an analogue of Eisenbud-Wigner formula for
unitary scattering systems:
$$
\lim_{r\to\infty}\tau_r^{\rm sym}(\varphi)
=\left\langle\varphi,-S^*U_0\;\!\frac{\d S}{\d U_0}\;\!\varphi\right\rangle_{\H_0}.
$$
\end{Remark}

%--------------------------------------------------------------------------------------
\subsection{Non-symmetrised time delay}\label{section_non}
%--------------------------------------------------------------------------------------

We present in this section conditions under which the symmetrised time delay
$\tau_r^{\rm sym}(\varphi)$ and the non-symmetrised time delay
$\tau_r^{\rm nsym}(\varphi)$ are equal in the limit $r\to\infty$. Physically, this
cannot hold if the scattering is not elastic or is of multichannel type. But for
simple scattering systems, the freely evolving states $U_0^n\varphi$ and
$U_0^nS\varphi$ should spend the same time in the region defined by the localisation
operator $f(Q/r)$ in the limit $r\to\infty$, and thus the equality of both time delays
should be verified. Mathematically, this equality reduces to finding conditions under
which
\begin{equation}\label{eq_equal_sojourn}
\lim_{r\to\infty}\left(T_r^0(S\varphi)-T_r^0(\varphi)\right)=0.
\end{equation}
Formally, the proof of \eqref{eq_equal_sojourn} goes as follows: Suppose that the
scattering operator $S$ strongly commutes for each $\nu>0$ with the operator
$\sum_{n\in\Z}f(\nu n\;\!V)$ (i.e. the scattering system is simple in the sense that
it preserves some appropriate function of the velocity vector $V$). Then, using the
change of variables $\nu:=1/r$, one gets
\begin{align*}
&\lim_{r\to\infty}\left(T_r^0(S\varphi)-T_r^0(\varphi)\right)\\
&=\lim_{\nu\searrow0}\sum_{n\in\Z}\left\langle\varphi,
S^*\big[U_0^{-n}f(\nu Q)U_0^n,S\big]\varphi\right\rangle_{\H_0}
-\left\langle\varphi,S^*\big[F_{\nu,f}(V),S\big]\varphi\right\rangle_{\H_0}\\
&=\lim_{\nu\searrow0}\sum_{n\in\Z}\left\langle\varphi,
S^*\big[f(\nu Q+\nu n\;\!V)-f(\nu n\;\!V),S\big]\varphi\right\rangle_{\H_0}\\
&=0.
\end{align*}
A rigorous justification of this argument is given in Proposition
\ref{prop_equal_sojourn} below. Before this, we need two technical lemmas and an
assumption on the behaviour of the $C_0$-group $\{\e^{ix\cdot Q}\}_{x\in\R^d}$ in the
subspace $\dom(\langle V\rangle)$. We start with the first technical lemma:

\begin{Lemma}\label{lemma_poisson}
Let Assumptions \ref{ass_regularity} and \ref{ass_commute} be satisfied. Take
$f\in\SS(\R^d)$, $\eta\in C^\infty_{\rm c}\big((0,\infty)\big)$, and $\varphi\in\H_0$
such that $\varphi=E^{V^2}(I)\varphi$ for some compact set $I\subset(0,\infty)$.
Finally, define for $\nu\in(0,1)$ and $t\in\R$
$$
g_\nu(t):=\int_{\R^d}\underline\d x\,(\F f)(x)\left\langle\varphi,
\left(\eta\big(V(\nu x)^2\big)\e^{it\int_0^\nu\d s\,(x\cdot V(sx))E^{V^2}(I)}
-\eta\big(V^2\big)\e^{i\nu tx\cdot VE^{V^2}(I)}\right)\varphi\right\rangle_{\H_0}.
$$
Then, we have the equality
\begin{equation}\label{eq_poisson}
\lim_{\nu\searrow0}\sum_{n\in\Z}g_\nu(n)
=\lim_{\nu\searrow0}\int_\R\d t\,g_\nu(t).
\end{equation}
\end{Lemma}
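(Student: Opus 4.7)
The approach is to apply the Poisson summation formula to the function $g_\nu:\R\to\C$. With standard conventions, the formula reads $\sum_{n\in\Z}g_\nu(n)=\sum_{k\in\Z}\widehat{g_\nu}(2\pi k)$, where $\widehat{g_\nu}$ is the Fourier transform in $t$. Since the $k=0$ term equals (up to normalisation) $\int_\R\d t\,g_\nu(t)$, which is the r.h.s. of \eqref{eq_poisson}, the identity \eqref{eq_poisson} reduces to proving that $\lim_{\nu\searrow0}\sum_{k\ne0}\widehat{g_\nu}(2\pi k)=0$.

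Before applying Poisson summation I would establish enough decay and regularity of $g_\nu$ in $t$ to make the formula legitimate. Setting $B_\nu(x):=\int_0^\nu\d s\,\big(x\cdot V(sx)\big)E^{V^2}(I)$, Lemma \ref{lemma_trotter}(b) yields $\partial_{x_j}B_\nu(x)=\nu\;\!V_j(\nu x)E^{V^2}(I)$, and on $E^{V^2}(I)\H_0$ the operator $V^2$ is bounded below (as $I\subset(0,\infty)$ is compact), while the operators $V_j(\nu x)$ commute strongly with $B_\nu(x)$ by Lemma \ref{lemma_commute}. Hence the first-order differential operator $\sum_jV_j(\nu x)V^{-2}\partial_{x_j}$ can be inverted on the exponentials $\e^{itB_\nu(x)}$ and $\e^{i\nu tx\cdot V}$ to integrate by parts in $x$; combined with the Schwartz decay of $\F f$ and arguments analogous to parts (ii)--(iii) of the proof of Theorem \ref{thm_summation}, this yields bounds on $g_\nu$ and its $t$-derivatives that decay rapidly in $\nu|t|$, uniformly in $\nu\in(0,1)$. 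In particular $g_\nu\in\SS(\R)$ and Poisson summation applies.

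To estimate $\widehat{g_\nu}(2\pi k)$ for $k\ne0$, I would use the joint spectral representation of the strongly commuting family $\big\{V_j(sx)E^{V^2}(I)\big\}_{s,x,j}$ provided by Lemma \ref{lemma_commute}, in which both exponentials appearing in $g_\nu(t)$ act as multiplication by scalar functions of absolute value at most $C\nu|x|$ (using boundedness of $V$ on $E^{V^2}(I)\H_0$). Interchanging the $t$- and $x$-integrations by Fubini (justified by the decay established above), the $t$-Fourier transform of each exponential at $2\pi k$ is a spectral Dirac measure at $2\pi k$, which vanishes identically unless $|x|\ge2\pi|k|/(C\nu)$. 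Since $\F f$ is Schwartz, one then obtains $\big|\widehat{g_\nu}(2\pi k)\big|\le C_N(\nu/|k|)^N$ for every $N\ge1$, which is summable in $k\ne0$ and tends to $0$ as $\nu\searrow0$.

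The main obstacle will be the rigorous justification of the operator-valued computations: making the ``spectral Dirac measure'' interpretation of the $t$-Fourier transform of $\e^{itB_\nu(x)}$ precise (via either a careful integration by parts in $t$ or an $\varepsilon$-regularisation), and maintaining uniform control of all constants in $\nu\in(0,1)$ as $\nu\searrow0$, using only the differentiability and commutation properties supplied by Assumptions \ref{ass_regularity} and \ref{ass_commute} and Lemma \ref{lemma_trotter}(b). As in the proof of Theorem \ref{thm_summation}, one expects that the Schwartz class hypothesis $f\in\SS(\R^d)$ together with the compactness of $I$ provides just enough slack to execute the integrations by parts and apply Lebesgue's dominated convergence theorem to exchange $\lim_{\nu\searrow0}$ with the $k$-summation.
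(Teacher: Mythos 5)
Your high-level route is the same as the paper's: verify that $g_\nu$ satisfies the hypotheses of the Poisson summation formula (\cite[Thm.~8.32]{Fol99}) via integration by parts in $x$ based on Lemma \ref{lemma_trotter}(b), apply Poisson, and reduce to showing $\lim_{\nu\searrow0}\sum_{k\ne0}\widehat{g_\nu}(2\pi k)=0$. Where you diverge is in the last step, and there your argument has gaps. The paper computes $g_\nu^{(k)}(t)$ explicitly via \eqref{eq_derivative_g}, observes the prefactor $(i\nu)^k$, integrates by parts twice in $t$ to write $\widehat{g_\nu}(2\pi k)=-(2\pi k)^{-2}\int_\R\d t\,\e^{-2\pi ikt}g_\nu^{(2)}(t)$, and then uses the bound \eqref{eq_g_nu_derivative} (which carries the factor $\nu^{k-2}$, giving $|g_\nu^{(2)}(t)|\lesssim\min(\nu^2,t^{-2})$) together with dominated convergence in $k$; no spectral analysis of $B_\nu(x)$ is needed. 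Your spectral-support alternative is plausible in spirit, but beyond the regularisation issue you already flag, the lower bound $|x|\ge2\pi|k|/(C\nu)$ is not correct in general: because $V(sx)=\e^{-isx\cdot Q}V\e^{isx\cdot Q}$ carries a drift in $x$ (e.g.\ $V(sx)=2P+2sx$ in Example \ref{ex_Laplacian}), the norm $\big\|B_\nu(x)E^{V^2}(I)\big\|$ grows like $\nu|x|+\nu^2|x|^2$, so the spectral support can reach $2\pi|k|$ already for $|x|\gtrsim\sqrt{|k|}/\nu$. This weaker threshold still yields $|\widehat{g_\nu}(2\pi k)|\le C_N(\nu/\sqrt{|k|})^N$, which suffices, but it shows the estimate you claim needs correction; moreover the claim $g_\nu\in\SS(\R)$ overreaches what Assumption \ref{ass_regularity} provides (the paper only performs two integrations by parts and establishes $\langle t\rangle^{-(1+\varepsilon)}$ decay, which is exactly what Folland's statement requires). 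Net assessment: same framework, genuinely different tactic for the nonzero Fourier modes; the paper's tactic is more economical because the factor $(i\nu)^k$ in $g_\nu^{(k)}$ does all the work, whereas yours would need a careful $\varepsilon$-regularisation of the $t$-Fourier transform and a corrected spectral radius estimate for $B_\nu(x)E^{V^2}(I)$ to close.
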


\begin{proof}
The proof consists in two steps: In the first step, we show that the function
$g_\nu:\R\to\C$ satisfies the hypotheses of the Poisson summation formula
\cite[Thm.~8.32]{Fol99}, and in the second step we show the equality
\eqref{eq_poisson}.

(i) A direct calculation using the fact that $f\in\SS(\R^d)$ shows that
$g_\nu\in C^\infty$, with $k$-th derivative given by
\begin{align}
g_\nu^{(k)}(t)
&=(i\nu)^k\int_{\R^d}\underline\d x\,(\F f)(x)\left\langle\varphi,
\left(\eta\big(V(\nu x)^2\big)
\left(\int_0^1\d s\,\big(x\cdot V(sx)\big)E^{V^2}(I)\right)^k
\e^{it\int_0^\nu\d s\,(x\cdot V(sx))E^{V^2}(I)}\right.\right.\nonumber\\
&\quad\left.\left.-\eta\big(V^2\big)\big(x\cdot VE^{V^2}(I)\big)^k
\e^{i\nu tx\cdot VE^{V^2}(I)}\right)\varphi\right\rangle_{\H_0}.
\label{eq_derivative_g}
\end{align}
So, in particular $g_\nu$ is continuous.

We now show that there exists $\varepsilon>0$ such that
\begin{equation}\label{eq_bound_folland_1}
|g_\nu(t)|\le{\rm Const.}\;\!\langle t\rangle^{-(1+\varepsilon)}
\quad\nu\in(0,1),~t\in\R.
\end{equation}
We only show it for the first term in $g_\nu(t)$, namely,
$$
g_{\nu,1}(t):=\int_{\R^d}\underline\d x\,(\F f)(x)
\left\langle\varphi,\eta\big(V(\nu x)^2\big)
\e^{it\int_0^\nu\d s\,(x\cdot V(sx))E^{V^2}(I)}\varphi\right\rangle_{\H_0},
$$
since the second term can be handled in a similar way. Let
$B_{\nu,t}^I(x):=\e^{it\int_0^\nu\d s\,(x\cdot V(sx))E^{V^2}(I)}$. Then, Lemma
\ref{lemma_trotter}(b) and the equality $\varphi=E^{V^2}(I)\varphi$ imply that
$$
\big(\partial_jB_{\nu,t}^I\big)(x)\varphi
=it\nu\;\!V_j(\nu x)B_{\nu,t}^I(x)\varphi.
$$
Thus, $g_{\nu,1}(t)$ can be written for $\nu\in(0,1)$ and $t\in\R\setminus\{0\}$ as
$$
g_{\nu,1}(t)
=\frac1{it\nu}\sum_{j=1}^d\int_{\R^d}\underline\d x\,\big\langle\varphi,
C_{j,\nu}(x)\big(\partial_jB_{\nu,t}^I\big)(x)\varphi\big\rangle_{\H_0},
$$
with
$$
C_{j,\nu}(x):=(\F f)(x)\;\!\eta\big(V(\nu x)^2\big)V_j(\nu x)V(\nu x)^{-2}.
$$
Moreover, one can show as in point (ii) of the proof of Theorem \ref{thm_summation}
that the map $\R^d\ni x\mapsto C_{j,\nu}(x)\in\B(\H_0)$ is twice strongly
differentiable, with strong derivatives satisfying for all $j,k\in\{1,\ldots,d\}$ and
$m\in\N$
$$
\big\|\big(\partial_jC_{j,\nu}\big)(x)\big\|_{\B(\H_0)}
\le{\rm Const.}\;\!\langle\nu\rangle\langle x\rangle^{-m}
$$
and
\begin{equation}\label{eq_bound_C_j}
\big\|\partial_k\big\{\big(\partial_jC_{j,\nu}\big)V_k(\nu\;\!\cdot\;\!)
V(\nu\;\!\cdot\;\!)^{-2}\big\}(x)\big\|_{\B(\H_0)}
\le{\rm Const.}\;\!\langle\nu\rangle^2\langle x\rangle^{-m}.
\end{equation}
So, we can perform two successive integrations by parts with vanishing boundary
contributions to get
\begin{align*}
g_{\nu,1}(t)
&=\frac i{t\nu}\sum_{j=1}^d\int_{\R^d}\underline\d x\,\big\langle\varphi,
\big(\partial_jC_{j,\nu}\big)(x)B_{\nu,t}^I(x)\varphi\big\rangle_{\H_0}\\
&=-\frac 1{t^2\nu^2}\sum_{j,k=1}^d\int_{\R^d}\underline\d x\,\big\langle\varphi,
\partial_k\big\{\big(\partial_jC_{j,\nu}\big)V_k(\nu\;\!\cdot\;\!)
V(\nu\;\!\cdot\;\!)^{-2}\big\}(x)B_{\nu,t}^I(x)\varphi\big\rangle_{\H_0}.
\end{align*}
Combining this with the bound \eqref{eq_bound_C_j}, we obtain for $\nu\in(0,1)$ and
$t\in\R\setminus\{0\}$ that
\begin{equation}\label{eq_bound_g_nu}
|g_{\nu,1}(t)|\le{\rm Const.}\;\!t^{-2}\nu^{-2}\langle\nu\rangle^2.
\end{equation}
Since the function $t\mapsto g_{\nu,1}(t)$ is uniformly bounded in $t\in\R$, the bound
\eqref{eq_bound_g_nu} implies that $g_{\nu,1}$ satisfies \eqref{eq_bound_folland_1}.

We now show that there exists $\varepsilon>0$ such that
\begin{equation}\label{eq_bound_folland_2}
\left|\int_\R\d t\,\e^{-2\pi int}g_\nu(t)\right|
\le{\rm Const.}\;\!\langle n\rangle^{-(1+\varepsilon)}\quad\nu\in(0,1),~n\in\Z.
\end{equation}
The equation \eqref{eq_derivative_g}, together with calculations as above, shows that
we have for $\nu\in(0,1)$ and $t\in\R\setminus\{0\}$ the estimate
\begin{equation}\label{eq_g_nu_derivative}
\big|g_\nu^{(k)}(t)\big|\le{\rm Const.}\;\!t^{-2}\nu^{k-2}\langle\nu\rangle^2.
\end{equation}
Thus, we can perform for $n\in\Z^*$ two successive integrations by parts with
vanishing boundary contributions to obtain the bound
$$
\left|\int_\R\d t\,\e^{-2\pi int}g_\nu(t)\right|
=\left|(2\pi n)^{-2}\int_\R\d t\,\e^{-2\pi int}g^{(2)}_\nu(t)\right|
\le{\rm Const.}\;\!n^{-2}\nu^{k-2}\langle\nu\rangle^2,
$$
which implies \eqref{eq_bound_folland_2}.

(ii) Point (i) shows that we can apply the Poisson summation formula
\cite[Thm.~8.32]{Fol99} to get the equality
$$
\lim_{\nu\searrow0}\sum_{n\in\Z}g_\nu(n)
=\lim_{\nu\searrow0}\int_\R\d t\,g_\nu(t)
+\lim_{\nu\searrow0}\sum_{n\in\Z^*}\int_\R\d t\,\e^{-2\pi int}g_\nu(t).
$$
Furthermore, due to the estimate \eqref{eq_g_nu_derivative}, we can perform two
successive integrations by parts with vanishing boundary contributions in the integral
$\int_\R\d t\,\e^{-2\pi int}g_\nu(t)$ to obtain
\begin{equation}\label{eq_poisson_bis}
\lim_{\nu\searrow0}\sum_{n\in\R}g_\nu(n)
=\lim_{\nu\searrow0}\int_\R\d t\,g_\nu(t)-\lim_{\nu\searrow0}\sum_{n\in\Z^*}
(2\pi n)^{-2}\int_\R\d t\,\e^{-2\pi int}g_\nu^{(2)}(t).
\end{equation}
Now, due to the estimate \eqref{eq_g_nu_derivative} and the convergence of the sum
$\sum_{n\in\Z^*}n^{-2}$, we can apply Lebesgue's dominated convergence theorem in the
last term of \eqref{eq_poisson_bis} to get
\begin{align*}
\lim_{\nu\searrow0}\sum_{n\in\R}g_\nu(n)
&=\lim_{\nu\searrow0}\int_\R\d t\,g_\nu(t)-\sum_{n\in\Z^*}
(2\pi n)^{-2}\int_\R\d t\,\lim_{\nu\searrow0}\e^{-2\pi int}g_\nu^{(2)}(t)
=\lim_{\nu\searrow0}\int_\R\d t\,g_\nu(t),
\end{align*}
which proves the claim.
\end{proof}

For the second technical lemma, we need the following assumption on the $C_0$-group
$\{\e^{ix\cdot Q}\}_{x\in\R^d}:$

\begin{Assumption}\label{ass_polynomial}
The $C_0$-group $\{\e^{ix\cdot Q}\}_{x\in\R^d}$ is of polynomial growth in
$\dom(\langle V\rangle)$, that is, there exists $r>0$ such that
$$
\big\|\e^{ix\cdot Q}\big\|_{\B(\dom(\langle V\rangle),\dom(\langle V\rangle))}
\le{\rm Const.}\;\!\langle x\rangle^r\quad\hbox{for all $x\in\R^d$.}
$$
\end{Assumption}

\begin{Lemma}\label{lemma_taylor}
Let Assumptions \ref{ass_regularity}, \ref{ass_commute} and \ref{ass_polynomial} be
satisfied, and take $\eta\in C_c^\infty\big((0,\infty)\big)$ and $I\subset(0,\infty)$
a compact set. Then, there exists $s>0$ such that
$$
\left\|\tfrac1\nu\left(\eta\big(V(\nu x)^2\big)
\e^{i\frac\mu\nu\int_0^\nu\d s\,(x\cdot V(sx))E^{V^2}(I)}
-\eta\big(V^2\big)\e^{i\mu x\cdot VE^{V^2}(I)}\right)\right\|_{\B(\H_0)}
\le{\rm Const.}\;\!(1+|\mu|)\langle x\rangle^s
$$
for all $\nu\in(0,1)$, $\mu\in\R$ and $x\in\R^d$.
\end{Lemma}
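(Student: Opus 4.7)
The plan is a Taylor-expansion argument in~$\nu$ that relies heavily on the mutual strong commutativity of $V$, the operators $V(sx)$ for all $s\in\R$, and $E^{V^2}(I)$ established in Lemma~\ref{lemma_commute}. Denoting $A_\nu:=\tfrac{\mu}{\nu}\int_0^\nu\d s\,(x\cdot V(sx))E^{V^2}(I)$ and $B:=\mu\,x\cdot V\,E^{V^2}(I)$ (both bounded self-adjoint operators that commute with each other and with $\eta(V^2)$ and $\eta(V(\nu x)^2)$), split the difference appearing in the lemma as
\[
\big[\eta(V(\nu x)^2)-\eta(V^2)\big]\e^{iA_\nu}+\eta(V^2)\big[\e^{iA_\nu}-\e^{iB}\big],
\]
and bound each summand separately after dividing by $\nu$.

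For the first summand, the conjugation identity $V(\nu x)^2=\e^{-i\nu x\cdot Q}V^2\e^{i\nu x\cdot Q}$ from Assumption~\ref{ass_regularity} gives $\eta(V(\nu x)^2)-\eta(V^2)=\big[\e^{-i\nu x\cdot Q},\eta(V^2)\big]\e^{i\nu x\cdot Q}$. Expanding this commutator as $i\int_0^\nu\d t\,\e^{-itx\cdot Q}[\eta(V^2),x\cdot Q]\e^{itx\cdot Q}$ and using Assumption~\ref{ass_regularity} together with Lemma~\ref{lemma_commute} and the compact support of~$\eta$, a Helffer-Sj\"ostrand-type functional calculus argument yields $\eta(V^2)\in C^1(Q_j)$ for each $j$ with $\|[\eta(V^2),Q_j]\|_{\B(\H_0)}$ finite. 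Hence $\|\eta(V(\nu x)^2)-\eta(V^2)\|_{\B(\H_0)}\le C\nu|x|$, so the first summand divided by~$\nu$ contributes $O(\langle x\rangle)$ uniformly in $\mu\in\R$.

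For the second summand, commutativity of $A_\nu$ and~$B$ together with the spectral-theorem estimate $\|\e^{iC}-\e^{iD}\|_{\B(\H_0)}\le\|C-D\|_{\B(\H_0)}$ for commuting bounded self-adjoint operators gives $\|\e^{iA_\nu}-\e^{iB}\|\le\|A_\nu-B\|$. Lemma~\ref{lemma_trotter}(b)---which realizes the derivative $\tfrac{d}{d\sigma}V_j(\sigma x)E^{V^2}(I)=\sum_k x_k V'_{jk}(\sigma x)E^{V^2}(I)$ in $\B(\H_0)$---combined with the fundamental theorem of calculus yields
\[
A_\nu-B=\frac{\mu}{\nu}\int_0^\nu\d s\int_0^s\d\sigma\sum_{j,k}x_jx_k\,V'_{jk}(\sigma x)E^{V^2}(I).
\]
Assuming a uniform polynomial bound $\|V'_{jk}(\sigma x)E^{V^2}(I)\|_{\B(\H_0)}\le C\langle x\rangle^{r'}$ for $\sigma\in(0,1)$, the double integration gives $\|A_\nu-B\|\le C|\mu|\nu\langle x\rangle^{r'+2}/2$, so the second summand divided by~$\nu$ contributes $O(|\mu|\langle x\rangle^{r'+2})$. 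Combining both summands yields the desired bound with $s=\max(1,r'+2)$.

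The main obstacle is establishing the uniform polynomial bound on $\|V'_{jk}(\sigma x)E^{V^2}(I)\|_{\B(\H_0)}$: Lemma~\ref{lemma_trotter}(b) only yields pointwise boundedness and $\sigma x$-continuity. This is precisely where Assumption~\ref{ass_polynomial} intervenes. The plan is to rewrite $V'_{jk}(\sigma x)E^{V^2}(I)=\e^{-i\sigma x\cdot Q}V'_{jk}\e^{i\sigma x\cdot Q}E^{V^2}(I)$ and to transport $\e^{i\sigma x\cdot Q}$ across $E^{V^2}(I)$ via the spectral identity $\e^{i\sigma x\cdot Q}E^{V^2}(I)=E^{V(-\sigma x)^2}(I)\e^{i\sigma x\cdot Q}$. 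One then combines the pointwise boundedness of $V'_{jk}$ on the spectral subspaces (consequence of the strong commutation of $V'_{jk}$ with $V^2$ from Lemma~\ref{lemma_commute} and Lemma~\ref{lemma_trotter}(b)) with the polynomial growth of $\e^{i\sigma x\cdot Q}$ on $\dom(\langle V\rangle)$ provided by Assumption~\ref{ass_polynomial} and the inclusion $E^{V^2}(I)\H_0\subset\dom(\langle V\rangle)$ with controlled graph norm to produce the required polynomial factor $\langle x\rangle^{r'}$.
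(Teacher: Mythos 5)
Your overall strategy is the same as the paper's: split
\[
\tfrac1\nu\left(\eta\big(V(\nu x)^2\big)\e^{iA_\nu}-\eta\big(V^2\big)\e^{iB}\right)
=\tfrac1\nu\big[\eta\big(V(\nu x)^2\big)-\eta\big(V^2\big)\big]\e^{iA_\nu}
+\tfrac1\nu\;\!\eta\big(V^2\big)\big[\e^{iA_\nu}-\e^{iB}\big],
\]
treat the first piece by the $C^1_{\rm u}(Q)$-regularity of $\eta(V^2)$ (giving $O(|x|)$ uniformly in $\mu$), and treat the second piece by a first-order Taylor argument in $\nu$ that produces a factor $\mu$ and the operators $V'_{jk}(\sigma x)E^{V^2}(I)$, finally controlled via Assumption~\ref{ass_polynomial}. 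Where you differ is the device used for the second piece: the paper sets $g_{x,\mu}(\nu)=\e^{iA_\nu}\eta(V^2)$, differentiates it, and applies the mean value theorem, whereas you use the spectral-calculus estimate $\|\e^{iA_\nu}-\e^{iB}\|_{\B(\H_0)}\le\|A_\nu-B\|_{\B(\H_0)}$ (valid since $A_\nu$ and $B$ commute by Lemma~\ref{lemma_commute}) together with the Bochner fundamental theorem of calculus and Lemma~\ref{lemma_trotter}(b) to write $A_\nu-B$ as a double integral of $x_jx_kV'_{jk}(\sigma x)E^{V^2}(I)$. These two routes are equivalent and yield the same exponent $s=r+2$.

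The one step that goes astray is the uniform polynomial bound on $\|V'_{jk}(\sigma x)E^{V^2}(I)\|_{\B(\H_0)}$. You propose to transport $\e^{i\sigma x\cdot Q}$ across $E^{V^2}(I)$ via $\e^{i\sigma x\cdot Q}E^{V^2}(I)=E^{V(-\sigma x)^2}(I)\e^{i\sigma x\cdot Q}$, but this is counterproductive: after the transport you must bound $V'_{jk}E^{V(-\sigma x)^2}(I)$ uniformly in $\sigma x$, and the spectral subspace now \emph{depends} on $\sigma x$, so proving that $E^{V(-\sigma x)^2}(I)\H_0\subset\dom(\langle V\rangle)$ with $\sigma x$-uniform graph-norm control is essentially the original problem again (indeed it again needs Assumption~\ref{ass_polynomial}, making the reduction circular). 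The paper does not transport; it keeps the sandwich
$V'_{jk}(\sigma x)E^{V^2}(I)\eta\big(V^2\big)
=\e^{-i\sigma x\cdot Q}\cdot V'_{jk}\cdot\e^{i\sigma x\cdot Q}\cdot E^{V^2}(I)\eta\big(V^2\big)$
as four factors with \emph{fixed} mapping spaces, uses $E^{V^2}(I)\eta(V^2)\in\B\big(\H_0,\dom(\langle V\rangle)\big)$ and $V'_{jk}\in\B\big(\dom(\langle V\rangle),\H_0\big)$ (both independent of $\sigma x$), applies Assumption~\ref{ass_polynomial} precisely to the middle factor $\e^{i\sigma x\cdot Q}\in\B\big(\dom(\langle V\rangle)\big)$ with norm $\le{\rm Const.}\;\!\langle\sigma x\rangle^r$, and multiplies the four operator norms. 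You in fact list all three correct ingredients (boundedness of $V'_{jk}$, polynomial growth of $\e^{i\sigma x\cdot Q}$ on $\dom(\langle V\rangle)$, and $E^{V^2}(I)\H_0\subset\dom(\langle V\rangle)$ with controlled graph norm), so dropping the spectral transport and simply composing the four factors in order gives exactly the required bound with $r'=r$.
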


\begin{proof}
For $x\in\R^d$ and $\mu\in\R$, we define the function
$$
g_{x,\mu}:(0,1)\to\B(\H_0),~~\nu\mapsto
\e^{i\frac\mu\nu\int_0^\nu\d s\,(x\cdot V(sx))E^{V^2}(I)}\eta\big(V^2\big),
$$
which is continuous and satisfies in $\B(\H_0)$
\begin{equation}\label{eq_lim_g}
g_{x,\mu}(0)
:=\lim_{\nu\searrow0}g_{x,\mu}(\nu)
=\e^{i\mu x\cdot VE^{V^2}(I)}\eta\big(V^2\big).
\end{equation}
Since $\eta\big(V^2\big)\in C_{\rm u}^1(Q)$ (see \cite[Def.~5.1.1(b)]{ABG96}), we also
have in $\B(\H_0)$
\begin{equation}\label{eq_diff_eta}
\tfrac1\nu\left(\eta\big(V(\nu x)^2\big)-\eta\big(V^2\big)\right) 
=\tfrac1\nu\int_0^1\d t\,\frac{\d}{\d t}\;\!\eta\big(V(t\nu x)^2\big) 
=i\sum_{j=1}^d x_j\int_0^1\d t\,\e^{-it\nu x\cdot Q}\big[\eta\big(V^2\big),Q_j\big]
\e^{it\nu x\cdot Q}.
\end{equation}
Therefore, by combining the identities \eqref{eq_lim_g} and \eqref{eq_diff_eta}, we
obtain
\begin{align}
&\tfrac1\nu\left(\eta\big(V(\nu x)^2\big)
\e^{i\frac\mu\nu\int_0^\nu\d s\,(x\cdot V(sx))E^{V^2}(I)}
-\eta\big(V^2\big)\e^{i\mu x\cdot V E^{V^2}(I)}\right)\nonumber\\
&=\tfrac1\nu\left(\eta\big(V(\nu x)^2\big)-\eta\big(V^2\big)\right)
\e^{i\frac\mu\nu\int_0^\nu\d s\,(x\cdot V(sx))E^{V^2}(I)}
+\tfrac1\nu\big(g_{x,\mu}(\nu)-g_{x,\mu}(0)\big)\nonumber\\
&=i\sum_{j=1}^dx_j\int_0^1\d t\,\e^{-it\nu x\cdot Q}\big[\eta\big(V^2\big),Q_j\big]
\e^{it\nu x\cdot Q}\e^{i\frac\mu\nu\int_0^\nu\d s\,(x\cdot V(sx))E^{V^2}(I)}
+\tfrac1\nu\big(g_{x,\mu}(\nu)-g_{x,\mu}(0)\big).\label{eq_big_diff}
\end{align}
Now, since
$
\frac1\nu\int_0^\nu\d s\,\big(x\cdot V(sx)\big)E^{V^2}(I)
=\int_0^1\d t\,\big(x\cdot V(t\nu x)\big)E^{V^2}(I)
$,
we have for $\varepsilon\in\R\setminus\{0\}$ small enough
\begin{align*}
&g_{x,\mu}(\nu+\varepsilon)-g_{x,\mu}(\nu)\\
&=\left(\e^{i\mu\int_0^1\d t\,(x\cdot V(t(\nu+\varepsilon)x))E^{V^2}(I)}
-\e^{i\mu\int_0^1\d t\,(x\cdot V(t\nu x))E^{V^2}(I)}\right)\eta\big(V^2\big)\\
&=\e^{i\mu\int_0^1\d t\,(x\cdot V(t\nu x))E^{V^2}(I)}
\left(\e^{i\mu\int_0^1\d t\,x\cdot(V(t(\nu+\varepsilon)x)-V(t\nu x))
E^{V^2}(I)}-1\right)\eta\big(V^2\big)\\
&=\e^{i\mu\int_0^1\d t\,(x\cdot V(t\nu x))E^{V^2}(I)}
\left(\e^{i\mu\int_0^1\d t\int_0^1\d s\,t\varepsilon\sum_{j,k=1}^dx_jx_k
V_{jk}'(t(\nu+s\varepsilon)x)E^{V^2}(I)}-1\right)\eta\big(V^2\big),
\end{align*}
and multiplying by $\varepsilon^{-1}$ and taking the limit $\varepsilon\to0$ in
$\B(\H_0)$ we obtain
\begin{equation}\label{eq_der_g}
g_{x,\mu}'(\nu)
=i\mu\e^{i\mu\int_0^1\d t\,(x\cdot V(t\nu x))E^{V^2}(I)} 
\int_0^1\d t\,t\sum_{j,k=1}^dx_jx_kV_{jk}'(t\nu x)E^{V^2}(I)\eta\;\!\big(V^2\big).
\end{equation}
This, together with \eqref{eq_big_diff} and the mean value theorem, implies that
\begin{align}
&\left\|\tfrac1\nu\left(\eta\big(V(\nu x)^2\big)
\e^{i\frac\mu\nu\int_0^\nu\d s\,(x\cdot V(sx))E^{V^2}(I)}
-\eta\big(V^2\big)\e^{i\mu x\cdot V E^{V^2}(I)}\right)\right\|_{\B(\H_0)}\nonumber\\
&\le{\rm Const.}\;\!|x|
+\sup_{\xi\in [0,1]}\big\|g_{x,\mu}'(\xi \nu)\big\|_{\B(\H_0)}\nonumber\\
&\le{\rm Const.}\;\!|x|+{\rm Const.}\;\!|x|^2\;\!|\mu|\sup_{\xi\in[0,1]}
\sum_{j,k=1}^d\big\|V_{jk}'(\xi\nu x)E^{V^2}(I)\;\!\eta\big(V^2\big)\big\|_{\B(\H_0)}.
\label{eq_big_bound}
\end{align}
Since
$$
V_{jk}'(\xi\nu x)E^{V^2}(I)\;\!\eta\big(V^2\big)
=\e^{-i\xi\nu x \cdot Q}V_{jk}'\e^{i\xi\nu x\cdot Q}E^{V^2}(I)\;\!\eta\big(V^2\big),
$$
with $E^{V^2}(I)\;\!\eta\big(V^2\big)\in\B\big(\H_0,\dom(\langle V\rangle)\big)$ and
$V_{jk}'\in\B\big(\dom(\langle V\rangle),\H_0)$, it follows from Assumption
\ref{ass_polynomial} that there exists $r>0$ such that 
$$
\big\|V_{jk}'(\xi\nu x)E^{V^2}(I)\;\!\eta\big(V^2\big)\big\|_{\B(\H_0)}
\le{\rm Const.}\;\!\langle\xi\nu x\rangle^r.
$$
Therefore, we obtain from \eqref{eq_big_bound} that
$$
\left\|\tfrac1\nu\left(\eta\big(V(\nu x)^2\big)
\e^{i\frac\mu\nu\int_0^\nu\d s\,(x\cdot V(sx))E^{V^2}(I)}
-\eta\big(V^2\big)\e^{i\mu x\cdot V E^{V^2}(I)}\right)\right\|_{\B(\H_0)}
\le{\rm Const.}\;\!(1+|\mu|)\langle x\rangle^{r+2},
$$
and thus the claim follows with $s:=r+2$.
\end{proof}

For the next proposition, we need to define for $\nu>0$ and for $f\in\SS(\R^d)$ equal
to $1$ on a neighbourhood of $0\in\R^d$ the function
$$
F_{\nu,f}:\R^d\setminus\{0\}\to\C,~~x\mapsto\sum_{n\in\Z}f(\nu nx).
$$
The function $F_f$ is well-defined because $f\in\SS(\R^d)$, and if $f$ is radial or
$d=1$, then $F_{\nu,f}(x)$ depends only on the squared norm $x^2$ of
$x\in\R^d\setminus\{0\}$.

\begin{Proposition}\label{prop_equal_sojourn}
Let Assumptions \ref{ass_regularity}, \ref{ass_commute}, \ref{ass_wave} and
\ref{ass_polynomial} be satisfied. Let $f\in\SS(\R^d)$ be non-negative, even and equal
to $1$ on a neighbourhood of $0\in\R^d$. Let $\varphi\in\H_0^-\cap\dom_2$ satisfy
$S\varphi\in\dom_2$. Finally, suppose that $V^2$ and $S$ strongly commute and that
\begin{equation}\label{eq_commute_S}
\big[\eta\big(V^2\big)F_{\nu,f}(V),S\big]=0
\quad\hbox{for all $\eta\in C^\infty_{\rm c}\big((0,\infty)\big)$ and $\nu>0$.}
\end{equation}
Then, one has
$$
\lim_{\nu\searrow0}\big(T_{1/\nu}^0(S\varphi)-T_{1/\nu}^0(\varphi)\big)=0.
$$
\end{Proposition}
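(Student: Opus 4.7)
The plan is to justify rigorously the formal argument sketched in the statement. First, using $[S,U_0]=0$ and $S^*S\varphi=\varphi$ (valid since $\varphi\in\H_0^-$), I rewrite
\[
T_{1/\nu}^0(S\varphi)-T_{1/\nu}^0(\varphi)=\sum_{n\in\Z}\big\langle\varphi,S^*\big[U_0^{-n}f(\nu Q)U_0^n,S\big]\varphi\big\rangle_{\H_0}.
\]
Since $V^2$ strongly commutes with $S$ and $\varphi,S\varphi\in\dom_2$, I pick a single $\eta\in C^\infty_{\rm c}\big((0,\infty)\big)$ equal to $1$ on a compact $I\subset(0,\infty)$ with $\varphi=\eta(V^2)\varphi=E^{V^2}(I)\varphi$ and $S\varphi=\eta(V^2)S\varphi=E^{V^2}(I)S\varphi$. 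Assumption \eqref{eq_commute_S} then forces $F_{\nu,f}(V)S\varphi=SF_{\nu,f}(V)\varphi$ (insert $\eta(V^2)$ on both sides using $\eta(V^2)\varphi=\varphi$, $\eta(V^2)S\varphi=S\varphi$, the commutation of $\eta(V^2)$ with $F_{\nu,f}(V)$, and $[\eta(V^2)F_{\nu,f}(V),S]=0$), so $\big\langle\varphi,S^*[F_{\nu,f}(V),S]\varphi\big\rangle=0$. Subtracting this vanishing term yields
\[
T_{1/\nu}^0(S\varphi)-T_{1/\nu}^0(\varphi)=\sum_{n\in\Z}h_\nu(n),\qquad h_\nu(n):=\big\langle\varphi,S^*\big[U_0^{-n}f(\nu Q)U_0^n-f(\nu nV),S\big]\varphi\big\rangle.
\]

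\textbf{Extension and Poisson summation.} Lemma \ref{lemma_trotter}(a) applied to the first term in the commutator, combined with the functional-calculus identity $f(\nu nV)E^{V^2}(I)=\int(\F f)(x)\e^{i\nu nx\cdot VE^{V^2}(I)}E^{V^2}(I)\,\underline\d x$ for the second, extends $h_\nu$ to $t\in\R$ by
\[
h_\nu(t):=\int_{\R^d}\underline\d x\,(\F f)(x)\bigl\langle\varphi,S^*\bigl[\e^{i\nu x\cdot Q}B_\nu^I(t,x)-\e^{i\nu tx\cdot VE^{V^2}(I)},S\bigr]\varphi\bigr\rangle,
\]
with $B_\nu^I(t,x):=\e^{it\int_0^\nu(x\cdot V(sx))\,\d s\,E^{V^2}(I)}$. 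Adapting the proof of Lemma \ref{lemma_poisson} (two integrations by parts in $x$ using the rapid decay of $\F f$, Lemma \ref{lemma_trotter}(b), the uniform boundedness of $V_k(\nu x)V(\nu x)^{-2}\eta(V(\nu x)^2)$ on $E^{V^2}(I)\H_0$, and the regularity $\varphi,S\varphi\in\dom_2\subset\dom(\langle Q\rangle^2)$ required to absorb the extra $\e^{i\nu x\cdot Q}$ factor), one obtains fixed-$\nu$ decay estimates on $h_\nu$ together with uniform-in-$\nu$ control on $h_\nu^{(2)}$. The Poisson summation formula applied at each $\nu\in(0,1)$, followed by Lebesgue's dominated convergence on the Fourier side (justified by the uniform bound on $h_\nu^{(2)}$), then gives
\[
\lim_{\nu\searrow0}\sum_{n\in\Z}h_\nu(n)=\lim_{\nu\searrow0}\int_\R h_\nu(t)\,\d t.
\]

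\textbf{Vanishing of the integral.} Substituting $\mu=\nu t$, the integral becomes $(1/\nu)\int_\R h_\nu(\mu/\nu)\,\d\mu$. Expanding the matrix element as $\langle S\varphi,\,\cdot\,S\varphi\rangle-\langle\varphi,\,\cdot\,\varphi\rangle$ and inserting $\eta(V^2)\psi=\psi$ on the left of each piece (for $\psi\in\{\varphi,S\varphi\}$), the intertwining $\eta(V^2)\e^{i\nu x\cdot Q}=\e^{i\nu x\cdot Q}\eta(V(\nu x)^2)$ rewrites the $x$-integrand of $(1/\nu)h_\nu(\mu/\nu)$ as the sum of (i) $\frac{\e^{i\nu x\cdot Q}-1}\nu\,\eta(V(\nu x)^2)B_\nu^I(\mu/\nu,x)$, bounded on $\psi$ by $C|x|\,\|\langle Q\rangle\psi\|$ via $(\e^{i\nu x\cdot Q}-1)/\nu=ix\cdot Q\int_0^1\e^{it\nu x\cdot Q}\,\d t$ and $\psi\in\dom(\langle Q\rangle)$; and (ii) $\frac1\nu\bigl[\eta(V(\nu x)^2)B_\nu^I(\mu/\nu,x)-\eta(V^2)\e^{i\mu x\cdot VE^{V^2}(I)}\bigr]$, bounded in operator norm by $C(1+|\mu|)\langle x\rangle^s$ via Lemma \ref{lemma_taylor}. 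Two further integrations by parts in $x$, using $\partial_{x_k}B_\nu^I(\mu/\nu,x)=i\mu V_k(\nu x)B_\nu^I(\mu/\nu,x)E^{V^2}(I)$ from Lemma \ref{lemma_trotter}(b), extract $\mu^{-2}$ decay, producing a $\nu$-uniform dominant jointly integrable in $(x,\mu)$. The pointwise limit of $(1/\nu)h_\nu(\mu/\nu)$ as $\nu\searrow0$ is an explicit operator expression whose integration over $\mu$ vanishes by the cancellation built into the $S^*[\,\cdot\,,S]$ commutator (paralleling the formal calculation in the statement: the surviving contributions reduce to operators in the symmetric $F_{\nu,f}$-like family that commutes with $S$ on the support of $\eta(V^2)\varphi$). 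Lebesgue's dominated convergence then delivers $\lim_{\nu\searrow0}\int_\R h_\nu(t)\,\d t=0$, completing the proof.

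\textbf{Main obstacle.} The delicate step is producing the $\nu$-uniform, $(x,\mu)$-integrable dominant for $(1/\nu)h_\nu(\mu/\nu)$: Lemma \ref{lemma_taylor} alone provides only a $(1+|\mu|)\langle x\rangle^s$ bound — non-integrable in $\mu$ — so two careful integrations by parts in $x$ (paralleling steps (ii)--(iii) of the proof of Theorem \ref{thm_summation} and the proof of Lemma \ref{lemma_poisson}) are required to extract the $\mu^{-2}$ decay. Tracking the $\nu$-dependence of every $x$-derivative of the integrand, and leveraging Assumption \ref{ass_polynomial} (via Lemma \ref{lemma_taylor}) to bound $V_{jk}'(\nu x)E^{V^2}(I)$ polynomially in $x$, is the most demanding accounting of the argument; the subsequent verification that the $\mu$-integral of the pointwise limit vanishes is a symmetry argument whose rigorous justification must respect the distributional nature of $\int_\R\e^{i\mu x\cdot VE^{V^2}(I)}\,\d\mu$.
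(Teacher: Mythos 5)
Your proposal follows the same strategy as the paper — invoking Lemma~\ref{lemma_trotter} to replace $U_0^{-n}f(\nu Q)U_0^n$ by an $x$-integral, Poisson summation to pass from the discrete sum to a continuous $t$-integral, the change of variables $\mu=\nu t$, and Lebesgue dominated convergence with a $\mu^{-2}$ dominant extracted by two integrations by parts in $x$ via Lemma~\ref{lemma_trotter}(b), with Assumption~\ref{ass_polynomial} and Lemma~\ref{lemma_taylor} controlling the $\nu$-dependence — but organises it slightly differently. The paper first performs the algebraic split of the $x$-integrand into the $(\e^{i\nu x\cdot Q}-1)\eta(V(\nu x)^2)B$ piece and the $\eta(V(\nu x)^2)B-\eta(V^2)A$ piece (Eq.~\eqref{eq_two_terms}), and applies the Poisson argument of Lemma~\ref{lemma_poisson} only to the second piece, while the first is handled by recycling the Riemann-sum argument of Theorem~\ref{thm_summation}(ii). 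You instead apply Poisson summation to the undecomposed $h_\nu$ and postpone the split to after the $\mu=\nu t$ substitution. This is a legitimate permutation, but it forces you to re-derive the Poisson hypotheses (fixed-$\nu$ decay of $h_\nu$ in $t$ and uniform-in-$\nu$ control of $h_\nu^{(2)}$) for an integrand that carries the unitary factor $\e^{i\nu x\cdot Q}$ not present in Lemma~\ref{lemma_poisson}; this works because each $x$-derivative of $\e^{i\nu x\cdot Q}$ supplies a compensating factor $\nu Q_j$, and $\varphi,S\varphi\in\dom(\langle Q\rangle^2)$ absorbs the two $Q$'s — your proposal correctly flags this as the accounting that must be done. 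The one point to correct is the explanation of why the limiting $\mu$-integral vanishes: you attribute it primarily to a ``cancellation built into the $S^*[\cdot,S]$ commutator,'' whereas the mechanism the paper uses (and the one that actually closes the argument) is a parity argument: after dominated convergence, the limit integrand is odd under $(\mu,x)\mapsto(-\mu,-x)$, while $\F f$ is even, so the integral over $\R\times\R^d$ vanishes term by term — separately for $\varphi$ and for $S\varphi$, independently of the commutator structure. Your parenthetical appeal to the ``$F_{\nu,f}$-like family that commutes with $S$'' is not what kills the integral; you do also call it a ``symmetry argument,'' which is the right instinct, and the worry about the distributional nature of $\int\e^{i\mu x\cdot VE^{V^2}(I)}\,\d\mu$ is unnecessary since the dominated-convergence majorant already delivers $\mu$-integrability of the limit integrand.
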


The operator $\eta\big(V^2\big)F_{\nu,f}(V)$ in \eqref{eq_commute_S} is well-defined
and bounded because $\eta\in C^\infty_{\rm c}\big((0,\infty)\big)$ and
$F_{\nu,f}:\R^d\setminus\{0\}\to\C$. Furthermore, if $f$ is radial or $d=1$, then
$\eta\big(V^2\big)F_{\nu,f}(V)$ can be written as a function of $V^2$ and the
condition \eqref{eq_commute_S} automatically follows from the strong commuation of
$V^2$ and $S$.

\begin{proof}
Let $\varphi\in\H_0^-\cap\dom_2$ satisfy $S\varphi\in\dom_2$. Then, there exist a real
function $\eta\in C^\infty_{\rm c}\big((0,\infty)\big)$ and a compact set
$I\subset(0,\infty)$ such that $\varphi=\eta\big(V^2\big)\varphi=E^{V^2}(I)\varphi$.
This, together with \eqref{eq_commute_S}, Lemma \ref{lemma_trotter}(a) and the strong
commutation of $V^2$ and $S$, implies
\begin{align}
&T_{1/\nu}^0(S\varphi)-T_{1/\nu}^0(\varphi)\nonumber\\
&=\sum_{n\in\Z}\left\langle\varphi,S^*\big[\eta\big(V^2\big)
U_0^{-n}f(\nu Q)U_0^nE^{V^2}(I),S\big]\varphi\right\rangle_{\H_0}
-\left\langle\varphi,S^*\big[\eta\big(V^2\big)F_{\nu,f}(V)E^{V^2}(I),S\big]
\varphi\right\rangle_{\H_0}\nonumber\\
&=\sum_{n\in\Z}\left\langle\varphi,
S^*\big[\eta\big(V^2\big)U_0^{-n}f(\nu Q)U_0^nE^{V^2}(I)
-\eta\big(V^2\big)f(\nu n\;\!V)E^{V^2}(I),S\big]\varphi\right\rangle_{\H_0}\nonumber\\
&=\sum_{n\in\Z}\int_{\R^d}\underline\d x\,(\F f)(x)\left\langle\varphi,
S^*\Big[\eta\big(V^2\big)\e^{i\nu x\cdot Q}
\e^{in\int_0^\nu\d s\,(x\cdot V(sx))E^{V^2}(I)}
-\eta\big(V^2\big)\e^{i\nu nx\cdot VE^{V^2}(I)},S\Big]
\varphi\right\rangle_{\H_0}\nonumber\\
&=\sum_{n\in\Z}\int_{\R^d}\underline\d x\,(\F f)(x)\left\langle\varphi,
S^*\Big[\e^{i\nu x\cdot Q}\eta\big(V(\nu x)^2\big)
\e^{in\int_0^\nu\d s\,(x\cdot V(sx))E^{V^2}(I)}
-\eta\big(V^2\big)\e^{i\nu nx\cdot VE^{V^2}(I)},S\Big]
\varphi\right\rangle_{\H_0}\nonumber\\
&=\sum_{n\in\Z}\int_{\R^d}\underline\d x\,(\F f)(x)\left\langle\varphi,
S^*\Big[\big(\e^{i\nu x\cdot Q}-1\big)\eta\big(V(\nu x)^2\big)
\e^{in\int_0^\nu\d s\,(x\cdot V(sx))E^{V^2}(I)},S\Big]
\varphi\right\rangle_{\H_0}\nonumber\\
&\quad+\sum_{n\in\Z}\int_{\R^d}\underline\d x\,(\F f)(x)\left\langle\varphi,
S^*\Big[\eta\big(V(\nu x)^2\big)\e^{in\int_0^\nu\d s\,(x\cdot V(sx))E^{V^2}(I)}
-\eta\big(V^2\big)\e^{i\nu nx\cdot VE^{V^2}(I)},S\Big]\varphi\right\rangle_{\H_0}.
\label{eq_two_terms}
\end{align}
So, to prove the claim, it is sufficient to show that the two terms in
\eqref{eq_two_terms} are equal to zero in the limit $\nu\searrow0$. This is done in
points (i) and (ii) below.

(i) For the first term in \eqref{eq_two_terms}, we can adapt the proof of Theorem
\ref{thm_summation} to obtain the equalities
\begin{align*}
& \lim_{\nu\searrow0}\sum_{n\in\Z}\int_{\R^d} \underline\d x\,(\F f)(x)
\left\langle\varphi,S^*\Big[\big(\e^{i\nu x\cdot Q}-1\big)\eta\big(V(\nu x)^2\big)
\e^{in\int_0^\nu\d s\,(x\cdot V(sx))E^{V^2}(I)},S\Big]\varphi\right\rangle_{\H_0}\\
&=\lim_{\nu\searrow0}\int_\R\d\mu\int_{\R^d}\underline\d x\,(\F f)(x)
\left\langle\varphi,S^*\Big[\tfrac1\nu\big(\e^{i\nu x\cdot Q}-1\big)
\eta\big(V(\nu x)^2\big)\e^{i\frac\mu\nu\int_0^\nu\d s\,
(x\cdot V(sx))E^{V^2}(I)},S\Big]\varphi\right\rangle_{\H_0}\\
&=i\int_\R\d\mu\int_{\R^d}\underline\d x\,(\F f)(x)
\left(\left\langle(x\cdot Q)S\varphi,\e^{i\mu x\cdot V}S\varphi\right\rangle_{\H_0} 
-\left\langle(x\cdot Q)\varphi,\e^{i\mu x\cdot V}\varphi\right\rangle_{\H_0}\right),
\end{align*}
and then the change of variables $\mu':=-\mu$ and $x':=-x$ together with the parity of
$\F f$ implies that the last expression is equal to zero.

(ii) For the second term in \eqref{eq_two_terms}, it is sufficient to prove for
$\psi\in\dom_2$ satisfying $\psi=\eta\big(V^2\big)\psi=E^{V^2}(I)\psi$ that
$$
\lim_{\nu\searrow0}\sum_{n\in\Z}\int_{\R^d}\underline\d x\,(\F f)(x)
\left\langle\psi,\left(\eta\big(V(\nu x)^2\big)
\e^{in\int_0^\nu\d s\,(x\cdot V(sx))E^{V^2}(I)}-\eta\big(V^2\big)
\e^{i\nu nx\cdot V E^{V^2}(I)}\right)\psi\right\rangle_{\H_0}=0.
$$
Using Lemma \ref{lemma_poisson} and the change of variables $\mu:=\nu t$ we obtain
that this is equivalent to
\begin{equation}\label{eq_second_term}
\lim_{\nu\searrow0}\int_\R\d\mu\int_{\R^d}\underline\d x\,(\F f)(x)
\left\langle\psi,\tfrac1\nu\left(\eta\big(V(\nu x)^2\big)
\e^{i\frac\mu\nu\int_0^\nu\d s\,(x\cdot V(sx))E^{V^2}(I)}
-\eta\big(V^2\big)\e^{i\mu x\cdot VE^{V^2}(I)}\right)\psi\right\rangle_{\H_0}=0.
\end{equation}
Now, under the assumption that we can exchange in \eqref{eq_second_term} the limit
$\lim_{\nu\searrow0}$ and the integrals over $\mu$ and $x$, then it follows by
\eqref{eq_big_diff}-\eqref{eq_der_g} that
\begin{align*}
&\lim_{\nu\searrow0}\int_\R\d\mu\int_{\R^d}\underline\d x\,(\F f)(x)
\left\langle\psi,\tfrac1\nu\left(\eta\big(V(\nu x)^2\big)
\e^{i\frac\mu\nu\int_0^\nu\d s\,(x\cdot V(sx))E^{V^2}(I)}
-\eta\big(V^2\big)\e^{i\mu x\cdot VE^{V^2}(I)}\right)\psi\right\rangle_{\H_0}\\
&=i\int_\R\d\mu\int_{\R^d}\underline\d x\,(\F f)(x)
\left\langle\psi,\Bigg(\big[\eta\big(V^2\big),x\cdot Q\big]\e^{i\mu x\cdot V}
+\frac\mu2\e^{i\mu x\cdot V}\sum_{j,k=1}^dx_jx_kV_{jk}'\;\!\eta\big(V^2\big)\Bigg)
\psi\right\rangle_{\H_0},
\end{align*}
and then the change of variables $\mu':=-\mu$ and $x':=-x$ together with the parity of
$\F f$ implies that the last expression is equal to zero. Thus, it only remains to show
that we can exchange in \eqref{eq_second_term} the limit $\lim_{\nu\searrow0}$ and the
integrals over $\mu$ and $x$ by applying Lebesgue's dominated convergence theorem.

Define for $\nu\in(0,1)$ and $\mu\in\R$
$$
L(\nu,\mu):=\int_{\R^d}\underline\d x\,(\F f)(x)
\left\langle\psi,\tfrac1\nu\Big(\eta\big(V(\nu x)^2\big)
\e^{i\frac\mu\nu\int_0^\nu\d s\,(x\cdot V(sx))E^{V^2}(I)}
-\eta\big(V^2\big)\e^{i\mu x\cdot VE^{V^2}(I)}\Big)\psi\right\rangle_{\H_0}.
$$
Then Lemma \ref{lemma_taylor} and the rapid decay of $\F f$ imply that
$$
|L(\nu,\mu)|\le{\rm Const.}\;\!(1+|\mu|),
$$
with a constant independent of $\nu$. Thus, $|L(\nu,\mu)|$ is uniformly bounded 
in $\nu\in(0,1)$ by a function in $\lone\big([-1,1],\d\mu\big)$.

For the case $|\mu|>1$, set
$$
A_\mu^I(x):=\e^{i\mu x\cdot VE^{V^2}(I)}
\quad\hbox{and}\quad
B_{\nu,\mu}^I(x):=\e^{i\frac\mu\nu\int_0^{\nu}\d s\,(x\cdot V(sx))E^{V^2}(I)} .
$$ 
Then, we have
$$
\big(\partial_jA_\mu^I\big)(x)\psi
=i\mu V_jA_\mu^I(x)\psi,
$$
and Lemma \ref{lemma_trotter}(b) implies that
$$
\big(\partial_jB_{\nu,\mu}^I\big)(x)\psi
=i\mu V_j(\nu x)B_{\nu,\mu}^I(x)\psi.
$$
Therefore, with the notations $C_j:=\eta\big(V^2\big)V_j\;\!V^{-2}$ and
$G_x:=\e^{-ix\cdot Q}$ we can rewrite $L(\nu,\mu)$ as 
$$
L(\nu,\mu):=\frac1{i\mu}\sum_{j=1}^d\int_{\R^d}\underline\d x\,(\F f)(x) 
\left\langle\psi,\tfrac1\nu\Big(G_{\nu x}C_j\;\!G_{\nu x}^*
\big(\partial_jB_{\nu,\mu}^I\big)(x)
-C_j\big(\partial_jA_\mu^I\big)(x)\Big)\psi\right\rangle_{\H_0}.
$$
We shall now use repeatedly the following result: Let $h \in \SS(\R^d)$ and let
$Y:=(Y_1,\ldots,Y_d)$ a family of self-adjoint strongly mutually commuting operators
in $\H_0$. If $Y_1,\ldots,Y_d$ are of class $C^2(Q)$, then $h(Y)\in C^2(Q)$ and
$\big[[h(Y),Q_j],Q_k\big]\in\B(\H_0)$ for all $j,k\in\{1,\ldots,d\}$. Such a result
has been proved in \cite[Prop,~5.1]{RT12_0} in a greater generality. Here, the
operator $C_j$ is of the type $h(Y)$ since $V_1,\ldots,V_d$ are self-adjoint strongly
mutually commuting operators of class $C^2(Q)$. So, we can perform an integration by
parts (with vanishing  boundary contributions) with respect to the variable $x_j$ to
obtain
\begin{align*}
L(\nu,\mu)
&:=-\frac1{i\mu}\sum_{j=1}^d\int_{\R^d}\underline\d x\,\big(\partial_j(\F f)\big)(x)
\left\langle\psi,\tfrac1\nu\big(G_{\nu x}C_jG_{\nu x}^*B_{\nu,\mu}^I(x)
-C_jA_\mu^I(x)\big)\psi\right\rangle_{\H_0}\\
&\quad-\frac1\mu\sum_{j=1}^d\int_{\R^d}\underline\d x\,(\F f)(x)\left\langle\psi,
G_{\nu x}\big[C_j,Q_j\big]G_{\nu x}^*B_{\nu,\mu}^I(x)\psi\right\rangle_{\H_0}.
\end{align*}
Now, the scalar product in the first term can be written as
$$
\tfrac1{i\mu}\left\langle\psi,\tfrac1\nu\Big(G_{\nu x}EG_{\nu x}^*
\big(\partial_jB_{\nu,\mu}^I \big)(x)
-E\big(\partial_jA_\mu^I\big)(x)\Big)\psi\right\rangle_{\H_0}
\quad\hbox{with}\quad
E:=\eta\big(V^2\big)V^{-2}\in\B(\H_0).
$$
Thus, a second integration by parts leads to
\begin{align}
L(\nu,\mu)
&:=-\frac1{\mu^2}\sum_{j=1}^d\int_{\R^d}\underline\d x\,
\big(\partial_j^2(\F f)\big)(x)\left\langle\psi,\tfrac1\nu
\big(G_{\nu x}EG_{\nu x}^*B_{\nu,\mu}^I(x)-EA_\mu^I(x)\big)
\psi\right\rangle_{\H_0}\nonumber\\
&\quad-\frac i{\mu^2}\sum_{j=1}^d\int_{\R^d}\underline\d x\,
\big(\partial_j(\F f)\big)(x)\left\langle\psi,G_{\nu x}\big[E,Q_j\big]G_{\nu x}^*
B_{\nu,\mu}^I(x)\psi\right\rangle_{\H_0}\nonumber\\
&\quad-\frac1\mu\sum_{j=1}^d\int_{\R^d}\underline\d x\,(\F f)(x)
\left\langle\psi,G_{\nu x}\big[C_j,Q_j\big]G_{\nu x}^*B_{\nu,\mu}^I(x)
\psi\right\rangle_{\H_0}.\label{eq_3_terms}
\end{align}
Then, by performing a third integration by parts, we obtain that the first term in
\eqref{eq_3_terms} is equal to
\begin{align*}
&\frac i{\mu^3}\sum_{j,k=1}^d\int_{\R^d}\underline\d x\,
\big(\partial_j^2(\F f)\big)(x)\left\langle\psi,\tfrac1\nu
\Big(G_{\nu x}M_k G_{\nu x}^*\big(\partial_kB_{\nu,\mu}^I\big)(x)
-M_k\big(\partial_kA_\mu^I\big)(x)\Big)\psi\right\rangle_{\H_0} \\
&=-\frac i{\mu^3}\sum_{j,k=1}^d\int_{\R^d}\underline\d x\,
\big(\partial_k\partial_j^2(\F f)\big)(x)
\left\langle\psi,\tfrac1\nu\Big(G_{\nu x}M_kG_{\nu x}^*B_{\nu,\mu}^I(x)
-M_kA_\mu^I(x)\Big)\psi\right\rangle_{\H_0}\\
&\quad+\frac1{\mu^3}\sum_{j,k=1}^d\int_{\R^d}\underline\d x\,
\big(\partial_j^2(\F f)\big)(x)\left\langle\psi,G_{\nu x}\big[M_k,Q_k\big]G_{\nu x}^*
B_{\nu,\mu}^I(x)\psi\right\rangle_{\H_0}
\end{align*}
with $M_k:=\eta\big(V^2\big)V_kV^{-4}\in\B(\H_0)$. Furthermore, by mimicking the proof
of Lemma \ref{lemma_taylor} with $\eta\big(V^2\big)$ replaced by $M_k$, we obtain that
there exists $s>0$ such that
$$
\Big|\tfrac1\nu\Big(G_{\nu x}M_k G_{\nu x}^*B_{\nu,\mu}^I(x)
-M_kA_\mu^I(x)\Big)\Big|_{\B(\H_0)}\le{\rm Const.}\;\!(1+|\mu|)\langle x\rangle^s
$$
for all $\nu\in(0,1)$, $\mu\in\R$ and $x\in\R^d$. Thus, the first and the second terms
in \eqref{eq_3_terms} can be bounded uniformly in $\nu\in(0,1)$ by a function in
$\lone(\R\setminus[-1,1],\d\mu)$. For  the third term in \eqref{eq_3_terms}, a direct
calculation shows that it can be written as
$$
-\frac i{\mu^2}\sum_{j,k=1}^d\int_{\R^d}\underline\d x\,(\F f)(x)
\left\langle G_{\nu x}^*\psi,[C_j,Q_j]G_{\nu x}^*C_kG_{\nu x}
\big(\partial_kB_{\nu,-\mu}^I\big)(-x)G_{\nu x}^*\psi\right\rangle_{\H_0}.
$$
So, by doing once more an integration by parts with respect to the variable $x_k$, we
also obtain that this term can be bounded uniformly in $\nu\in(0,1)$ by a function in
$\lone(\R\setminus[-1,1],\d\mu)$.

These last estimates together with the previous estimate for $\mu\in[-1,1]$ shows that
$|L(\nu,\mu)|$ is bounded uniformly in $\nu\in(0,1)$ by a function in
$\lone(\R,\d\mu)$. Therefore, we can exchange the limit $\lim_{\nu\searrow0}$ and the
integration over $\mu$ in \eqref{eq_second_term}. Due to Lemma \ref{lemma_taylor}, we
can also exchange the limit $\lim_{\nu\searrow0}$ and the integration over $x$ in 
\eqref{eq_second_term}.
\end{proof}

The existence of the non-symmetrised time delay is now a direct consequence of
Theorems \ref{thm_sym} and Proposition \ref{prop_equal_sojourn}:

\begin{Theorem}[Non-symmetrised time delay]\label{thm_non_sym}
Let Assumptions \ref{ass_regularity}, \ref{ass_commute}, \ref{ass_wave} and
\ref{ass_polynomial} be satisfied. Let $f\in\SS(\R^d)$ be non-negative, even and equal
to $1$ on a neighbourhood of $0\in\R^d$. For each $n\in\Z$, let $L_n:\H\to\H_0$
satisfy $L_nU^nW_\pm E^D(I)\in\B(\H,\H_0)$ for all compact sets $I\subset\R$. Let
$\varphi\in\H_0^-\cap\dom_2$ satisfy $S\varphi\in\dom_2$ and \eqref{eq_l_one}.
Finally, suppose that $V^2$ and $S$ strongly commute and that
$$
\big[\eta\big(V^2\big)F_{\nu,f}(V),S\big]=0
\quad\hbox{for all $\eta\in C^\infty_{\rm c}\big((0,\infty)\big)$ and $\nu>0$.}
$$
Then, one has
$$
\lim_{r\to\infty}\tau_r^{\rm nsym}(\varphi)
=\lim_{r\to\infty}\tau_r^{\rm sym}(\varphi)
=\big\langle\varphi,S^*[T_f,S]\varphi\big\rangle_{\H_0},
$$
with $T_f$ defined in \eqref{eq_T_f}.
\end{Theorem}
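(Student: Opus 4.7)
The plan is to reduce Theorem \ref{thm_non_sym} to two results already established earlier in the paper: Theorem \ref{thm_sym} for the symmetrised time delay, and Proposition \ref{prop_equal_sojourn} which asserts the asymptotic equality of the free sojourn times for $\varphi$ and $S\varphi$.

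First I would observe the elementary algebraic identity relating the two time delays. Directly from the definitions
\[
\tau_r^{\rm sym}(\varphi)=T_r(\varphi)-\tfrac12\bigl(T_r^0(\varphi)+T_r^0(S\varphi)\bigr)
\quad\text{and}\quad
\tau_r^{\rm nsym}(\varphi)=T_r(\varphi)-T_r^0(\varphi),
\]
one reads off the exact relation
\[
\tau_r^{\rm nsym}(\varphi)-\tau_r^{\rm sym}(\varphi)=\tfrac12\bigl(T_r^0(S\varphi)-T_r^0(\varphi)\bigr).
\]
This reduces the theorem to proving that the right-hand side vanishes in the limit $r\to\infty$, since Theorem \ref{thm_sym} already provides the limit of $\tau_r^{\rm sym}(\varphi)$ and identifies it with $\langle\varphi,S^*[T_f,S]\varphi\rangle_{\H_0}$.

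Next I would verify that the hypotheses of Proposition \ref{prop_equal_sojourn} are met by the data of the present theorem. The assumptions \ref{ass_regularity}, \ref{ass_commute}, \ref{ass_wave} and \ref{ass_polynomial} are assumed directly; $f\in\SS(\R^d)$ is non-negative, even, and equal to $1$ near $0$; the vector $\varphi\in\H_0^-\cap\dom_2$ satisfies $S\varphi\in\dom_2$; and the commutation hypothesis $[\eta(V^2)F_{\nu,f}(V),S]=0$ together with the strong commutation of $V^2$ and $S$ is assumed. Setting $\nu=1/r$ in the conclusion of Proposition \ref{prop_equal_sojourn} gives
\[
\lim_{r\to\infty}\bigl(T_r^0(S\varphi)-T_r^0(\varphi)\bigr)=0.
\]

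Combining the two pieces yields
\[
\lim_{r\to\infty}\tau_r^{\rm nsym}(\varphi)
=\lim_{r\to\infty}\tau_r^{\rm sym}(\varphi)
+\tfrac12\lim_{r\to\infty}\bigl(T_r^0(S\varphi)-T_r^0(\varphi)\bigr)
=\bigl\langle\varphi,S^*[T_f,S]\varphi\bigr\rangle_{\H_0},
\]
which is the desired identity. The only thing to double-check is that the limits can be separated as above, i.e.\ that both limits on the right exist individually; this is guaranteed precisely by Theorem \ref{thm_sym} and Proposition \ref{prop_equal_sojourn}. There is no genuine obstacle here: all the analytical heavy lifting (the Poisson summation argument of Lemma \ref{lemma_poisson}, the uniform estimates of Lemma \ref{lemma_taylor}, and the dominated-convergence manipulations in the proof of Proposition \ref{prop_equal_sojourn}) has already been done upstream, so the proof of Theorem \ref{thm_non_sym} is a short bookkeeping step.
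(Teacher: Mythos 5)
Your argument is exactly the one the paper uses: the theorem is deduced from Theorem \ref{thm_sym} together with Proposition \ref{prop_equal_sojourn} via the identity $\tau_r^{\rm nsym}(\varphi)-\tau_r^{\rm sym}(\varphi)=\tfrac12\big(T_r^0(S\varphi)-T_r^0(\varphi)\big)$. Nothing more to add.
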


%--------------------------------------------------------------------------------------
%\bibliography{../bibliographie/bibliographie}
%--------------------------------------------------------------------------------------

\def\cprime{$'$} \def\polhk#1{\setbox0=\hbox{#1}{\ooalign{\hidewidth
  \lower1.5ex\hbox{`}\hidewidth\crcr\unhbox0}}}
  \def\polhk#1{\setbox0=\hbox{#1}{\ooalign{\hidewidth
  \lower1.5ex\hbox{`}\hidewidth\crcr\unhbox0}}}
  \def\polhk#1{\setbox0=\hbox{#1}{\ooalign{\hidewidth
  \lower1.5ex\hbox{`}\hidewidth\crcr\unhbox0}}} \def\cprime{$'$}
  \def\cprime{$'$} \def\polhk#1{\setbox0=\hbox{#1}{\ooalign{\hidewidth
  \lower1.5ex\hbox{`}\hidewidth\crcr\unhbox0}}}
  \def\polhk#1{\setbox0=\hbox{#1}{\ooalign{\hidewidth
  \lower1.5ex\hbox{`}\hidewidth\crcr\unhbox0}}} \def\cprime{$'$}
  \def\cprime{$'$} \def\cprime{$'$}

%--------------------------------------------------------------------------------------

\end{document}